\documentclass[11pt,noparskip]{article}
\usepackage{style}
\usepackage{authblk}

\usepackage{physics}
\numberwithin{equation}{section}

\newcommand{\U}{\mathrm{U}}
\newcommand{\SU}{\mathrm{SU}}
\newcommand{\SL}{\mathrm{SL}}
\newcommand{\EL}{\mathrm{EL}}
\newcommand{\Mat}{\mathrm{Mat}}
\newcommand{\CNOT}{\mathrm{CNOT}}
\newcommand{\Sym}{\mathrm{Sym}}
\newcommand{\CAT}{\mathrm{CAT}}

\title{Depth-$1$ expanders on the unitary group and applications 
}
\date{}
\author[1]{Anurag Anshu}
\author[2]{Shankar Balasubramanian}
\author[3]{Jonas Haferkamp}
\author[4]{Aram W. Harrow}
\author[5]{Xinyu Tan}
\makeatletter
\@namedef{@sep4}{,\protect\\[0.25em]}
\makeatother
\affil[1]{School of Engineering and Applied Sciences, Harvard University, Cambridge, MA 02138, USA}
\affil[2]{Walter Burke Institute for Theoretical Physics and Institute for Quantum Information and Matter,
  California Institute of Technology, Pasadena, CA 91125, USA}
  \affil[3]{Faculty of Computer Science, Ruhr-University Bochum, Bochum, 44801, Germany}
  \affil[4]{Center for Theoretical Physics -- a Leinweber Institute, Massachusetts Institute of Technology, Cambridge, MA 02139, USA}
  \affil[5]{Department of Mathematics, Massachusetts Institute of Technology, Cambridge, MA 02139, USA}

\begin{document}

\maketitle
\vspace{-2em}
\begin{abstract}
We construct a constant-degree and constant-gap quantum expander on $n$
qubits where each unitary can be implemented by a depth-$1$ and 1D circuit of
Pauli or CNOT gates.  We provide two applications of this expander.
First, we use it to construct a family of frustration-free 1D
Hamiltonians whose ground states obey the entanglement-gap relation $S
= \Theta(\Delta^{-1/2})$; this is believed to be optimal, but
achieving it had been open.  Second, we use it to provide a streaming
protocol that tests for closeness to a class of 1D volume-law
entangled states.

Moreover, we extend our quantum expander to a constant-degree and constant-gap expander on the unitary group where each unitary is a single $T$ gate, a single $T^{\dagger}$ gate, or a depth-$1$ Clifford circuit. 
This implies that a random sequence of unitaries from the expander yields a gapped walk on a dense subgroup of the unitary group.  This improves upon previous work by Bourgain and Gamburd which did not control the dependence of the gap on the dimension.
\end{abstract}

\setcounter{tocdepth}{2}
\tableofcontents

\section{Introduction}\label{sec:intro}

Quantum expanders are collections of unitaries defined via a spectral gap property: 
\begin{definition}[Quantum expanders]\label{def:q_expander_channel}
  A finite set of unitaries $\{U_1,\ldots,U_m\}\subset \U(d)$ is called a \emph{quantum expander} with {\em gap} $\Delta>0$ if
  \begin{equation}\label{eq:gap-cond}
  \norm{\frac 1m \sum_i U_i X U_i^\dag }_2 \leq (1-\Delta)\norm{X}_2 \qfor X \qq{s.t.} \Tr X = 0.
  \end{equation}
  The {\em degree} is the number of unitaries $m$.
  An \emph{expander family} is a sequence of quantum expanders all with the same $m$ and $\Delta$ but with an increasing series of dimensions $d_1,d_2,\ldots$; i.e.\ with constant gap and degree.
\end{definition}

They were first introduced in~\cite{ben2007quantum, ben2010quantum, hastings2007entropy} and found initial applications in constructing short quantum one-time pads~\cite{ambainis2004small}, showing that the quantum entropy difference problem is QSZK-complete~\cite{ben2007quantum}, and giving examples of quantum states with decay of correlations and volume-law entanglement~\cite{hastings2007entropy}.
Afterwards, \cite{aharonov2014local} used quantum expanders to construct optimal communication protocols for testing highly entangled states and counterexamples to the ``graph area law'' conjecture.

While quantum expanders can be constructed if $U_i$ is a random unitary~\cite{PhysRevA.76.032315}, there is also interest in efficient constructions where $U_i$ is a circuit on $n$ qubits of depth $\poly(n)$.  Such constructions exist using a variety of combinatorial tricks~\cite{ben2007quantum, gross2007quantum, harrow2007quantum}, but it has been an open question to provide general lower bounds on the depth.  In this paper, we show that there surprisingly exists a quantum expander family whose unitaries can be implemented by a \emph{depth-$1$} circuit, even on a one-dimensional geometry:
\begin{theorem}[Informal; see \Cref{thm:expander_gap}]
There exists a family of quantum expanders with a constant spectral gap, and the defining unitaries $U_1, U_2, \ldots, U_{16}$ are depth-$1$ circuits of Pauli or $\CNOT$ gates.
\end{theorem}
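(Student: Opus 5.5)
The idea is to realize the expander as a Cayley-graph-type walk on Pauli operators and reduce the gap to a classical expansion property. Conjugation by Pauli and CNOT gates (Clifford unitaries) maps Pauli operators to Pauli operators up to sign. So we expand a traceless $X$ in the Pauli basis, $X=\sum_{P\neq I}x_P P$, and track how the channel $\Phi(X)=\frac1m\sum_i U_iXU_i^\dagger$ acts on the coefficients.

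I would take the generating set to consist of two parts.
\begin{itemize}
\item \textbf{Pauli layers.} Use, e.g., the layers $X^{\otimes n}$, $Z^{\otimes n}$, and products of these on even and odd sublattices. Averaging over a Pauli group $G$ multiplies each $x_P$ by a phase average. This average is $0$ for every $P$ that anticommutes with some element of $G$, and $1$ otherwise.
\item \textbf{CNOT layers.} Use depth-$1$ brickwork layers of CNOTs, on even and odd bonds and in both orientations. These act as permutations of $\mathbb{F}_2^{2n}\setminus\{0\}$ via symplectic matrices $S_j$, together with signs.
\end{itemize}
The $16$ unitaries would then be products or combinations of a few Pauli layers with a few CNOT layers. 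This is most naturally done by a zig-zag or replacement-product style argument, so that $\Phi$ is bounded by a composition of a ``Pauli-averaging'' projector $\Pi$ and a ``CNOT-permutation'' step. The standard bound is
\[
\|\Phi\|_{\perp}\le 1-\Omega\bigl(\Delta_{\mathrm{perm}}\cdot \delta\bigr),
\]
where $\Delta_{\mathrm{perm}}$ is the gap of the permutation walk and $\delta$ measures how much $\Pi$ kills.

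The key steps, in order, are as follows.
\begin{enumerate}
\item \textbf{Reduction to Pauli strings.} Reduce the spectral norm on traceless operators to analyzing a (signed) walk on nonidentity Pauli strings $v\in\mathbb{F}_2^{2n}\setminus\{0\}$.
\item \textbf{Local Pauli averaging.} Show that the Pauli layers give a projector that annihilates all strings except those in a small ``bad'' set $B$. Because the layers are products, a string survives only if it commutes with each layer. With layers $X^{\otimes n}$, $Z^{\otimes n}$ and their sublattice versions, this constrains parities of local weights.
\item \textbf{Escape from the bad set.} Show that the linear maps $S_j$ move mass out of $B$. Precisely, for any distribution $f$ on nonzero strings, a constant fraction of it after one random $S_j$ lies outside $B$.
\end{enumerate}
Combining these via a ``$\|\Pi\Phi_{\mathrm{CNOT}}\Pi\|\le 1-c$'' lemma gives a constant gap.

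The main obstacle is that with only constant-depth and translation-invariant layers, the subspace of Pauli strings invariant under everything can be large, or even exponentially large. For instance, strings invariant under all layers, or structured strings that propagate linearly across the chain, could keep the walk stuck. A simple Pauli averaging cannot kill them, since Pauli twirling with $O(1)$ elements only enforces $O(1)$ parity constraints.

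The real work is therefore to prove a classical statement. The walk on $\mathbb{F}_2^{2n}\setminus\{0\}$ generated by a few depth-$1$ CNOT layers, interleaved with the parity-checking Pauli layers, should have constant spectral gap independent of $n$. I would try to prove this by viewing the CNOT layers as generating a family of $\mathbb{F}_2$-linear maps resembling an expanding affine/linear action, analogous to Margulis-type expanders on $\mathbb{Z}_2^{k}$. I would then use a Fourier/character argument: the relevant gap reduces to showing that no nonzero vector is nearly fixed by all generators in $\ell^2$.

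If the translation-invariant choice fails, I would fall back on a recursive or zig-zag construction. This would combine an expander on $k$ qubits with one on $n-k$ qubits across a cut. A constant number of boundary CNOTs mixes the two halves, and the Pauli layers handle the local identity components, keeping degree and depth fixed while maintaining the gap.
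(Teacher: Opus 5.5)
There is a genuine gap, and it sits exactly where you place ``the real work.'' The rest of your plan (signed-permutation picture, Pauli averaging, a $\|\Pi\Phi_{\mathrm{CNOT}}\Pi\|\le 1-c$ combination lemma) is essentially routine once you have one thing: a constant number of depth-$1$ CNOT layers whose walk has a spectral gap independent of $n$, away from the few invariant directions of the CNOT group. You do not prove this, and the suggested Fourier/Margulis route cannot supply it. Asking that no vector be nearly fixed by all generators is just the definition of a positive Kazhdan constant, so it restates the goal. Margulis-type arguments work in fixed rank with growing modulus, whereas here the field is fixed and the rank grows. And Fourier transform on $\mathbb{F}_2^{2n}$ merely replaces each $S_j$ by $S_j^{-T}$, a generator of the same kind.

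A uniform gap in unbounded rank is a deep theorem, and the paper imports it. It identifies the CNOT group $G_{3s}$ with $\SL(3s;\mathbb{F}_2)=\EL(3;\Mat(s;\mathbb{F}_2))$ and uses Kassabov's $14$ generators $E_{a,b}(I_s)$, $E_{a,b}(A)$, $E_{a,b}(B)$, where $A$ is the cyclic shift and $B$ a single matrix unit. Their Kazhdan constant exceeds $1/400$ for every $s$. The observation that makes this depth-$1$ is the following: under the permutation representation, after reindexing the qubits as $1,4,\dots,2,5,\dots,3,6,\dots$, each generator is a single CNOT layer. It is period-$3$ translation-invariant on a ring for $I_s$ and $A$, and a single CNOT for $B$. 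The rest of the argument has three pieces. The Kazhdan bound controls the top of the spectrum in every representation. The odd relation $xyxyz=1$ among three generators controls the bottom, which your proposal does not address. Finally, the Harrow--Hastings overlap lemma with just $X_1$ and $Z_1$ removes the extra invariant directions: the commutant of the CNOT group is spanned by $I$, $|0^n\rangle\langle 0^n|$, $|+^n\rangle\langle +^n|$, $|0^n\rangle\langle +^n|$, $|+^n\rangle\langle 0^n|$, and the two Pauli gates kill the four directions beyond $I$.

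Your concrete candidate also fails as stated. Suppose every generator is translation-invariant on a ring: period-$2$ brickwork CNOT layers together with $X^{\otimes n}$, $Z^{\otimes n}$ and their sublattice versions, or products of these. Then every $U_i$ commutes with the lattice-translation unitary $\mathcal{T}$. So $\mathcal{T}-2^{-n}\Tr(\mathcal{T})\,I$ is a nonzero traceless fixed point of $\Phi$, and the gap is exactly $0$. Symmetry-breaking local generators are therefore necessary, not a fallback; in the paper they are $U_1,\dots,U_6$ on the first three qubits. Note also that a product of a Pauli layer and a CNOT layer is generally not a depth-$1$ circuit of Pauli or CNOT gates. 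Your recursive fallback is too unspecified to evaluate, and a recursion over $\log n$ levels that loses a constant factor per level would give only an inverse-polynomial gap.
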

This result follows the work of Kassabov~\cite{Kas07a}, who provides a generating set for the group of CNOT gates obeying an expander property.  
We observe that each generator can be written as a depth-$1$ circuit of CNOT gates on a 1D local geometry.  
Upon adding a few more constant-range generators, we show that the CNOT expander can be promoted to a quantum expander.

If we allow for general Clifford gates (although still depth-$1$ in 1D) and a single T gate we can prove a much stronger result by leveraging the recently obtained spectral gap for random Pauli rotations~\cite{baer2026random,HLT25} and Kassabov's bound~\cite{Kas07a}.
We first define an \textit{expander on the unitary group}:
\begin{definition}[Expander on the unitary group]
For any representation $\rho$ of $\SU(d)$, define $\Phi_\rho=\frac{1}{m}\sum_{i=1}^{m}\rho(V_i)$.
We call a collection of unitaries $V_1,\ldots,V_m\in \SU(d)$ an expander on $\rm \SU(d)$ with spectral gap $\Delta>0$ if 
\begin{equation}
  \norm*{\Phi_{\rho }|X\rangle}\leq (1-\Delta)\norm*{|X\rangle}
  \label{eq:rho-gap}
\end{equation}
for all $|X\rangle$ orthogonal to the invariant subspace of $\rho$ and all finite dimensional unitary representations $\rho$.
We call $m$ the degree of the expander.
As in \cref{def:q_expander_channel}, an expander family is an infinite sequence of expanders on dimensions $d_1 < d_2 < \cdots$ with the same degree that obey \cref{eq:rho-gap} for the same $\Delta>0$.
\end{definition}
We then prove the following result in~\Cref{appendix:unitary}:
\begin{theorem}[see also \Cref{thm:unitary-expander}]\label{thm:short-TPE}
There exists an explicit expander $\mathsf{V}=\{V_1,\ldots,V_{25}\}$ on $\SU(2^n)$ of degree $25$ with a constant spectral gap.
Moreover, each unitary is a single $T$ or $T^\dagger$ gate or a depth-$1$ Clifford circuit.
\end{theorem}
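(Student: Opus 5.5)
The proof has two ingredients: a Clifford expander and the gap for random Pauli rotations. We stitch them together with a comparison (Markov-chain decomposition) argument. The target is the operator $\Phi_\rho=\frac1{25}\sum_i \rho(V_i)$ for an arbitrary finite-dimensional unitary representation $\rho$ of $\SU(2^n)$, with $|X\rangle$ orthogonal to the $\rho$-invariant vectors.

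\textbf{Step 1: a depth-$1$ Clifford expander.} Start from Kassabov's bounded-degree generating set~\cite{Kas07a} for the CNOT (linear) group, realized as depth-$1$ 1D CNOT circuits as in \Cref{thm:expander_gap}. Add a constant number of depth-$1$ layers of single-qubit Cliffords (Hadamard and phase layers, and Pauli layers) so that the set generates the full Clifford group $\mathcal C_n$. We need a constant gap for the walk on $\mathcal C_n$ in every representation, not just the adjoint. For this, view $\mathcal C_n$ modulo Paulis as $\mathrm{Sp}(2n,\mathbb F_2)$ and use Kassabov's uniform Kazhdan-constant bounds for $\SL_n/\mathrm{Sp}_{2n}$ over $\mathbb F_2$. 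The Pauli normal subgroup is then handled by the usual relative-property-(T) style argument: the symplectic group acts transitively on nonzero characters of $\mathbb F_2^{2n}$. The output is a set $\mathsf C$ of about $23$ depth-$1$ Clifford circuits with $\|\Phi_{\rho}^{\mathsf C}|Y\rangle\|\le(1-\delta)\||Y\rangle\|$ for $Y\perp \rho(\mathcal C_n)$-invariants, where $\delta>0$ is independent of $n$. Equivalently, $\Phi^{\mathsf C}_\rho$ dominates a constant multiple of the projector onto the Clifford-invariant subspace complement.

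\textbf{Step 2: the random Pauli rotation walk.} Conjugating $T$ by a uniformly random Clifford gives $e^{i\pi P/8}$ for a uniformly random nonidentity Pauli $P$ (up to phase). By \cite{baer2026random,HLT25}, the walk $\frac12\mathbb E_P[\rho(e^{i\pi P/8})+\rho(e^{-i\pi P/8})]$ has a constant gap on the complement of the $\SU(2^n)$-invariants. The cited results are stated for moment operators $U^{\otimes t}\otimes \bar U^{\otimes t}$. Every irrep of $\SU(2^n)$ appears in some such moment operator, so the gap transfers to arbitrary $\rho$, provided the cited gap is uniform in $t$. I expect it is, since it comes from the $\mathbb Z[1/2,i]$ structure (or from a Bourgain--Gamburd style argument with explicit dimension control). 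I flag this uniformity as something to check.

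\textbf{Step 3: comparison.} Let $\Pi_{\mathcal C}$ denote averaging over $\rho(\mathcal C_n)$. Step 2 says $\|\Pi_{\mathcal C}\,\tfrac12(\rho(T)+\rho(T^\dagger))\,\Pi_{\mathcal C}\, X\|\le(1-c)\|X\|$ on the relevant complement. Step 1 says the walk $\mathsf C$ approximates $\Pi_{\mathcal C}$ spectrally. The standard lemma for "zig-zag"/decomposition of walks then gives a constant gap for the combined set $\mathsf V=\mathsf C\cup\{T,T^\dagger\}$, of size $25$. The form used is: if $A$ is a self-adjoint contraction with $A\preceq (1-\delta)I+\delta\Pi$ and $B$ has gap $c$ relative to $\Pi$, then the mixture $\frac{m-2}{m}A+\frac2m B$ has gap $\Omega(\delta c/m)$. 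This needs a symmetric set, which holds because $\mathsf C$ can be closed under inverses within $25$ elements and $T,T^\dagger$ are mutually inverse. Vectors that are Clifford-invariant but not $\SU$-invariant are pushed out by $B$; vectors orthogonal to Clifford-invariants are contracted by $A$; a two-case (angle) argument gives the constant bound.

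\textbf{Main obstacle.} The hardest part is Step 1 for \emph{all} representations of the Clifford group with a dimension-independent constant. Kassabov's result gives Kazhdan constants for the symplectic/special linear groups over $\mathbb F_2$, but these must be lifted to the Pauli extension using only depth-$1$ generators. I would first try to deduce it from Kassabov's uniform Kazhdan constant plus an explicit bounded-generation argument for the Heisenberg normal subgroup. The uniformity in $t$ in Step 2 is a secondary concern.
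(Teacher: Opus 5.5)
Your Step 2 is where the argument breaks, and it is the crux of the theorem, not a secondary concern. Step 3 needs $\Pi_{\mathcal C}\tfrac12(\rho(T)+\rho(T^\dagger))\Pi_{\mathcal C}$ to be bounded away from $1$ on the complement of the $\SU(2^n)$-invariants, uniformly in both $n$ and $\rho$; equivalently, the \emph{fixed-angle} walk $\tfrac12\mathbb{E}_P[\rho(e^{\mathrm{i}\pi P/8})+\rho(e^{-\mathrm{i}\pi P/8})]$ must be uniformly gapped. Neither cited result gives this. The gap in \cite{HLT25} is $\Omega(1/t)$ and decays with the moment. The representation-uniform bound $g(\nu_{\mathrm{PR}})\le 15/16$ from \cite{baer2026random} that the paper uses is for rotations $Ce^{\mathrm{i}\phi Z_1}C^\dagger$ with $\phi$ uniform on $[-\pi,\pi)$. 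It therefore averages over the whole one-parameter subgroup $K_Z$, not over a single $\pi/8$ rotation. A dimension-independent, all-irrep gap for the discrete Clifford$+T$ rotation is essentially a Bourgain--Gamburd statement with dimension control, which is the content of the theorem itself. The arithmetic or Bourgain--Gamburd arguments you allude to give gaps only for a fixed group such as $\SU(2)$, with no control in the dimension.

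The missing idea is to push the discreteness into a fixed-dimensional group.
\begin{itemize}
\item From $\mu(\mathrm{Cl}(n))*\mu(K_Z)*\mu(\mathrm{Cl}(n))=\nu_{\mathrm{PR}}*\mu(\mathrm{Cl}(n))$ one gets $\|P_{\mathrm C}P_{K_Z}P_{\mathrm C}\|\le 15/16$ on representations without invariant vectors.
\item Since $K_Z\subseteq K_2=\{U\otimes I: U\in\SU(2)\}$ on qubit $1$, the projectors onto invariants satisfy $P_{K_2}\preceq P_{K_Z}$, so the same bound holds with $K_2$.
\item $P_{K_2}$ is approximated by the finite lazy walk on $\{I,\mathrm{i}H_1,T_1\}$. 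Its gap (from \cite{sarnak2015letter}) holds for every representation of $\SU(2)$, hence for the restriction of any $\rho$ of $\SU(2^n)$, with a constant independent of $n$.
\item Powering both finite walks, using $\|Q^\ell-P\|\le q^\ell$, and telescoping gives a constant gap for a bounded-length word distribution.
\item That gap is turned into a Kazhdan constant and transferred to $\mathsf{W}_n\cup\{T_1,T_1^\dagger\}$ by \Cref{lemma:short-product}. Because $H_1\in\mathsf{W}_n$, only $T_1^{\pm1}$ must be added, and the identity in the set controls the bottom of the spectrum via \Cref{lem:kazhdan_gaps_relation}.
\end{itemize}
Your HH08-style Step 3 could replace the final Kazhdan step, but only with $B$ taken to be a power of the single-qubit walk, so you would still need a comparison back to the one-step walk on $\mathsf{V}$.

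The Pauli-extension difficulty you flag in Step 1 does not arise in the paper. The Aaronson--Gottesman normal form \cite{aaronson2004improved} writes every Clifford, Paulis included, as a bounded product of elements of $G_n$ and the small cyclic groups generated by $H_1$, $H^{\otimes n/2}$, $S_1$ and $S^{\otimes n/2}$. \Cref{lemma:short-product,lemma:kazhdan-by-subgroups} together with Kassabov's bound then suffice, with no relative property (T) argument.
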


Taking the representation $\rho$ to be $\rho(U) = U \ot \overline{U}$,
\cref{eq:rho-gap} reduces to the quantum expander condition in
\cref{def:q_expander_channel}.  It has long been known~\cite{Arnold62}
that for any universal gate set $\{V_i\}$, \cref{eq:rho-gap} holds for
any fixed $\rho$ with $\Delta>0$ possibly depending on $\rho$.
However, establishing \cref{eq:rho-gap} uniformly in the choice of
$\rho$ is not so simple.  The proof that most Haar-random unitaries
give quantum expanders~\cite{PhysRevA.76.032315} was extended in
\cite{HH08} to show that they also give tensor-product expanders.
This means that \cref{eq:rho-gap} holds for $\Delta$ depending only on
degree $m$ for the representation $\rho_t = U^{\ot t} \ot
\overline{U}^{\ot t}$, uniformly for $t \lesssim 2^{n/6}$.  On the
other hand, in \cite{bourgain2012spectral,bourgain2008spectral},
\cref{eq:rho-gap} was proven to hold uniformly in $\rho$ (or equivalently for $t\rightarrow\infty$) for all
universal gate sets in $\SU(d)$ with algebraic entries.  However,
their work does not show the existence of particular gate sets for
which we can control the scaling of $\Delta$ with $d$.  Our
\cref{thm:short-TPE} is the first result that establishes a gap that
is uniform in both the irrep $\rho$ and dimension $d = 2^n$.  At the
expense of a larger degree ($\leq 5400$) we extend the family of
generators to $\SU(d)$ for all $d\geq 2$
in~\cref{cor:unitary-expander}.  The existence of such a uniformly
gapped family of generators was conjectured by Lubotzky as
communicated by Bourgain in~\cite{bourgain2017random}.  The closest
analogous result is Kassabov's expander for the Cayley graph of the
symmetric group~\cite{kassabov2007symmetric} and generalizations to
finite simple groups~\cite{kassabov2006finite}, which provide similar
uniform gap bounds.

An immediate corollary of \cref{thm:short-TPE} is obtained by sampling sequences from $\{V_1,\ldots,V_{25}\}$.  This yields an $\varepsilon$-approximate $k$-design in depth $O(nk+\log(1/\varepsilon))$ (see e.g.~\cite{brandao2016local} or~\cite{schuster2025strong} for strong approximate designs). 
Like the constructions in~\cite{o2023explicit}, these approximate designs match the cardinality lower bound $2^{\Omega(nk)}$.

In the rest of the paper, we provide some observations and applications of the depth-$1$ quantum expander.
\paragraph{Gap vs convergence rate of Matrix Product Operators.}
The matrix $M = \frac{1}{16}\sum_{i=1}^{16} U_i \ot \overline{U}_i
$ corresponds to the mixed unitary channel by applying
a random $U_i$.  Because of the 1D structure of the unitaries, $M$ is
also a Matrix Product Operator (MPO)~\cite{CPSV17} with $O(1)$ bond
dimension.  Matrix Product Operators arise frequently in tensor
network calculations.  For example, contracting an infinite
translationally invariant 2D tensor network is closely related to the
task of finding the top eigenvalue of the MPO
corresponding to one column or row of the original tensor
network~\cite{Lubasch_2014,Fishman_2018}.  A key algorithmic question
is how quickly power iteration will converge to the corresponding
eigenvector.  If the second singular value is $1-\Delta$ then this will
require $O(n/\Delta)$ steps~\cite{chi2-divergence} and if an entropy
decay rate constant $\alpha$ (related to the MLSI constant) is known
then this implies~\cite{KastoryanoT13} a bound of $O(\log(n)/\alpha)$.
It is natural to suspect that the $O(n)$ prefactor in the
$O(n/\Delta)$ bound is often unnecessary. For example, two-point
correlations decay with length scale $O(1/\Delta)$ without the need
for the prefactor.  See \cite{Hastings14} for discussion of this
prefactor in related contexts.

However, our matrix $M$ is an
example where the gap $\Delta$ is constant (due to the expander
condition), and yet it requires $O(n)$ steps to converge to its top
eigenvalue.  This can be seen easily in the channel picture.  The
stationary state, corresponding to an eigenvector with eigenvalue 1,
is the maximally mixed state on $n$ qubits, which has rank $2^n$.  But each application of
the channel can increase the state's rank by at most 16.  If we start with a
pure state we will still be far from the stationary distribution after
$<n/4$ iterations.  This slow convergence also means that the entropy decay constant must be significantly different from the gap; in this case we must have $\alpha \leq \log(n)/n$.

\vspace{0.5cm}

\noindent In \cref{sec:vollaw} and \cref{sec:streaming}, we discuss two further applications to the 1D area law and streaming entanglement testing.

\paragraph{1D area law.}
The area law for quantum many-body systems with a spectral gap states that the bipartite entanglement entropy of the ground state $\rho_{\Omega} = \ketbra{\Omega}{\Omega}$, denoted $S_A(\rho_{\Omega})$, satisfies $S_A(\rho_{\Omega}) \propto |\partial A|$.  This was first proved in 1D by Hastings \cite{hastings2007entropy}, with the generalization to higher dimensions an open question.  Hastings' proof provided the bound $S \lesssim \exp(1/\Delta)$ where $\Delta$ is the spectral gap (assuming a constant local dimension on each site).  This bound was improved in ~\cite{arad2013area} to $S \lesssim 1/\Delta$, which was conjectured to be optimal since the correlation length scales as $\lesssim 1/\Delta$ \cite{Hastings04, hastingskoma} : 
\begin{conjecture}\label{conj:area}
There exists a family of frustrated 1D Hamiltonians $\{H_n\}_{n}$ with gap $\Theta(n^{-1})$ and a unique ground state with entanglement entropy $\Theta(n)$.
\end{conjecture}
For frustration-free Hamiltonians, the bound on entanglement entropy is believed to be $S \lesssim 1/\Delta^{1/2}$.  While there is no proof of this, there is strong evidence coming from the relationship between correlation length and gap $\xi \lesssim 1/\Delta^{1/2}$ in frustration-free systems proved in \cite{gosset2016correlation} and from local gap thresholds proved in \cite{gosset2016local, anshu2020improved, lemm2024critical}.  

As a main application of our expander construction, we show that the conjectured entanglement-gap scaling for frustration-free systems can be achieved: 
\begin{theorem}[Informal; see \Cref{thm:hamfamily}]\label{thm:ent-gs}
There exists a family of frustration-free 1D Hamiltonians $\{H_n\}_n$ with gap $\Theta(n^{-2})$ and a unique ground state with entanglement entropy $\Theta(n)$.
\end{theorem}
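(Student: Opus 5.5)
We plan to build the Hamiltonian as a history-state (Feynman--Kitaev style) or "rainbow/Motzkin"-like construction in which the depth-$1$ expander unitaries generate volume-law entanglement across the middle cut. The basic idea is to take the 1D chain of $n$ qubits and, for a constant number of layers, apply unitaries drawn from the depth-$1$ expander of \Cref{thm:expander_gap}. These unitaries will be encoded into local projectors. Since each expander unitary is a depth-$1$, constant-range circuit of Pauli/CNOT gates, every "step" can be checked by a constant-range projector. The plan is to replace the usual random-walk clock, whose gap of $\Theta(n^{-2})$ comes from a path graph of length $\Theta(n)$, by a local encoding in which, for each qudit, an ancilla register records which of the $16$ unitaries (or a superposition of them) was applied.

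\textbf{Step 1.} Design local frustration-free terms whose ground space is spanned by states of the form
\[
\sum_{i} \ket{i}_{\mathrm{label}} \otimes (U_i \otimes I)\ket{\Phi}.
\]
Here $\ket{\Phi}$ is a product of maximally entangled pairs between two halves (left copy and right copy), possibly iterated over a small number of rounds. The expander property, with constant gap, will be used to show uniqueness of the ground state. Any state in the joint kernel must be invariant, up to the label structure, under the averaged channel $\frac{1}{16}\sum_i U_i(\cdot)U_i^\dagger$. Hence its reduced operator is proportional to the identity on the traceless complement. This forces the reduced state across the cut to be maximally mixed, giving entanglement $\Theta(n)$.

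\textbf{Step 2.} Since the unitaries act on all $n$ sites simultaneously but are depth-$1$, we will sweep a domain wall ("turn-on" marker) across the chain so that the $i$-th gate layer is applied site by site. This converts a global unitary into a sequence of $O(n)$ local moves, in the manner of a 1D clock. The ground state is then a uniform superposition over domain-wall positions.

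\textbf{Step 3.} For the gap, we will use a detectability/martingale argument or Kitaev's geometric lemma. The clock part contributes a path-graph Laplacian with gap $\Theta(n^{-2})$. The "consistency" part, which enforces expander invariance, contributes a constant gap after conjugating by the history isometry, because the expander gap is constant. Combining the two via the projection lemma yields $\Delta = \Omega(n^{-2})$. A trial state with a slowly varying clock amplitude gives the matching $O(n^{-2})$ upper bound.

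The main obstacle will be making the expander-averaging step frustration-free and local at once. The expander condition is a statement about the average of $U_i X U_i^\dagger$, not about individual projectors, so we must find a local realization in which the kernel is exactly the set of states invariant under the random-unitary channel. To do this we will first try a Choi/purification picture: a doubled chain carries $X$ as a vector $\ket{X}$, and local terms implement the $16$ branches with a label register summed uniformly. The transition operator is then $M=\frac1{16}\sum U_i\otimes\overline U_i$, and its gap $1-\|M|_{\perp}\|\ge\Delta$ transfers to the Hamiltonian via the standard bound $\lambda_{\min}(I-\mathrm{Re}\,M)\ge \Delta$ on the orthogonal complement of the fixed point. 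We also need to ensure that the sweeping clock and the expander step do not introduce spurious ground states at boundary configurations. This will likely require boundary penalty terms.
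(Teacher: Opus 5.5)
Your outline has the right ingredients: a domain-wall history state that applies the depth-$1$ expander site by site, a check pinning the data to $\ket{\Gamma}$, the $\Theta(n^{-2})$ path-graph gap, a projection lemma, and a variational upper bound. However, the step that carries the gap estimate fails as stated. In Step 3 you assert that the consistency term has a constant gap ``after conjugating by the history isometry, because the expander gap is constant.'' The expander check can only act on clock configurations in which the whole unitary $U_i$ has already been applied. In an ordinary Feynman--Kitaev history state these configurations carry only an $O(1/n)$ fraction of the weight per clock. So the restricted gap is $c=O(\Delta/n)$ with one clock, and $c=O(\Delta/n^{2})$ with two clocks that must both have finished, as in the paper. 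Combined with $b=\Theta(n^{-2})$ from the clock, a projection lemma gives at best $\Omega(bc)$, i.e.\ roughly $n^{-3}$ or $n^{-4}$; \Cref{lem:proj} needs $b<c/2$ and still loses a factor $c^{2}$.

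The missing idea is padding. In the paper, once the $n$ circuit steps are done, the label freezes at site $n$ and a second domain wall of $\ast$ sweeps $(\alpha-1)n$ further idle clock sites. This is why the modified clock with $H_1,\ldots,H_4$ and $h_{\mathrm{start},2},h_{j,2},h_{\mathrm{end},2}$ is needed. With $\alpha=2$, the configurations where both clocks are finished carry weight $(n+1)^2/(4n^2)\ge 1/4$. This gives the constant $c=\Delta(m-1)/(4m^2)$, and \Cref{lem:proj} with $b=4/(25n^2)<c/2$ then yields $\Omega(n^{-2})$.

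Second, the step you flag as the main obstacle is left open: a local, frustration-free realization of the expander condition. It does not require encoding the averaged channel or $I-\mathrm{Re}\,M$. The paper places one clock on each side of the cut. The left clock applies $U_a$ to the left data and the right clock applies $\overline{U}_b$ to the right data, with independent labels and the identity included as label $1$. A single coupling term $H_M=\frac12\sum_{i\ge 2}(\ket{1}_L\ket{i}_R-\ket{i}_L\ket{1}_R)(\bra{1}_L\bra{i}_R-\bra{i}_L\bra{1}_R)$ acts on the two label sites at position $n$, which are adjacent across the cut. Its kernel imposes $(U_i\otimes I)\ket{\phi}=(I\otimes\overline{U}_i)\ket{\phi}$, i.e.\ $[U_i,\Phi]=0$, separately for each $i$. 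On the kernel of $H-H_M$ it evaluates to a constant times $\sum_i\norm{U_i\Phi-\Phi U_i}_2^2$, and the expander gap bounds this below for traceless $\Phi$.

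This geometry also handles the entanglement. The ground state is not literally maximally mixed across the cut. Instead, conjugating by clock-controlled unitaries $U_L\otimes U_R$, each local to one side, maps it to $\ket{\xi}_L\ket{\xi}_R\ket{\Gamma}$. Two cautions apply to your version. If the two copies of your ``doubled chain'' sat on the same sites, $\ket{\Gamma}$ would be a product of on-site pairs and there would be no volume law. And the label must be one global choice carried by the moving domain wall, not an independent register per qudit; otherwise the applied operator is not an expander element.
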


Our construction of the Hamiltonian family is inspired by the construction in \cite{aharonov2014local}.  In particular, our Hamiltonian is split into left and right 1D spin chains.  Each spin chain is divided into two subchains corresponding to clock and data chains.  On the left spin chain, we apply $U_i$ on the data chain and on the right spin chain we apply $U_i^{\dagger}$; both of these use a modification of the Feynman-Kitaev domain wall clock.  Since $U_i$ are depth-$1$, they can be applied while maintaining 1D locality.  Adding a specifically designed Hamiltonian term coupling the left and right chains causes the ground state on the data chain to be a maximally entangled state with entanglement $\sim n$.  By padding the circuit for $U_i$ with identities, we can provably achieve gap $\sim 1/n^2$.  We emphasize that we must use a \emph{modified} domain wall clock due to both the padding step and the choice of the coupling Hamiltonian.

We briefly comment on previous work.  Call $\alpha$ the exponent relating entanglement and gap for a given infinite family of Hamiltonians, i.e.\ $S \sim \Delta^{-\alpha}$.  All constructions in the literature have $\alpha$ significantly smaller than $1/2$.  Many constructions have $S = \Theta(n)$ but $\Delta$ exponentially or superexponentially small in $n$~\cite{ippoliti2025infinite,klich}.  Gottesman and Hastings~\cite{gottesman2010entanglement} provided a construction where $\alpha = 1/4$ and Irani~\cite{irani2010ground} provided a construction where $\alpha = 1/12$; both were mildly frustrated Hamiltonians.  Other constructions achieve $\alpha \approx 1/5$ with $S$ exhibiting sub-volume law entanglement~\cite{caha2018pair, movassagh2016supercritical}.  Thus, our construction is the first to achieve $\alpha = 1/2$ and saturate the conjectured bound for frustration-free Hamiltonians.

\paragraph{Streaming algorithms for state testing.}  Another application of the depth-$1$ quantum expander concerns state testing of highly entangled states in the streaming setting.  This provides a new resource for characterizing quantum states, as well as a nontrivial class of protocols.

An analogous classical setting to this problem is the communication complexity of the equality function.  Suppose Alice and Bob have bit strings $x_A$ and $x_B$ of length $n$ and want to test $x_A \stackrel{?}{=} x_B$.  For deterministic protocols with worst-case guarantees, $n+1$ bits of communication are needed.  If Alice and Bob are (1) allowed to utilize randomness and (2) allowed a failure probability $1/n$, only $O(\log(n))$ bits of communication are needed.  This protocol can then be promoted to a streaming setting, where the length-$2n$ input $x_A x_B$ is streamed and the verifier checks $x_A \stackrel{?}{=} x_B$ with success probability $1-1/n$ only storing $O(\log n)$ bits in memory.

We promote this classical result to the quantum setting: here, we would like a test for the entangled state $\ket{\Gamma} = 2^{-n/2} \sum_{x \in \{0,1\}^n} \ket{x} \otimes \ket{x}$.  However, a naive generalization of the classical test will fail because it cannot detect relative phases in the superposition.  Instead, we utilize the expander based test in~\cite{aharonov2014local}, which improves upon prior results~\cite{bennett1996mixed, barnum2002authentication, harrow2008communication} and provides a protocol using $O(1)$ qubits or cbits of communication.  For the communication protocol, the expander just needed to be efficient; however, to convert this to a streaming test, we need the expander to be low depth and have an almost translation-invariant structure.  Our depth-$1$ quantum expander has this property, leading to:
\begin{theorem}[Informal; see \Cref{sec:streaming}]
There is a single pass streaming test with $O(\Delta^{-1} \log(1/\epsilon))$ memory that implements the POVM $\{M, I-M\}$ with $\norm{M - \ketbra{\Gamma}{\Gamma}} \leq \epsilon$.
\end{theorem}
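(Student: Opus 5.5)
The streaming test will be modeled on the expander-based communication test of \cite{aharonov2014local}, adapted to a single pass. The input arrives as the first register $A$ (qubits $1,\dots,n$) followed by the second register $B$, and we want to accept $\ket{\Gamma}$ with near certainty while rejecting states orthogonal to it. Take the depth-$1$, 1D, Pauli/$\CNOT$ expander $\{U_1,\dots,U_{16}\}$ with constant gap $\Delta$ (\Cref{thm:expander_gap}). Because $(U\otimes \overline{U})\ket{\Gamma}=\ket{\Gamma}$, the basic test is: pick a random $i$, apply $U_i$ to $A$ and $\overline{U_i}$ to $B$, and repeat. Averaging gives $M_1=\frac1{16}\sum_i U_i\otimes\overline{U_i}$. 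Via the vectorization $X\mapsto \ket{X}$, this operator is the channel in \cref{eq:gap-cond}. So $M_1$ fixes $\ket\Gamma$ and has norm at most $1-\Delta$ on $\ket\Gamma^\perp$.

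Iterating $k=O(\Delta^{-1}\log(1/\epsilon))$ times gives $\norm{M_1^k-\ketbra{\Gamma}{\Gamma}}\le(1-\Delta)^k\le\epsilon$. We then need a physical POVM whose accept operator approximates this. Following \cite{aharonov2014local}, I would not apply the unitaries to the data directly, since the streamed qubits pass by only once. Instead, I would coherently control the choice of $i$ using a $4$-qubit ancilla register per round, prepared in $\frac14\sum_i\ket{i}$, with $k$ such registers. Each streamed qubit $j$ of $A$ then gets the controlled local gate $(U_i)$ acting near position $j$, for each of the $k$ rounds in sequence. At the end, the ancillas are projected onto the uniform superposition, which produces the accept operator $M=(M_1^\dagger M_1)^{k}$-type products, or more precisely the product of averaged operators.

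The key structural point is depth-$1$ plus 1D locality. Each $U_i$ is a product of Paulis and nearest-neighbour (or constant-range) $\CNOT$s, so the $k$ rounds form a brickwork of depth $k$ over a line. The streaming machine therefore only needs to keep a sliding window of $O(k)$ qubits: the current qubit plus the light cone of width $O(k)$. Across the $A|B$ boundary, the $B$ half receives $\overline{U_i}$ controlled by the same ancilla registers, which are retained in memory, costing $O(k)$ qubits.

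For the accept projector to be Hermitian and close to $\ketbra\Gamma\Gamma$, I would symmetrize. One option is to use the expander together with its inverses, making $M_1$ Hermitian; this requires adding $U_i^\dagger$, which is still depth-$1$. Another is to measure on $A$ and $B$ the correlated pattern so that effectively $\langle \psi|M_1^k|\psi\rangle$ is the acceptance probability. Total memory is then $O(k)=O(\Delta^{-1}\log(1/\epsilon))$.

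The main obstacle is the sliding-window argument: the qubits of $A$ must be released before $B$ arrives, yet the circuit on $A$ must be undone or measured consistently with the one on $B$. The first attempt is to exploit $\ket{\Gamma}$'s transpose trick, so that everything is moved onto $B$. We measure $A$ qubits in the Bell-compatible fashion as they stream, store $O(k)$ boundary data and ancilla controls, and then apply the transposed gates on $B$. Checking that the boundary effects at the window edges and at the $A/B$ junction contribute only $O(k)$ memory, without spoiling the operator bound, is where I expect the real work.
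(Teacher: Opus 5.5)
Your core construction matches the paper's: $k$ label registers, each in the uniform superposition over the $16$ indices, controlled $U_i$ on the first half and controlled $\overline{U_i}$ on the second, a sliding window of width $O(k\ell)$, and a final projection of the labels onto the uniform superposition. The paper does the same by iterating the expander over all products $U_{i_r}\cdots U_{i_1}$ and storing the $r$ labels. The gap is in how you propose to finish.

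The obstacle you single out is not one. Once every gate touching a streamed qubit has been applied, that qubit can simply be discarded. Acceptance depends only on the reduced state of the label registers, and a partial trace commutes with later operations on other registers. Projecting the labels onto $\ket{+}^{\otimes k}$ gives amplitude $M_1^k\ket{\psi}$, i.e.\ accept operator $(M_1^k)^\dagger M_1^k$. Nothing on the first half has to be undone or measured. The remedy you sketch would in fact destroy the test. The transpose identity $(U\otimes I)\ket{\Gamma}=(I\otimes U^T)\ket{\Gamma}$ holds only on $\ket{\Gamma}$ itself. Used as an operator identity to move the first-half gates onto the second half, it turns $U_i\otimes\overline{U_i}$ into $I\otimes\overline{U_i}\,U_i^T=I$, so every input would be accepted. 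No symmetrization is needed either: $M_1$ and $M_1^\dagger$ both fix $\ket{\Gamma}$, so $M_1$ preserves $\ket{\Gamma}^\perp$ and $\norm{(M_1^k)^\dagger M_1^k-\ketbra{\Gamma}{\Gamma}}\le(1-\Delta)^{2k}$ directly. Here $M_1$ is even Hermitian, since $X$, $Z$ and $\CNOT$ are real involutions.

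The boundary issue that does need handling is one you missed: the expander of \Cref{thm:expander_gap} lives on a ring, not a line. $U_{13},\dots,U_{16}$ contain wraparound gates such as $\CNOT_{1,3s-1}$ and $\CNOT_{2,3s}$. So the $k$ rounds are not a brickwork on a line, and a window of recently received qubits cannot apply these gates, nor the later-round gates in their light cone. The paper keeps the first $O(k\ell)$ qubits of each half in memory until the marker at time $n$ (resp.\ $2n$). It then applies the controlled wraparound gates and the remaining layers touching those qubits, and discards them. Together with the window and the $k\lceil\log m\rceil$ label qubits, this costs $O(k\ell+k\log m)=O(\Delta^{-1}\log(1/\epsilon))$. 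Alternatively, a depth-$1$ expander on an open line, as in \cite{liu2026almost}, avoids the issue.
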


In fact, $\ket{\Gamma}$ can be replaced by the state $U \ket{\Gamma}$ where $U$ is a translation-invariant matrix product operator with constant bond dimension without affecting the memory cost of the protocol.  This provides a class of highly-entangled states which can be efficiently tested in the streaming setting.

\paragraph{Improved gap, entanglement, and correlation bounds.}
An important principle in quantum many-body physics is that ground states of gapped local Hamiltonians inherit the locality of their parent Hamiltonians.  Developing quantitative area law bounds (which control the scaling of $S$ with $\Delta$) or sharp bounds on ground state correlation decay gives us mathematically precise statements that embody this principle.

Our \cref{thm:ent-gs} shows the limits of any possible 1D area law by
illustrating a frustration-free example where $S =
\Theta(1/\Delta^{1/2})$.  Can we prove an improved area law bound which this example saturates?  This question remains open, but in
\cref{append:localgapscaling}, we prove that $S \leq
\tilde{O}(1/\Delta_{\mathrm{loc}}^{3/4})$ for a frustration-free
Hamiltonian, where $\Delta_{\mathrm{loc}}$ is the local gap of the 1D
system.  The proof uses the Approximate Ground Space Projector (AGSP)
constructed in \cite{AAG22} using the Sherstov's robust
polynomial~\cite{She12}.
Previously, the best known bound was $S \leq \tilde{O}(1/\Delta) = \tilde{O}(1/\Delta_{\mathrm{loc}})$ \cite{arad2013area}.

In \cref{append:CATlb}, we provide bounds on the correlation decay of ground states and their gaps, which hold for general $k$-local Hamiltonians and provide stronger bounds than in Ref.~\cite{hastingskoma}.  One application is that we show if the ground state $\psi_{AB}$ obeys $\norm{\psi_A - \ketbra{\mathrm{CAT}}{\mathrm{CAT}}_A} \leq 1/10$, where $\ket{\mathrm{CAT}}$ is a cat state, then the gap of the Hamiltonian must be $\leq O(1/n)$.  A similar bound can be shown for the kind of clock states studied in \cref{sec:vollaw}.  Since these kinds of states are required to apply the quantum expander unitaries coherently, such bounds provide quantitative limits on whether constructions like in \cref{sec:vollaw} can generate highly entangled ground states with large gap.

\paragraph{Outlook.}  Our work leaves several open questions.  One natural question is Conjecture~\ref{conj:area}.  Our constructions rely on a clock Hamiltonian, which behaves like a single-particle walk with gap $1/n^2$.  If the single particle Hamiltonian can be replaced with a \emph{many-particle} Hamiltonian, such as a free-fermion Hamiltonian, the gap could be amplified to $1/n$ while preserving the volume-law entanglement.  Unfortunately, since each particle carries a local clock, more terms need to be added to synchronize the local clocks, which will reduce the gap.  An alternative construction was provided in \cite{ippoliti2025infinite} which maps ground states of Hamiltonians to $+1$ eigenstates of staircase/sequential unitaries.  Using free-fermion Hamiltonians, this construction could achieve a gap $1/n$, but all of our attempts failed to generalize the construction from unitaries to Hamiltonians.

Another open question concerns further generalization of the streaming protocol for state testing.  It would be interesting to test for more natural states like ground states of translation-invariant Hamiltonians.  The general principle would be to construct an operator of the form $H = \sum_{j=1}^m U_j$ where $U_j$ is a (nearly) translation-invariant staircase circuit so that $H$ has a $+1$ eigenstate $\ket{\psi}$ and a gap to all other eigenstates.  This would provide a streaming algorithm with $O(\log m)$ memory.

Finally, we wonder if there are any other interesting applications of the depth-$1$ quantum expander or the depth-$1$ expander on $\rm \SU(2^n)$.  
For example, a key ingredient in the breakthrough result MIP*=RE~\cite{ji2021mip} is a device-independent self-test for many EPR pairs that utilizes Pauli-braiding and quantum low-degree tests.  However, unlike our quantum expander test, this self-test does not assume that Pauli measurements are implemented correctly. Because of this, the protocol requires a (polynomially) large question and answer alphabet, and afterwards compression techniques are used to reduce its size. If the CNOT-based observables appearing in our expander could be tested efficiently, our construction could yield an efficient entanglement test with a natively small question and answer alphabet.

\paragraph{Notations.}
We use $\norm{A}$ for the operator norm of a matrix $A$ and $\overline{A}$ for the complex conjugate of $A$. 
$\CNOT_{a,b}$ refers to the CNOT gate controlled by qubit $a$ and acting on qubit $b$.  Whenever we write a Hamiltonian term $\ketbra{a}{a}_i$, we mean $I_1\otimes I_2\otimes\cdots \otimes \ketbra{a}{a}_i \otimes I_{i+1}\otimes\cdots $

\paragraph{Concurrent work}
While preparing this manuscript we became aware of concurrent work~\cite{liu2026almost}, which observes nearly identical circuits for Kassabov's generators.
Moreover, they also leverage this result to obtain a constant-depth uniformly gapped generator set for $\mathrm{Cl}(n)$ using Nikolov's product decomposition~\cite{nikolov2005product} of which the Aaronson-Gottesmann~\cite{aaronson2004improved} decomposition is a special case. 
Their construction requires more generators and is not of depth-$1$ but of constant depth, but makes the generalization to all $n$ (as opposed to $n=3s$ ) explicit. 
Also they construct circuits on an open 1D line, while ours are on a circle.
Last,~\cite{liu2026almost} also extends the result to higher $k$ but exploits the gap of the PFC ensemble~\cite{metger2024simple} instead of the gap of Pauli rotations~\cite{baer2026random}.
As a result, unlike us, they do not obtain a uniformly gapped generator set for all representations. 

\section{Depth-\(1\) quantum expanders}\label{sec:expander}

In this section, we use a definition of quantum expanders equivalent to \Cref{def:q_expander_channel}: 
\begin{definition}[Quantum expanders]
    We say that a set of $n$-qubit unitary matrices $\{U_1, \ldots, U_m\}$ forms a \emph{quantum expander} on $n$ qubits with a spectral gap of $\Delta > 0$ if the second largest singular value of $\frac{1}{m}\sum_i U_i \otimes \overline{U_i}$ equals $1-\Delta$. 
\end{definition}

Below we state the main theorem of this section. 

\begin{theorem}[Depth-$1$ constant-sized quantum expanders with constant spectral gaps]\label{thm:expander_gap}
    For any integer $s\geq 1$, there exists a set of $16$ unitary matrices on $3s$ qubits, denoted as $\{U_1, \ldots, U_{16}\}$, such that 
    it forms a quantum expander on $3s$ qubits with a spectral gap of at least $1/(2 \cdot 10^8)$,  
    and each $U_i$ can be written as a depth-$1$ circuit. 
    Moreover specifically, 
    \begin{itemize}
        \item each $U_1, \ldots, U_6$ acts non-trivially only on the first $3$ qubits: 
        \begin{align}
            U_1 = X\otimes I_{2^{3s-1}}, &\quad U_2 = Z \otimes I_{2^{3s-1}}, \\
            U_3 = \CNOT_{1,2} \otimes I_{2^{3s-2}}, & \quad U_4 = \CNOT_{2,1}\otimes I_{2^{3s-2}} ,\\
            U_5 = I_2 \otimes \CNOT_{2,3} \otimes I_{2^{3s-3}}, & \quad U_6 = I_2 \otimes \CNOT_{3,2} \otimes I_{2^{3s-3}}. 
        \end{align}
        
        \item each $U_7\ldots, U_{16}$ is translation-invariant by $3$ consecutive qubits: 
        \begin{align}
            U_7, \ldots, U_{12} &= (\CNOT_{a,b})^{\otimes s} = \prod_{i=0}^{s-1} \CNOT_{3i+a,3i+b}, \quad \text{ for }1\leq a\neq b\leq 3. \quad \\
            U_{13} &= \prod_{i=0}^{s-1} \CNOT_{3i+4,3i+2}, \quad U_{14} = \prod_{i=0}^{s-1} \CNOT_{3i+5, 3i+1} \\
            U_{15} &= \prod_{i=0}^{s-1} \CNOT_{3i+5, 3i+3}, \quad U_{16} = \prod_{i=0}^{s-1} \CNOT_{3i+6, 3i+2},
        \end{align}
        where the indices are taken with periodic condition. 
    \end{itemize}
\end{theorem}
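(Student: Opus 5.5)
The plan is to decompose the channel $M=\frac{1}{16}\sum_i U_i\otimes\overline{U_i}$ according to the Pauli basis. Every $U_i$ is a Pauli or a CNOT circuit, hence Clifford, so conjugation by $U_i$ permutes Pauli operators up to phase. We therefore write $X=\sum_{P} c_P P$ over the $4^{3s}$ Paulis $P = X^{a}Z^{b}$ (up to phase), with $(a,b)\in\mathbb{F}_2^{3s}\times\mathbb{F}_2^{3s}$. Two facts drive the argument:
\begin{itemize}
\item $U_1=X_1$ and $U_2=Z_1$ only multiply Paulis by signs.
\item A CNOT circuit $C$ acting as $x\mapsto Ax$ on basis states, with $A\in\SL_{3s}(\mathbb{F}_2)$, sends $X^aZ^b \mapsto X^{Aa}Z^{A^{-T}b}$ with no phase.
\end{itemize}
Hence on the traceless subspace (spanned by $(a,b)\neq(0,0)$) the operator $M$ acts as a signed permutation walk on $\mathbb{F}_2^{6s}\setminus\{0\}$ generated by the representation $A\mapsto A\oplus A^{-T}$ of the CNOT group, together with sign flips from $U_1,U_2$.

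Next I would identify the matrices $A_7,\dots,A_{16}$ and $A_3,\dots,A_6$ with Kassabov's generating set for $\SL_{3s}(\mathbb{F}_2)$, viewed as $\SL_3(\Mat_s(\mathbb{F}_2))$. The block-diagonal ones $(\CNOT_{a,b})^{\otimes s}$ are the elementary matrices $e_{ab}(1)$ with entry the identity in $\Mat_s$. The shifted ones $U_{13},\dots,U_{16}$ are elementary matrices $e_{ab}(\sigma)$ whose entry is the cyclic shift $\sigma$, a generator of $\Mat_s(\mathbb{F}_2)$ as a ring together with $1$ and $e_{11}$. The local CNOTs $U_3,\dots,U_6$ supply $e_{ab}(e_{11})$-type elements. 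From this I would invoke Kassabov's bound~\cite{Kas07a}: the Cayley graph has a uniform Kazhdan constant $\epsilon_0$, so for every representation without invariant vectors some generator moves each unit vector by at least $\epsilon_0$. The absolute constant $1/(2\cdot10^8)$ would come from Kassabov's explicit Kazhdan constant squared, divided by the degree 16 and a factor accounting for the extra generators.

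The representation on $\mathbb{C}[\mathbb{F}_2^{6s}\setminus\{0\}]$ does have invariant vectors under $\SL$ alone: the $\SL$-orbits on nonzero pairs $(a,b)$ are indexed by whether $a$, $b$ vanish and by the symplectic-type invariant $a^Tb$. This is exactly where the Pauli generators enter. I would decompose the space of orbit-constant functions (a constant-dimensional space of at most 5 orbits, each with its own constant function) and show that $U_1,U_2$ act on it with a definite gap. The key point is that $X_1$ and $Z_1$ flip signs on at least a constant fraction of each orbit, e.g.\ $Z_1$ negates $X^aZ^b$ with $a_1=1$, which occurs for about half of the orbit. A vector orthogonal to the identity is either far from the orbit-constant subspace, where Kassabov's Kazhdan bound gives a gap, or close to it, where the sign flips give a gap. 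These combine by the standard lemma that if two self-adjoint averaging operators have gaps off their respective fixed spaces and the fixed spaces intersect only at $\{0\}$ with angle bounded away from zero, then their average has a gap; $M$ is not self-adjoint, so this would be applied to $M^\dagger M$, or to singular values via $\|Mv\|^2$.

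Finally, depth-1 and 1D locality are checked directly. Each $U_i$ for $i\geq 7$ is a product of disjoint CNOTs on qubit pairs at distance at most 4 on a circle of $3s$ qubits: in $U_{13}$ the pairs $(3i+4,3i+2)$ are disjoint across $i$, and similarly for the others. The main obstacle is the bookkeeping in the second step: verifying that these specific matrices match Kassabov's generators, or can bound them with only constant loss, and controlling the constant-dimensional invariant subspace with an explicit angle so that the final constant comes out as $1/(2\cdot10^8)$.
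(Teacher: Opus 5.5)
Your architecture matches the paper's: Kassabov's Kazhdan bound for the fourteen CNOT generators, then a Harrow--Hastings-type combination with the sign flips from $U_1,U_2$ on the CNOT-invariant subspace. However, there is a genuine gap at the bottom of the spectrum.

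A Kazhdan constant only says that every unit vector $v$ in a representation without invariant vectors has $\|\rho(g)v-v\|\geq\epsilon_0$ for some generator $g$. This excludes eigenvalues of the CNOT average near $+1$, but not near $-1$: a vector negated by every generator is moved by $2$. That is why Lemma~\ref{lem:kazhdan_gaps_relation} gives only a $\lambda_{\max}$ bound when the generating set does not contain the identity, as is the case for $\Sigma_{3s}$.

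Since every $U_i$ is an involution, $M$ is in fact Hermitian, contrary to your remark. So the expander gap is $\min(1-\lambda_{\max},1+\lambda_{\min})$ on traceless operators, and passing to $M^\dagger M=M^2$ does not sidestep the issue. The Pauli generators do not help at that end either. The average $B$ of the sign operators $X^aZ^b\mapsto(-1)^{b_1}X^aZ^b$ and $X^aZ^b\mapsto(-1)^{a_1}X^aZ^b$ equals $-1$ on every Pauli with $a_1=b_1=1$. Nothing in your argument excludes a traceless vector concentrated there that every CNOT generator nearly negates.

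Correspondingly, your two-fixed-spaces lemma can only yield a top-of-spectrum gap here, because $B$ has no norm gap off its fixed space. The norm version you need is Lemma~\ref{lem:kill_overlap}. It asks of $B$ only $\|\Pi B\Pi\|<1$, but it requires a norm gap of the CNOT average off $\Pi$, which is exactly the missing bound.

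The missing idea is an odd-length relation among the generators (the paper's Lemma~\ref{lem:lower_bound_min_eval}). Take $x=E_{1,2}(I_s)$, $y=E_{2,3}(I_s)$, $z=E_{1,3}(I_s)\in\Sigma_{3s}$, all involutions; then $[x,y]=z$ gives $xyxyz=I$. For an eigenvector $v$ of $\frac1{14}\sum_{g\in\Sigma_{3s}}\rho(g)$ with eigenvalue $\lambda$, set $a_g=\|(\rho(g)+I)v\|$. Telescoping gives $2=\|(\rho(xyxyz)+I)v\|\leq 2a_x+2a_y+a_z\leq 3\bigl(\sum_g a_g^2\bigr)^{1/2}=3\sqrt{28(1+\lambda)}$. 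Hence $\lambda\geq-1+\frac1{63}$ in every representation, and so $\lambda_{\min}(M)\geq\frac{14}{16}\bigl(-1+\frac1{63}\bigr)-\frac{2}{16}=-1+\frac1{72}$. Adjoining the identity would also work, but it changes the set in the statement.

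With this added, your proof goes through. Your Pauli-basis viewpoint is actually a cleaner way to get $\|\Pi B\Pi\|$ than the paper's Gram--Schmidt and $5\times5$ determinant. For $n\geq2$ there are exactly four nonzero orbits: $\{a=0\}$, $\{b=0\}$, $\{a,b\neq0,\ a^Tb=0\}$ and $\{a^Tb=1\}$. Their indicators are orthogonal and the sign operators are diagonal, so $\Pi B\Pi$ is diagonal with entries $\frac{N-2}{2(N-1)},\frac{N-2}{2(N-1)},-\frac1{N-1},-\frac1{N-1}$ (with $N=2^n$). This matches the paper's eigenvalues.

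Note that this uses both Paulis. $Z_1$ flips no sign on $\{a=0\}$, and $X_1$ flips none on $\{b=0\}$. So your claim about flipping a constant fraction of each orbit holds only for their average, not for each individually.
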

As an example, see \Cref{fig:U13-cnot-layer} for the circuit diagram of $U_{13}$. The rest of the section is to prove \Cref{thm:expander_gap}.

\begin{figure}[ht]
    \centering
\begin{tikzpicture}[x=1cm,y=0.7cm, every node/.style={font=\small}]
  \tikzset{
    qubit/.style={fill=black,circle,inner sep=1pt},
    cnot target/.style={
      draw,circle,inner sep=0pt,minimum size=11pt,
      path picture={
        \draw[line width=0.5pt]
          (-0.27,0) -- (0.27,0)
          (0,-0.27) -- (0,0.27);
      }
    }
  }

  \draw[rounded corners] (-0.8,1.4) rectangle (9.8,-1.4);

  \coordinate (q1)    at (0,  1);
  \coordinate (q2)    at (0,  0);
  \coordinate (q3)    at (0, -1);

  \coordinate (q4)    at (2,  1);
  \coordinate (q5)    at (2,  0);
  \coordinate (q6)    at (2, -1);

  \coordinate (q7)    at (4,  1);
  \coordinate (q8)    at (4,  0);
  \coordinate (q9)    at (4, -1);

  \coordinate (q3nm5) at (7,  1);
  \coordinate (q3nm4) at (7,  0);
  \coordinate (q3nm3) at (7, -1);

  \coordinate (q3nm2) at (9,  1);
  \coordinate (q3nm1) at (9,  0);
  \coordinate (q3n)   at (9, -1);

  \coordinate (q1ext) at (11,  1);
  \coordinate (q2ext) at (11,  0);
  \coordinate (q3ext) at (11, -1);

  \foreach \q in {q1,q2,q3,q4,q5,q6,q7,q8,q9,q3nm5,q3nm4,q3nm1,q3nm3,q3nm2,q3n,q1ext,q2ext,q3ext}{
    \node[qubit] at (\q) {};
  }

  \node[left=4pt] at (q1) {$1$};
  \node[left=4pt] at (q2) {$2$};
  \node[left=4pt] at (q3) {$3$};

  \node[left=4pt] at (q4) {$4$};
  \node[left=4pt] at (q5) {$5$};
  \node[left=4pt] at (q6) {$6$};

  \node[left=4pt] at (q7) {$7$};
  \node[left=4pt] at (q8) {$8$};
  \node[left=4pt] at (q9) {$9$};

  \node[left=4pt] at (q3nm5) {$3s-5$};
  \node[left=4pt] at (q3nm4) {$3s-4$};
  \node[left=4pt] at (q3nm3) {$3s-3$};

  \node[left=4pt] at (q3nm2) {$3s-2$};
  \node[left=4pt] at (q3nm1) {$3s-1$};
  \node[left=4pt] at (q3n)   {$3s$};

  \node[right=4pt] at (q1ext) {$1$};
  \node[right=4pt] at (q2ext) {$2$};
  \node[right=4pt] at (q3ext) {$3$};

  \node at (5,  1) {$\cdots$};
  \node at (5,  0) {$\cdots$};
  \node at (5, -1) {$\cdots$};

  \draw (q4)    -- node[pos=1, cnot target, sloped] {} (q2);
  \draw (q7)    -- node[pos=1, cnot target, sloped] {} (q5);
  \draw (q3nm2) -- node[pos=1, cnot target, sloped] {} (q3nm4);
  \draw (q1ext) -- node[pos=1, cnot target, sloped] {} (q3nm1);

\end{tikzpicture}
\caption{The circuit of $U_{13}$ with $s$ CNOT gates on $3s$ qubits with periodic boundary. }
\label{fig:U13-cnot-layer}
\end{figure}

\subsection{Proof idea}

One group that will play a central role in the section is the special linear group over $\F_2$, consisting of all invertible $n\times n$ binary matrices, for which we will denote by $\SL(n; \F_2)$. 
This group is closely related to the group generated by all CNOT gates on $n$ qubits, for which we will denote by $G_n$. 
Let us first clarify how they are related. 

Each $g\in \SL(n;\F_2)$ acts linearly on $\F_2^n$, and hence defines a permutation of the $2^n$ bitstrings. This yields a natural representation
\begin{equation}
    \Gamma: \SL(n;\F_2) \to \U(2^{n})
\end{equation}
by letting $g$ permute computational basis states. 
Concretely, for any $g\in \SL(n;\F_2)$ and $x\in \F_2^{n}$, 
\begin{equation}
    \Gamma(g)\cdot \ket{x} = \ket{g\cdot x}. 
\end{equation}

Let $e_i\in \F_2^n$ be the standard basis vector with a $1$ in the $i$-th coordinate and zeros elsewhere. 
$\SL(n; \F_2)$ admits a generating set of all elementary matrices of form
\begin{equation}
    E_{i,j} = I_n + e_i e_j^T, \qquad 1\leq i\neq j\leq n. 
\end{equation}
For $x\in \F_2^n$, we have $E_{i,j}x = x + x_j e_i$, i.e.\ the action of $E_{i,j}$ is to add the $j$-th bit $x_j\in \F_2$ into the $i$-th coordinate (mod $2$). Under $\Gamma$, this is exactly the action of a CNOT gate with control qubit $j$ and target qubit $i$: 
\begin{equation}\label{eq:elementary_matrix_CNOT}
    \Gamma(E_{i,j}) = \CNOT_{j,i}. 
\end{equation}
Hence $\Gamma(\SL(n;\F_2)) = G_n$, the group generated by all CNOT gates on $n$ qubits. 

Our proof of \Cref{thm:expander_gap} follows from two ingredients: 
First, we will show that, if we have a constant-sized ``$2$-design on $G_n$'' with a constant spectral gap, then by adding the single-qubit Pauli $X$ and $Z$ gates to the set, it forms a quantum expander on $\U(2^n)$ with a constant gap. 
This step follows from how Harrow and Hastings proved that a constant-sized $2$-design on the symmetric group of order $2^n$ can be augmented to a quantum expander with a constant gap.\footnote{Harrow and Hastings added a more complicated gate to the permutation $2$-design; in fact, their proof also works if one simply takes the additional gate as a single-qubit Pauli $Z$. } 

The second ingredient is to find a constant-sized ``$2$-design on $G_n$'' with a constant spectral gap, for which we mean that a set of matrices in $G_n$, $\{P_1, \ldots, P_m\}$, satisfying
\begin{equation}\label{eq:Gn_2_design_gap}
    \norm{\frac{1}{m} \sum_{i=1}^m \rho(P_i) - \E_{\bP \sim \mu(G_{n})} \rho(\bP)} \leq 1 - \Delta,
\end{equation}
where $\rho$ is the representation $\rho: G_{n} \ni P \mapsto P^{\otimes 2}$, $\mu(\cdot)$ denotes the uniform distribution, and $m$ and $\Delta > 0 $ are two universal constants. 
Equivalently, we would like to find a generating set $\{A_1, \ldots, A_m\}$ of $\SL(n;\F_2)$ such that
\begin{equation}\label{eq:SL_design_gap}
    \norm{\frac{1}{m} \sum_{i=1}^m \rho(\Gamma(A_i)) - \E_{\bA \sim \mu(\SL(n;\F_2))} \rho(\Gamma(\bA))} \leq 1 - \Delta,
\end{equation}
where $\rho\cdot \Gamma$ is a representation of $\SL(n;\F_2)$. 

It was shown by Kassabov in \cite[Theorem 8a]{Kas07a} that for each $n=3s$ for some integer $s\geq 1$, there exists a generating set of $14$ elements for $\SL(3s; \F_2)$ with a bounded Kazhdan constant, which can be translated to a constant spectral gap via \cite[Lemma 2.17]{CHHLMT24}.\footnote{Combining these two results in fact shows a stronger result that \cref{eq:Gn_2_design_gap} holds for every finite-dimensional unitary representation $\rho$ of $G_n$. So to some extent, the Kassabov's generators may seem to be an overkill. Yet it is unclear to us whether there exist simpler constructions if one only concerns about a particular representation of $\SL(n;\F_2)$, for example the $\rho\Gamma$ we use. } 
Interestingly, due to \cref{eq:elementary_matrix_CNOT},
each of the $14$ generators given by Kassabov when converted to a circuit on $n$ qubits under the representation $\Gamma$ admits a depth-$1$ circuit of CNOT gates. Moreover, $4$ of them are a single CNOT gate and the rest of them are translation-invariant by $3$ consecutive qubits. 

Combining these two ingredients yields \Cref{thm:expander_gap}.

\subsection{Depth-\(1\) expanders on the group of CNOTs}

Recall that $G_n$ is the group generated by all CNOT gates on $n$ qubits. The main theorem of this subsection is the following. 
\begin{theorem}\label{thm:gap_from_kas}
    For any integer $s\geq 1$, let $U_3, \ldots, U_{16}$ be the unitary matrices on $3s$ qubits defined in \Cref{thm:expander_gap}. 
    Then for any finite-dimensional unitary representation $\rho$ of $G_{3s}$, we have that
    \begin{equation}
        \norm{\frac{1}{14} \sum_{i=3}^{16} \rho(U_i) - \E_{\bP \sim \mu(G_{3s})} \rho(\bP)} \leq 1 - \frac{1}{48\cdot 10^5}. 
    \end{equation}
\end{theorem}

The proof of \Cref{thm:gap_from_kas} has three ingredients. 
First, the Kazhdan constant bounds the largest eigenvalue of the difference of the averaging operators \cite[Lemma 2.17]{CHHLMT24} (restated as \Cref{lem:kazhdan_gaps_relation}). 
Second, Kassabov supplies a constant-sized generating set of $\SL(3s;\F_2)$ whose Kazhdan constant is uniformly bounded below \cite[Theorem 8a]{Kas07a} (restated as \Cref{thm:Kas_SL}). 
Finally, we bound its smallest eigenvalue away from $-1$ using a short odd relation among three of the generators (\Cref{lem:lower_bound_min_eval}).

Consider a group $G$ generated by a set $S$. We write $\calK(G;S)$ for the \emph{Kazhdan constant for $G$ with respect to $S$}. 
In this paper, we will not be concerned with the specific definition of the Kazhdan constant; interested readers can find more discussions in \cite[Section 2.4]{CHHLMT24} and \cite{Kas07a}. 
What matters to us is how the Kazhdan constant relates to the spectral gap:
\begin{lemma}[Bounding the largest eigenvalue with the Kazhdan constant {\cite[Lemma 2.17]{CHHLMT24}}]\label{lem:kazhdan_gaps_relation}
    Let $G$ be a compact group (finite or Lie) generated by a subset $S$ which is closed under taking inverses. Then for any finite-dimensional unitary representation $\rho$ of $G$, 
    \begin{equation}\label{eq:kazhdan_gap_max_eval}
        \lambda_{\max}\left(\frac{1}{\abs{S}}\sum_{g\in S} \rho(g) - \E_{\bg\sim \mu(G)} \rho(\bg)\right) \leq 1 - \frac{\calK(G;S)^2}{2\abs{S}}, 
    \end{equation}
    where $\lambda_{\max}(\cdot)$ denotes the largest eigenvalue. Furthermore, if $S$ contains the identity, then $\cref{eq:kazhdan_gap_max_eval}$ holds with $\lambda_{\max}(\cdot)$ replaced with the operator norm $\norm{\cdot}$. 
\end{lemma}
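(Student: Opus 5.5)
The plan is to split off the trivial part of $\rho$ and apply the definition of the Kazhdan constant on what remains. Write $P=\E_{\bg\sim\mu(G)}\rho(\bg)$ and $A=\frac{1}{\abs{S}}\sum_{g\in S}\rho(g)$.

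\textbf{Step 1 (structure of $P$ and $A$).} Since $G$ is compact with normalized Haar measure $\mu$, left-invariance gives $\rho(h)P=P$ for all $h\in G$. Together with unitarity of $\rho$, this shows that $P$ is the orthogonal projector onto the invariant subspace
\[
V^G=\{v:\rho(g)v=v\ \forall g\in G\}.
\]
Because $S$ is closed under inverses and $\rho(g^{-1})=\rho(g)^\dagger$, the operator $A$ is Hermitian. Moreover $A$ commutes with $P$, so $A$ preserves both $V^G$ and its orthogonal complement $W=(V^G)^\perp$.

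\textbf{Step 2 (reduction to $W$).} On $V^G$ every $\rho(g)$ acts as the identity, so $A-P$ vanishes there. Hence the spectrum of $A-P$ consists of $0$ together with the spectrum of $A|_W$. It therefore suffices to show that every eigenvalue of $A|_W$ is at most $1-\calK^2/(2\abs{S})$; the eigenvalue $0$ obeys this bound because $\calK\le 2$ and $\abs{S}\ge 2$ in any nontrivial situation.

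\textbf{Step 3 (the key estimate on $W$).} The restriction of $\rho$ to $W$ has no nonzero invariant vectors. By the definition of $\calK(G;S)$, for every unit vector $v\in W$ there is some $g\in S$ with $\norm{\rho(g)v-v}\ge\calK$. Using $\norm{\rho(g)v-v}^2=2-2\,\mathrm{Re}\langle v,\rho(g)v\rangle$ and the fact that $\langle v,Av\rangle$ is real, we get
\[
\langle v,Av\rangle=1-\frac{1}{2\abs{S}}\sum_{g\in S}\norm{\rho(g)v-v}^2\le 1-\frac{\calK^2}{2\abs{S}}.
\]
Taking $v$ to be an eigenvector of $A|_W$ gives the claimed bound on $\lambda_{\max}$. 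The only care needed in this step is the reduction to $W$ in Step 2: the Kazhdan property only applies to representations without invariant vectors, so one must first confirm that $A$ preserves $W$.

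\textbf{Step 4 (the case $e\in S$).} It remains to bound the eigenvalues of $A|_W$ from below. Write
\[
A=\frac{1}{\abs{S}}I+\frac{1}{\abs{S}}\sum_{g\in S\setminus\{e\}}\rho(g).
\]
Each $\rho(g)$ is unitary, so the Hermitian sum has eigenvalues at least $-(\abs{S}-1)$. Hence $A\succeq -1+\frac{2}{\abs{S}}$. Since $\norm{\rho(g)v-v}\le 2$ for unit $v$, we have $\calK\le 2$, so $\calK^2/(2\abs{S})\le 2/\abs{S}$. Therefore every eigenvalue of $A|_W$ lies in $[-(1-\calK^2/(2\abs{S})),\,1-\calK^2/(2\abs{S})]$. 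Combined with the eigenvalue $0$ on $V^G$, this gives the operator-norm statement.
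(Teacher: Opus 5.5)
Your proof is correct and follows the standard argument behind the cited \cite[Lemma 2.17]{CHHLMT24}; the paper itself gives no proof and only cites that lemma. The argument restricts to the orthogonal complement of the invariant subspace and uses the identity $\langle v,Av\rangle = 1-\frac{1}{2\abs{S}}\sum_{g\in S}\norm{\rho(g)v-v}^2$; when $e\in S$, it combines $A\succeq -1+2/\abs{S}$ with $\calK\le 2$. You are also right that the eigenvalue $0$ on the invariant subspace requires $\calK^2\le 2\abs{S}$. This is a genuine, if degenerate, caveat: for $G=\mathbb{Z}_2$, $S=\{g\}$, and $\rho$ containing the trivial representation, the $\lambda_{\max}$ bound as literally stated fails.
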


\begin{remark}
The full statement of \cite[Lemma 2.17]{CHHLMT24} gives an operator-norm bound by controlling both the largest and smallest
eigenvalues. 
Its assumption that the generating set contains the identity is used to bound the smallest eigenvalue away from $-1$.
We could therefore adjoin the identity to our generating family, at the cost of one additional unitary in the final quantum
expander. 
However, for the particular Kassabov family considered here, this is unnecessary because we can directly bound the smallest eigenvalue as shown in \Cref{lem:lower_bound_min_eval}. 
\end{remark}

We will also use the following two lemmas in~\Cref{appendix:unitary}.
\begin{lemma}[\cite{kassabov2007symmetric}]\label{lemma:kazhdan-by-subgroups}
    Let $H_i$ be subgroups of $G$ and $S_i\subseteq H_i$. 
    Then, we have
    \begin{equation}
        \mathcal{K}\left(G;\bigcup_i S_i\right)\geq \frac{1}{2} \mathcal{K}\left(G;\bigcup_i H_i\right) \inf_i \mathcal{K}\left(H_i;S_i\right).
    \end{equation}
\end{lemma}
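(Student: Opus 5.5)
We prove the bound straight from the definition of the Kazhdan constant: $\calK(G;S)$ is the largest $\kappa\ge 0$ such that every unitary representation $(\rho,\mathcal H)$ of $G$ with no nonzero invariant vectors satisfies
\[
\max_{s\in S}\norm{\rho(s)v-v}\ \ge\ \kappa\norm{v}\qquad\text{for all } v\in\mathcal H .
\]
Write $S=\bigcup_i S_i$ and $\kappa_H=\mathcal{K}(G;\bigcup_i H_i)$. Fix a representation $\rho$ of $G$ with no invariant vectors and a unit vector $v$. Put $\epsilon=\max_{s\in S}\norm{\rho(s)v-v}$. We need to show $\epsilon\ge \tfrac12\kappa_H\inf_i\mathcal{K}(H_i;S_i)$. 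If $\inf_i\mathcal{K}(H_i;S_i)=0$ there is nothing to prove, so assume $\kappa_i:=\mathcal{K}(H_i;S_i)>0$ for every $i$.

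Fix $i$ and restrict $\rho$ to $H_i$. Split $\mathcal H=\mathcal H^{H_i}\oplus(\mathcal H^{H_i})^\perp$ into the $H_i$-invariant vectors and their complement. The complement is an $H_i$-subrepresentation with no invariant vectors. Write $v=v_i^0+v_i^\perp$ accordingly. Since $\rho(s)$ fixes $v_i^0$ and preserves the decomposition, we get $\rho(s)v-v=\rho(s)v_i^\perp-v_i^\perp$ for $s\in S_i$. The Kazhdan property of $(H_i,S_i)$ on $(\mathcal H^{H_i})^\perp$ then gives
\[
\kappa_i\norm{v_i^\perp}\le\max_{s\in S_i}\norm{\rho(s)v-v}\le \epsilon,
\qquad\text{so}\qquad \norm{v-v_i^0}\le \epsilon/\kappa_i .
\]
Thus $v$ lies within $\epsilon/\kappa_i$ of an $H_i$-fixed vector, for every $i$.

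Next, for any $h\in H_i$ we have $\rho(h)v_i^0=v_i^0$, and $\rho(h)$ is unitary. Hence
\[
\norm{\rho(h)v-v}\le \norm{\rho(h)(v-v_i^0)}+\norm{v_i^0-v}=2\norm{v-v_i^0}\le 2\epsilon/\kappa_i .
\]
Taking the maximum over all $h\in\bigcup_i H_i$ and using $\kappa_i\ge\inf_j\kappa_j$, we get $\max_{h}\norm{\rho(h)v-v}\le 2\epsilon/\inf_j\kappa_j$.

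By the definition of $\kappa_H$ applied to $\rho$ (which has no $G$-invariant vectors), the left side is at least $\kappa_H$. Therefore $\epsilon\ge \tfrac12\kappa_H\inf_j\kappa_j$. Since $v$ and $\rho$ were arbitrary, this yields the stated inequality. If the generating sets are infinite, "max" should be read as "sup" throughout; the argument is unchanged.

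The one step needing care is the first: the Kazhdan bound for $(H_i,S_i)$ may only be applied to a representation of $H_i$ with no invariant vectors. We therefore use it only on $(\mathcal H^{H_i})^\perp$, not on all of $\mathcal H$. This is legitimate because $\rho(H_i)$ is unitary, so the orthogonal complement of $\mathcal H^{H_i}$ is itself $H_i$-invariant. The factor $\tfrac12$ comes solely from the triangle inequality in the second step.
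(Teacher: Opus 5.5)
Your proof is correct and is essentially the standard argument the paper relies on; the paper gives no proof of its own and points to \cite[Lemma 2.14]{CHHLMT24}, which goes back to Kassabov. That argument likewise projects $v$ onto the $H_i$-invariant subspace, applies the Kazhdan bound for $(H_i,S_i)$ on the orthogonal complement to get $\|v-v_i^0\|\le \epsilon/\mathcal{K}(H_i;S_i)$, and then uses the triangle inequality to bound the displacement under all of $\bigcup_i H_i$ by $2\epsilon/\inf_i\mathcal{K}(H_i;S_i)$, which is exactly where the factor $\tfrac12$ comes from.
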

For a proof see~\cite[Lemma 2.14]{CHHLMT24}
\begin{lemma}[Short product lemma~\cite{kassabov2007symmetric}]\label{lemma:short-product}
Let $S$ and $S'$ be two generating sets of a group $G$ such that every element of $S'$ can be written as a product of at most $k$ elements of $S$. Then,
\begin{equation}
    \mathcal{K}(G;S)\geq \frac{1}{k}\mathcal{K}(G;S').
\end{equation}
\end{lemma}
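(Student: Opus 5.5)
The plan is to work directly from the standard definition of the Kazhdan constant,
\[
\mathcal{K}(G;S)=\inf_{\rho}\ \inf_{\|v\|=1}\ \max_{g\in S}\ \|\rho(g)v-v\|,
\]
where $\rho$ ranges over unitary representations of $G$ without nonzero invariant vectors. Since the same class of representations appears in $\mathcal{K}(G;S)$ and $\mathcal{K}(G;S')$, it suffices to show the following: for every such $\rho$ and every unit vector $v$, there is some $g\in S$ with $\|\rho(g)v-v\|\geq \frac1k \mathcal{K}(G;S')$. Taking the infimum over $v$ and $\rho$ then gives the claim.

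First I fix $\rho$ and a unit vector $v$. By definition of $\mathcal{K}(G;S')$ there is some $s'\in S'$ with $\|\rho(s')v-v\|\geq \mathcal{K}(G;S')$. By hypothesis we can write $s'=g_1g_2\cdots g_j$ with $j\leq k$ and each $g_i\in S$. If the paper's convention allows inverses of elements of $S$ in these words, I handle that case at the end.

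The key step is a telescoping estimate. Set $h_i=g_1\cdots g_i$, with $h_0=e$. Then
\[
\|\rho(s')v-v\|\leq\sum_{i=1}^{j}\|\rho(h_i)v-\rho(h_{i-1})v\|=\sum_{i=1}^{j}\|\rho(h_{i-1})\bigl(\rho(g_i)v-v\bigr)\|=\sum_{i=1}^{j}\|\rho(g_i)v-v\|,
\]
where the last equality uses unitarity of $\rho(h_{i-1})$. Hence some $i$ satisfies $\|\rho(g_i)v-v\|\geq \mathcal{K}(G;S')/j\geq \mathcal{K}(G;S')/k$, which is exactly the required bound.

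If inverses are allowed in the words, I note that $\|\rho(g^{-1})v-v\|=\|v-\rho(g)v\|$, again by unitarity. So a letter $g_i^{-1}$ with $g_i\in S$ contributes the same displacement as $g_i$, and the argument goes through unchanged.

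I do not expect a genuine obstacle. The only point needing care is matching the precise definition of $\mathcal{K}$ used in \cite{CHHLMT24,Kas07a}: whether the infimum is over representations without invariant vectors, or over vectors orthogonal to the invariant subspace. In the second case I restrict $\rho$ to the orthogonal complement of the invariant subspace, which is itself a subrepresentation, and the same argument applies verbatim.
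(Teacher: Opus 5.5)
Your proposal is correct. It is the standard telescoping argument behind this lemma: write $s'=g_1\cdots g_j$, bound $\|\rho(s')v-v\|\leq\sum_i\|\rho(g_i)v-v\|$ using unitarity, and note that inverse letters cost the same. The paper gives no proof of its own and only cites Kassabov, whose proof is this same argument.

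One cosmetic fix: when $S'$ is infinite the maximum in the definition becomes a supremum. In that case choose $s'$ with $\|\rho(s')v-v\|\geq \mathcal{K}(G;S')-\varepsilon$ and let $\varepsilon\to 0$.
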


Note that $G_{3s} = \Gamma(\SL(3s;\F_2))$.
We now show that for each integer $s\geq 1$, there exists a generating set $\Sigma_{3s}$ of $\SL(3s;\F_2)$ such that $\calK(\SL(3s;\F_2); \Sigma_{3s})$ is lower bounded by a constant $>0$ and $\Sigma_{3s}$ has constantly many elements. 
This is precisely the statement of \cite[Theorem 8a]{Kas07a} (restated as \Cref{thm:Kas_SL}). 

We first set up some notations and facts that eventually build up to the definition of the generating set $\Sigma_{3s}$. 
Let $R$ be a unital ring that may be noncommutative. 
An \emph{elementary matrix group} over $R$ of rank $m$, denoted by $\EL(m;R)$, is the multiplicative matrix group generated by all elementary matrices 
\begin{equation}
    E_{a,b}(r) = I_m + r e_{a,b}, \quad \text{ where }1\leq a\neq b \leq m \text{ and }r\in R. 
\end{equation}
Let $\Mat(s;\F_2)$ be the matrix algebra consisting of all $s\times s$ matrices over $\F_2$. 
We have the following facts, whose proofs can be found in \cite[Section 5.1]{CHHLMT24}:
\begin{fact}
    $\SL(3s;\F_2) = \EL(3s;\F_2) = \EL(3;\Mat(s; \F_2))$. 
\end{fact}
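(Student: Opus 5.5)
We prove the two equalities $\SL(3s;\F_2)=\EL(3s;\F_2)$ and $\EL(3s;\F_2)=\EL(3;\Mat(s;\F_2))$ separately; in the second, both sides are viewed as subgroups of $3s\times 3s$ matrices over $\F_2$ via block decomposition.

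\textbf{First equality.} The inclusion $\EL(3s;\F_2)\subseteq \SL(3s;\F_2)$ holds because each $E_{a,b}(1)=I+e_{a,b}$ has determinant $1$. For the reverse inclusion we run Gaussian elimination with row-addition operations only; over $\F_2$ these are left multiplications by $E_{a,b}(1)$. Take $g\in\SL(3s;\F_2)$. For each column $j$ in turn, some entry $g_{i,j}$ with $i\ge j$ is nonzero because $g$ is invertible. If $g_{j,j}=0$, add row $i$ to row $j$ to make the pivot $1$; no row swaps are needed. Then clear every other entry of column $j$ by further row additions. This reduces $g$ to $I$, so $g$ is a product of elementary matrices. (Over $\F_2$ we also have $\GL=\SL$, which is consistent with this.)

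\textbf{Second equality.} Write $3s\times 3s$ matrices in $3\times 3$ block form with blocks in $\Mat(s;\F_2)$. A block elementary matrix $E_{a,b}(r)$ with $r\in\Mat(s;\F_2)$ and $a\neq b$ has block-unipotent, block-triangular structure, so it has determinant $1$. Hence $\EL(3;\Mat(s;\F_2))\subseteq\SL(3s;\F_2)=\EL(3s;\F_2)$.

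For the reverse inclusion, it suffices to show that each scalar elementary matrix $I+e_{p,q}$ with $p\neq q$ lies in $\EL(3;\Mat(s;\F_2))$. Write $p$ in block $a$ with local index $i$, and $q$ in block $b$ with local index $j$.
\begin{itemize}
\item If $a\neq b$, then $I+e_{p,q}=E_{a,b}(e_{i,j})$ directly, where $e_{i,j}\in\Mat(s;\F_2)$ is the matrix unit.
\item If $a=b$ (so $i\neq j$), pick a third block $c\neq a$; this is possible because there are $3\ge 2$ blocks. Apply the commutator identity
\[
[E_{a,c}(x),E_{c,a}(y)] = E_{a,a}\text{-type block } \mathrm{diag}\text{ correction},
\]
which is awkward. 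Use instead the standard Steinberg relation with two distinct off-diagonal steps through $c$:
\[
[E_{a,c}(e_{i,1}),E_{c,a}(e_{1,j})]
\]
is not elementary either. The cleaner route is to choose a further block $d\notin\{a,c\}$ and use $[E_{a,c}(x),E_{c,d}(y)]=E_{a,d}(xy)$. That handles only off-diagonal targets, so instead we treat the diagonal-block case by reduction to the first equality inside $\EL(3;\Mat(s;\F_2))$.
\end{itemize}

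Concretely, we show that $\EL(3;\Mat(s;\F_2))$ contains $\mathrm{diag}(A,A^{-1},I)$ for every $A\in\GL(s;\F_2)$, using the Whitehead lemma. The product
\[
E_{1,2}(A)\,E_{2,1}(-A^{-1})\,E_{1,2}(A)
\]
gives the block Weyl element $w(A)$, and $w(A)\,w(-I)=\mathrm{diag}(A,A^{-1})$ on blocks $1,2$. Conjugating $E_{a,b}(e_{i,j})$ with $a\neq b$ by $\mathrm{diag}(A,A^{-1},I)$-type elements, we can then carry out the block-diagonal part of the elimination. The cleanest formulation is as follows: given $g\in\SL(3s;\F_2)$, first use block row operations to reduce $g$ to block-diagonal form $\mathrm{diag}(A_1,A_2,A_3)$. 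This is possible because the relevant block pivots can be made invertible by adding other block rows, as in the scalar case. The determinant condition is vacuous over $\F_2$. We then write $\mathrm{diag}(A_1,A_2,A_3)=\mathrm{diag}(A_1,A_1^{-1},I)\cdot\mathrm{diag}(I,A_1A_2,A_3)$ and iterate, ending at $\mathrm{diag}(I,I,A_1A_2A_3)$. We then apply the Whitehead trick again across blocks $2$ and $3$. The final step uses the fact that $\GL(s;\F_2)=\SL(s;\F_2)$ is generated by elementary matrices, so the accumulated product is trivial.

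\textbf{Main obstacle.} The delicate point is the block pivoting step: ensuring an invertible block pivot exists after adding block rows. I would handle it by working with rank arguments on the block column. The column blocks $(g_{1,1},g_{2,1},g_{3,1})$ stack to a full-rank $3s\times s$ matrix, and one shows that $g_{1,1}+X g_{2,1}+Y g_{3,1}$ is invertible for some $X,Y\in\Mat(s;\F_2)$. This is a stable-range-type statement, and it holds for matrix rings over fields.
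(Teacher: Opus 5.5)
Your argument for $\SL(3s;\F_2)=\EL(3s;\F_2)$ and for the case $a\neq b$ of the second equality is fine. The case $a=b$, however, is the only nontrivial point, and it is never resolved, because you discarded the identity that settles it. Take $x=e_{i,1}$ and $y=e_{1,j}\in\Mat(s;\F_2)$ with $i\neq j$, and any block $c\neq a$. Here $r\,e_{a,b}$ denotes $r$ placed in block $(a,b)$, as in the definition of $E_{a,b}(r)$. For general $x,y$ one has
\[
[E_{a,c}(x),E_{c,a}(y)] = I+(xy+xyxy)\,e_{a,a}-yx\,e_{c,c}-xyx\,e_{a,c}+yxy\,e_{c,a}.
\]
Here $yx=e_{1,j}e_{i,1}=0$, so $xyx=yxy=xyxy=0$ as well. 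The commutator is therefore exactly $I$ plus $xy=e_{i,j}$ in block $(a,a)$, i.e.\ $I_{3s}+e_{p,q}$. Over $\F_2$ all generators are involutions, so the commutator convention is immaterial. This is just the Steinberg relation $[I+e_{p,r},I+e_{r,q}]=I+e_{p,q}$ with the auxiliary index $r$ placed in a different block, so that both factors are block-elementary. That the result is not itself of the form $E_{a,b}(r)$ does not matter: it is a word in elements of $\EL(3;\Mat(s;\F_2))$, hence lies in that group. With this, every generator of $\EL(3s;\F_2)$ lies in $\EL(3;\Mat(s;\F_2))$. The Whitehead lemma, block pivoting, and stable-range considerations then become unnecessary.

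Your fallback route does not close this gap. After block elimination and the Whitehead factorizations you are left with $\mathrm{diag}(I,I,B)$ for some $B\in\mathrm{GL}(s;\F_2)$. Writing $B$ as a product of $s\times s$ elementary matrices only expresses $\mathrm{diag}(I,I,B)$ as a product of matrices $I_{3s}+e_{p,q}$ with $p,q$ in the same block. That is precisely the case left open, so the step is circular, and ``the accumulated product is trivial'' is unjustified.

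Whitehead-type elements cannot replace it either. Consider the map sending $\mathrm{diag}(X_1,X_2,X_3)$ to the class of $X_1X_2X_3$ in the abelianization of $\mathrm{GL}(s;\F_2)$. It is a homomorphism that kills every $\mathrm{diag}(X,X^{-1},I)$ and its block permutations. Yet for $s=2$ we have $\mathrm{GL}(2;\F_2)\cong S_3$, with abelianization $\mathbb{Z}/2$. Let $\sigma$ be the $2\times 2$ swap matrix, which acts as a transposition on the three nonzero vectors of $\F_2^2$. Then $\mathrm{diag}(I,I,\sigma)$ maps to the nontrivial class. So an argument using non-invertible coefficients, such as the commutator above, is genuinely needed. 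Separately, your stable-range claim for invertible block pivots is true but only asserted.
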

\begin{fact}
    $\Mat(s;\F_2)$ can be generated by $2$ elements: 
    \begin{align}
        A = \begin{pmatrix}
            0 & 1 &   &   &   &    \\
              & 0 & 1 &   &   &    \\
              &   &   & \ddots & \ddots & \\
              &   &   &        &   0    & 1\\
            1 &   &   &   &   &   0          
        \end{pmatrix}, \quad
        B = \begin{pmatrix}
            1 & 0 &   &   &   &    \\
            0 & 0 &  &   &   &    \\
              &   & 0 &  &  & \\
              &   &   &  \ddots &       & \\
             &   &   &   & 0  &  
        \end{pmatrix} . \label{eq:matricesAB}
    \end{align}
\end{fact}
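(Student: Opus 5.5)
We read ``generated'' in the sense of a unital $\F_2$-algebra (equivalently, a ring): every element of $\Mat(s;\F_2)$ should be a finite sum of finite products of $A$ and $B$. The plan is to produce every matrix unit $e_{i,j}$ (the matrix with a single $1$ in position $(i,j)$) as a product of $A$'s and $B$'s. Since the $e_{i,j}$ span $\Mat(s;\F_2)$ additively over $\F_2$, this finishes the argument.

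First we record the action of $A$. $A$ is the permutation matrix of the cyclic shift: $Ae_j = e_{j-1}$, with indices taken mod $s$ (column $j$ of $A$ has its $1$ in row $j-1$). Hence $A^s = I$. In particular the identity lies in the generated algebra, and $A^{-1} = A^{s-1}$ is a power of $A$, so no inverses are needed. Also $B = e_{1,1}$.

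Second, we conjugate $B$ by powers of $A$ to get every diagonal matrix unit. Since $A^k e_{1,1} A^{-k} = (A^k e_1)(A^k e_1)^T = e_{1-k}e_{1-k}^T$, we obtain
\[
A^{k} B A^{s-k} = e_{1-k,1-k}.
\]
As $k$ ranges over $0,\dots,s-1$, this gives every diagonal unit $e_{j,j}$.

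Third, we move rows by multiplying a diagonal unit on the left by a power of $A$. From $A^m e_{j,j} = e_{j-m,j}$, choosing $m \equiv j-i \pmod s$ gives $e_{i,j}$. Explicitly,
\[
e_{i,j} = A^{j-i}A^{1-j}BA^{s-1+j} \quad (\text{exponents mod } s),
\]
which is a monomial in $A$ and $B$. Summing matrix units then yields every $s\times s$ matrix over $\F_2$, which completes the proof.

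There is no real obstacle here. The only care needed is the index bookkeeping: we fix the convention that $A$ sends $e_j$ to $e_{j-1}$ and track the exponents mod $s$. The edge case $s=1$ is immediate, since then $B=I=A$.
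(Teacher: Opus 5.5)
Your proposal is correct and is essentially the standard argument that the paper defers to in \cite[Section 5.1]{CHHLMT24}. There, as in your proof, $A$ is the cyclic shift with $A^s=I$ and $B=e_{1,1}$, and the monomials $A^{1-i}BA^{j-1}=e_{i,j}$ (exponents mod $s$) give every matrix unit, which span $\Mat(s;\F_2)$ additively.
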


\begin{fact}\label{fact:SL_generators}
    $\EL(3;\Mat(s;\F_2))$ can be generated by the following $14$ elements, which form the set $\Sigma_{3s}$: 
    \begin{align}
        E_{a,b}(I_s) \text{ where }1\leq a \neq b\leq 3, \quad\text{and}\quad E_{a,b}(A), \ E_{a,b}(B) \text{ where } |a-b|=1.
    \end{align}
\end{fact}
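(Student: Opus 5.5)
The plan is to show that the subgroup $H\le \EL(3;\Mat(s;\F_2))$ generated by $\Sigma_{3s}$ contains every elementary matrix $E_{a,b}(r)$ with $1\le a\neq b\le 3$ and $r\in\Mat(s;\F_2)$. Since these matrices generate $\EL(3;\Mat(s;\F_2))$ by definition, this gives the claim. For each ordered pair $(a,b)$ with $a\neq b$, I will study the set
\begin{equation}
R_{a,b}=\{\, r\in \Mat(s;\F_2) : E_{a,b}(r)\in H \,\}.
\end{equation}
The task then reduces to proving that every $R_{a,b}$ equals all of $\Mat(s;\F_2)$.

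The first step records two elementary identities, which I will verify directly.
\begin{itemize}
\item Additivity: $E_{a,b}(r)E_{a,b}(t)=E_{a,b}(r+t)$, because $e_{a,b}e_{a,b}=0$ when $a\neq b$. It also follows that $E_{a,b}(r)^{-1}=E_{a,b}(-r)=E_{a,b}(r)$ in characteristic $2$.
\item Commutator formula: for pairwise distinct $a,b,c$,
\begin{equation}
E_{a,b}(r)\,E_{b,c}(t)\,E_{a,b}(r)^{-1}\,E_{b,c}(t)^{-1}=E_{a,c}(rt).
\end{equation}
This holds over a noncommutative ring provided the order of the product $rt$ is kept, which I will check by expanding and using $e_{a,b}e_{b,c}=e_{a,c}$ and $e_{b,c}e_{a,b}=0$.
\end{itemize}
From these identities, each $R_{a,b}$ is an additive subgroup of $\Mat(s;\F_2)$, and for distinct $a,b,c$ we have $R_{a,b}\cdot R_{b,c}\subseteq R_{a,c}$.

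The second step shows that all six sets coincide. Since $I_s\in R_{a,b}$ for every pair, setting $t=I_s$ and then $r=I_s$ in $R_{a,b}\cdot R_{b,c}\subseteq R_{a,c}$ gives
\begin{equation}
R_{a,b}\subseteq R_{a,c} \qquad\text{and}\qquad R_{b,c}\subseteq R_{a,c}
\end{equation}
for all distinct $a,b,c$. Applying these inclusions to all permutations of $\{1,2,3\}$ yields chains of inclusions around every pair, for example $R_{1,2}\subseteq R_{1,3}\subseteq R_{1,2}$ and $R_{1,2}\subseteq R_{3,2}\subseteq R_{1,2}$. So all six sets equal a single set $R$. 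Consequently $R$ is closed under addition and under multiplication, since $R\cdot R=R_{1,2}\cdot R_{2,3}\subseteq R_{1,3}=R$, and it contains $I_s$. Hence $R$ is a unital subring of $\Mat(s;\F_2)$.

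The final step uses that $A$ and $B$ lie in $R_{a,b}$ for the adjacent pairs $|a-b|=1$ by the choice of $\Sigma_{3s}$, so $A,B\in R$. By the earlier fact that $\Mat(s;\F_2)$ is generated by $A$ and $B$, we get $R=\Mat(s;\F_2)$. Here "generated" must be read as generated as a unital ring (equivalently as an $\F_2$-algebra, since the only scalars are $0$ and $1$). The main point of care is exactly this interpretation, since the proof genuinely needs multiplicative closure of $R$, and that is what the commutator step supplies. If needed, I would briefly justify that fact directly: products $A^iBA^{-j}$ give all matrix units $e_{i,j}$, and their sums give everything. With $R=\Mat(s;\F_2)$, all elementary matrices lie in $H$, so $H=\EL(3;\Mat(s;\F_2))$.
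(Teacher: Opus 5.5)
Your proof is correct and complete, including the direction of the commutator identity $E_{a,b}(r)E_{b,c}(t)E_{a,b}(r)^{-1}E_{b,c}(t)^{-1}=E_{a,c}(rt)$ and the reading of "generated" as generated as a unital ring, which is what the argument needs. The paper gives no proof and defers to \cite[Section 5.1]{CHHLMT24}; your route is the standard one such a proof takes: Steinberg commutator relations to show the set $R$ is a subring, then ring generation of $\Mat(s;\F_2)$ by $A$ and $B$ via the matrix units $A^iBA^{-j}$.
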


\begin{theorem}[{\cite[Theorem 8a]{Kas07a}}]\label{thm:Kas_SL}
    For any integer $s\geq 1$, let $\Sigma_{3s}$ be the set of $14$ matrices in \Cref{fact:SL_generators}. Then
    \begin{equation}
        \calK(\SL(3s;\F_2); \Sigma_{3s}) > 1/400. 
    \end{equation}
\end{theorem}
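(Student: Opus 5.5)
The bound is Kassabov's \cite[Theorem 8a]{Kas07a}. The plan is to reprove it along Kassabov's route: \emph{first} obtain a uniform Kazhdan constant for $\EL(3;R)$ with respect to the union of its root subgroups, for an arbitrary unital ring $R$. \emph{Then} pass from root subgroups to the $14$ generators of $\Sigma_{3s}$ using \Cref{lemma:kazhdan-by-subgroups} and \Cref{lemma:short-product}, applied with $R=\Mat(s;\F_2)$ and $n=3s$.

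\textbf{Step 1 (root subgroups).} Let $X_{a,b}=\{E_{a,b}(r):r\in R\}$. I would show $\calK\bigl(\EL(3;R);\bigcup_{a\neq b}X_{a,b}\bigr)\geq c_1$ for an absolute constant $c_1$, independent of $R$. The argument is an angle argument in the style of Dymara--Januszkiewicz and Ershov--Jaikin. Any almost-invariant vector is almost invariant under each $X_{a,b}$. The commutator relations $[E_{a,b}(x),E_{b,c}(y)]=E_{a,c}(xy)$ force the subspaces of $X_{a,b}$-invariant vectors to be pairwise at angles bounded away from $0$, since the relevant pairs generate nilpotent Heisenberg-type groups. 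Averaging then gives a vector invariant under all root subgroups, with quantitative loss.

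\textbf{Step 2 (few generators per root subgroup).} I would apply \Cref{lemma:kazhdan-by-subgroups} not to the root subgroups $X_{a,b}$ themselves, which are abelian and huge, but to the subgroups $H=\EL(2;R)\ltimes R^2$ sitting inside $\EL(3;R)$. For these, the needed input is a \emph{relative} Kazhdan constant for the pair $(H,R^2)$ with respect to finitely many elements. By the short product lemma, the elements $E_{a,b}(1)$, $E_{a,b}(A)$, $E_{a,b}(B)$ suffice.

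The key idea for this step is to prove the relative property for the free (noncommutative) polynomial ring $\F_2\langle x,y\rangle$, or for a suitable quotient that still surjects onto $\Mat(s;\F_2)$ via $x\mapsto A$, $y\mapsto B$. Kazhdan constants with respect to images of generators can only increase under quotient maps, which yields uniformity in $s$. The tool here is Shalom's/Kassabov's Fourier-analytic argument on the dual of $R^2$, where the spectral measure of an almost invariant vector is pushed around by $\EL(2)$ elements $E_{12}(x)$, $E_{21}(y)$.

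\textbf{Step 3 (assembly and main obstacle).} Combining Steps 1 and 2 with \Cref{lemma:kazhdan-by-subgroups}, and absorbing commutator expressions via \Cref{lemma:short-product} to reach $E_{1,3}(A)$-type elements from generators with $|a-b|=1$, gives an explicit constant. Tracking it should produce something like $1/400$. I expect Step 2 to be the main obstacle. Controlling the relative Kazhdan constant uniformly over a ring whose additive group has unbounded width in the generators requires the noncommutative polynomial-ring version of the relative property (T) estimate. Step 1 and the bookkeeping are routine by comparison.
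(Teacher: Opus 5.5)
The paper does not prove this statement. It imports Kassabov's Theorem~8a verbatim, adding only the remark that over $\F_2$ the $28$ generators are $14$ involutions, so your first sentence matches the paper's treatment exactly. Your reconstruction has the right overall architecture: a uniform Kazhdan constant for the union of root subgroups, relative property (T) for the pair $(\EL(2;R)\ltimes R^2,R^2)$ over $\F_2\langle x,y\rangle$ pushed down to the quotients $\Mat(s;\F_2)$, and then assembly. But Step~1 does not work as argued, and it is the hard step of the route, not the routine one.

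\textbf{Opposite root subgroups.} $X_{a,b}$ and $X_{b,a}$ generate $\EL(2;R)$, which is not nilpotent. Their fixed spaces need not be at any positive angle modulo $G$-invariants. In $\ell^2(G/H)$, with $H$ the copy of $\EL(2;R)$ on indices $\{1,2\}$, take the indicator of the trivial coset minus its projection onto the $G$-invariant vectors. This vector is fixed by $X_{1,2}$ and $X_{2,1}$ but has no $G$-invariant component.

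\textbf{Heisenberg pairs.} Even here, ``angles bounded away from $0$'' plus averaging does not produce an invariant vector. Criteria of Dymara--Januszkiewicz/Ershov--Jaikin-Zapirain type need a quantitative threshold; for three subgroups, the pairwise orthogonality constants must be below $1/2$. No bound better than $1/\sqrt2$ holds uniformly in $R$. It is attained already for $R=\F_2$, via the two-dimensional irreducible representation of the Heisenberg group over $\F_2$. So for arbitrary $R$, your Step~1 is the deep theorem of Ershov--Jaikin-Zapirain(--Kassabov), and your sketch does not establish it.

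\textbf{Repairs for the rings you need.} There are two cheap fixes.
\begin{itemize}
\item $\SL(3s;\F_2)=\EL(3;\Mat(s;\F_2))$ is boundedly generated by the six block root subgroups, with a bound independent of $s$ (block Gaussian elimination over the stable-rank-one ring $\Mat(s;\F_2)$). Then \Cref{lemma:short-product} with $\mathcal{K}(G;G)\geq\sqrt2$ gives a uniform constant.
\item Alternatively, for $R=\Mat(s;\F_2)$ a nontrivial central character $z\mapsto(-1)^{\mathrm{tr}(cz)}$ makes the pairing $(a,b)\mapsto(-1)^{\mathrm{tr}(cab)}$ nondegenerate on quotients of order at least $2^{s}$. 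The Schrödinger representations then bound the Heisenberg orthogonality constant by $2^{-s/2}$. Hence the three-subgroup criterion applied to $X_{1,2},X_{2,3},X_{3,1}$ works for $s\geq 3$; these three are pairwise Heisenberg and jointly generate $\EL(3;R)$.
\end{itemize}

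\textbf{Two smaller issues.} First, \Cref{lemma:kazhdan-by-subgroups} as stated uses the absolute constants $\mathcal{K}(H_i;S_i)$. These are not bounded below here: over the free ring, $\EL(2;R)\ltimes R^2$ does not have property (T); only the pair relative to $R^2$ does. You need the relative form of the lemma. Second, nothing in your outline produces the value $1/400$. Chaining a relative-(T) constant, the factor $1/2$, a Step-1 constant and short-product losses gives only some absolute $c>0$. That suffices qualitatively, but not for the stated bound or for the paper's downstream gap $1/(2\cdot 10^8)$. For the explicit number you have to take Kassabov's estimates as given, as the paper does.
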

We remark that the original Kassabov's statement says $28$ generators. But because the field is $\F_2$, the generators are in fact involutions. So effectively it is just $14$ generators. 

It remains to show that the smallest eigenvalue is bounded away from $-1$ by a constant. 
\begin{lemma}[Bounding the smallest eigenvalue]\label{lem:lower_bound_min_eval}
    Let $G = \SL(3s;\F_2)$ and $S = \Sigma_{3s}$. 
    For any finite-dimensional unitary representation $\rho$ of $G$, 
    \begin{align}
        \lambda_{\min} \left(\frac{1}{\abs{S}} \sum_{g\in S} \rho(g)\right) \geq -1 + \frac{1}{63} ,
    \end{align}
    where $\lambda_{\min}(\cdot)$ denotes the smallest eigenvalue. 
\end{lemma}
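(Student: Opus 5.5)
The plan is to exploit the fact that every generator in $\Sigma_{3s}$ is an involution, so each $\rho(g)$ is a Hermitian unitary with eigenvalues $\pm1$. We then find a short word of \emph{odd} length in the generators that equals the identity. Intuitively, an eigenvector with eigenvalue near $-1$ would have $\rho(g)v\approx -v$ for every generator $g$. Applying an odd-length relation would then give $v=\rho(\text{word})v\approx -v$, which is a contradiction.

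\textbf{Step 1 (the odd relation).} Over $\F_2$ each elementary matrix is its own inverse, and the standard commutator identity gives $[E_{1,2}(I_s),E_{2,3}(I_s)]=E_{1,3}(I_s)$. Hence
\begin{equation}
E_{1,2}(I_s)\,E_{2,3}(I_s)\,E_{1,2}(I_s)\,E_{2,3}(I_s)\,E_{1,3}(I_s)=I_{3s}.
\end{equation}
Here I will verify the order of the factors, possibly reversing it or swapping indices, by a direct $3\times3$ block computation. This is a word of length $5$ in elements of $S$, with $E_{1,2}(I_s)$ and $E_{2,3}(I_s)$ each used twice and $E_{1,3}(I_s)$ used once.

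\textbf{Step 2 (local estimates).} Since $M=\frac1{|S|}\sum_{g\in S}\rho(g)$ is Hermitian, take a unit eigenvector $v$ with eigenvalue $-1+\delta$. Write $\langle v,\rho(g)v\rangle=-1+\delta_g$, where $\delta_g\ge0$ because $\rho(g)$ has spectrum in $\{\pm1\}$. Then $\sum_g\delta_g=14\delta$ and
\begin{equation}
\|\rho(g)v+v\|^2=2+2\langle v,\rho(g)v\rangle=2\delta_g.
\end{equation}

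\textbf{Step 3 (telescoping).} For a word $w=g_1\cdots g_k$, use unitarity and the triangle inequality inductively:
\begin{equation}
\|\rho(g_1\cdots g_k)v-(-1)^k v\|\le \|\rho(g_1\cdots g_{k-1})(\rho(g_k)v+v)\|+\|\rho(g_1\cdots g_{k-1})v-(-1)^{k-1}v\|.
\end{equation}
Iterating this bounds the left-hand side by $\sum_j\sqrt{2\delta_{g_j}}$. Applying it to the relation from Step 1, where $k=5$ and $\rho(w)=I$, gives
\begin{equation}
2=\|v+v\|\le\sqrt2\bigl(2\sqrt{\delta_{12}}+2\sqrt{\delta_{23}}+\sqrt{\delta_{13}}\bigr).
\end{equation}

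\textbf{Step 4 (Cauchy--Schwarz).} Cauchy--Schwarz with weights $(2,2,1)$ gives
\begin{equation}
\bigl(2\sqrt{\delta_{12}}+2\sqrt{\delta_{23}}+\sqrt{\delta_{13}}\bigr)^2\le 9(\delta_{12}+\delta_{23}+\delta_{13})\le 9\cdot14\,\delta.
\end{equation}
Combining with Step 3 yields $4\le 2\cdot 9\cdot 14\,\delta$, that is, $\delta\ge 1/63$, which is exactly the claimed bound.

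\textbf{Main obstacle.} The only step that needs care is identifying a correct odd relation among the generators in $\Sigma_{3s}$ with small multiplicities. The constant $1/63$ is consistent with the length-$5$ relation and weights $(2,2,1)$, so I expect the commutator relation above to be the intended one. The rest of the argument is routine.
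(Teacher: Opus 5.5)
Your proposal is correct and is essentially the paper's proof. The paper uses the same odd relation $xyxyz=I$ with $x=E_{1,2}(I_s)$, $y=E_{2,3}(I_s)$, $z=E_{1,3}(I_s)$, and the same telescoping bound $2\le 2a_x+2a_y+a_z$, where $a_g=\|(\rho(g)+I)v\|=\sqrt{2\delta_g}$ in your notation. It then applies the same Cauchy--Schwarz step with weights $(2,2,1)$ to obtain $1/63$. The factor order you planned to check is harmless: since $[E_{1,2}(I_s),E_{2,3}(I_s)]=E_{1,3}(I_s)$ and all three matrices are involutions, both $xyxy$ and $yxyx$ equal $z$.
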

\begin{proof}
    Let $x=E_{1,2}(I_s)$, $y=E_{2,3}(I_s)$, and $z=E_{1,3}(I_s)$. Since $x,y,z$ are involutions and $[x,y] = z$, we have that $xyxyz = I_{3s}$ over $\F_2$. Set $A = \frac{1}{\abs{S}} \sum_{g\in S} \rho(g)$. Since $S$ is closed under taking the inverses, $A$ is Hermitian. 
    Suppose that $Av = \lambda v$ for some unit vector $v$ and $\lambda\in \R$. 
    Our goal is to show that $\lambda \geq -1+\frac{1}{63}$. 
    
    For any $g\in S$, write $a_g = \norm{(\rho(g)+I)v}$. Telescoping along the relation $xyxyz=e$ gives
    
    \begin{align}
        2 &= \norm{(\rho(xyxyz) + I)v} \notag \\
        &= \norm{(\rho(xyxyz) + \rho(xyxy) - \rho(xyxy) -  \rho(xyx) + \rho(xyx) + \rho(xy) - \rho(xy) -\rho(x) + \rho(x) + I)v} \notag\\
        &\leq 2a_x + 2a_y + a_z \tag{triangle inequality} \\
        &\leq \sqrt{4 + 4 + 1}\sqrt{a_x^2 + a_y^2 + a_z^2} \tag{Cauchy--Schwarz} \\
        &\leq 3\sqrt{\sum_{g\in S} a_g^2} \tag{$x,y,z\in S$}. 
    \end{align}
    On the other hand,
    \begin{equation}
        \sum_{g\in S} a_g^2 = \sum_{g\in S} \norm{(\rho(g) + I)v}^2 = \sum_{g\in S} v^\dagger (2 I + \rho(g) + \rho(g)^\dagger) v = 2\abs{S} \cdot (1+\lambda)
    \end{equation}
    Since $\abs{S} = 14$, then $\lambda \geq (2/3)^2/28 - 1 = -1 + 1/63$. 
\end{proof}

We now have all three ingredients to prove \Cref{thm:gap_from_kas}. 
\begin{proof}[Proof of \Cref{thm:gap_from_kas}]
    For each integer $s\geq 1$, we first show that the unitaries $U_3, \ldots, U_{16}$ as defined in \Cref{thm:expander_gap} are exactly the operator in $\Gamma(\Sigma_{3s})$. 

    Recall that each element in $\SL(3s;\F_2)$ under the representation $\Gamma$ gives rise to a product of CNOT gates on $3s$ qubits. 
    Suppose we index the qubits by $1,2,\ldots, 3s$. 
    Our goal to find the circuits for each of the generators in $\Sigma_{3s}$. 
    Let us start with the easiest one: 
    $E_{a,b}(B)$ is the identity matrix plus a sole $1$ at position $((a-1)s+1, (b-1)s+1)$, the top left entry in the $(a,b)$ block. So
    \begin{equation}
        \Gamma(E_{a,b}(B)) = \CNOT_{(b-1)s+1,(a-1)s+1}. 
    \end{equation}
    Similarly, each $E_{a,b}(I_s)$ and $E_{a,b}(A)$ 
    is a product of $s$ CNOT gates with non-overlapping targets and controls, i.e.\ a depth-$1$ circuit. More specifically,
    \begin{align}
        \Gamma(E_{a,b}(I_s)) &= \CNOT_{(b-1)s + 1, (a-1)s+1}\CNOT_{(b-1)s + 2,(a-1)s+2}\cdots \CNOT_{bs,as}, \\
        \Gamma(E_{a,b}(A)) &= \CNOT_{(b-1)s+2, (a-1)s+1}\CNOT_{(b-1)s + 3, (a-1)s+2}\cdots \CNOT_{bs, as-1}\CNOT_{(b-1)s + 1, as}.  
    \end{align}
    Now, if we index the $3s$ qubits by $1, 4, \ldots, 3s-2, 2, 5, \ldots, 3s-1, 3, 6, \ldots, 3s$ instead of $1,2,\ldots, 3s$, then these circuits become exactly $U_3, \ldots, U_{16}$. 

    For any finite-dimensional unitary representation $\rho$ of $G_{3s}$,
    we have that 
    \begin{equation}
        M \coloneqq \frac{1}{14}\sum_{i=3}^{16} \rho(U_i) - \E_{\bP\sim \mu(G_{3s})} \rho(\bP) = \frac{1}{14}\sum_{g\in \Sigma_{3s}} \rho(\Gamma(g)) - \E_{\bA\sim \mu(\SL(3s;\F_2))} \rho(\Gamma(\bA)).
    \end{equation}
    By \Cref{lem:kazhdan_gaps_relation,thm:Kas_SL}, we have that
    \begin{equation}
        \lambda_{\max}(M) \leq 1 - \frac{\calK(\SL(3s;\F_2);\Sigma_{3s})^2}{2\cdot 14} \leq 1 - \frac{1}{400^2\cdot 28} . 
    \end{equation}
    It follows from \Cref{lem:lower_bound_min_eval} that $\lambda_{\min}(M)\geq -1 +\frac{1}{63}$, which completes the proof. 
\end{proof}

\subsection{Augment generating sets of the group of CNOTs to quantum expanders}

Denote by $\Sym(N)$ the symmetric group of order $N$. Then $G_n$, the group generated by all CNOT gates on $n$ qubits, is a subgroup of $\Sym(2^n)$.
Below is the main theorem of this subsection.
\begin{theorem}\label{thm:expander_CNOTs_to_unitary}
    Let $n\geq 2$. 
    Suppose that $P_1, \ldots, P_m\in G_n$ satisfy that
    \begin{equation}\label{eq:CNOTs_moment_two}
        \norm{\frac{1}{m} \sum_{i=1}^m P_i \otimes P_i - \E_{\bP \sim \mu(G_n)}  \bP \otimes \bP} \leq 1-\epsilon. 
    \end{equation}
    Let $P_{m+1} = X\otimes I_{2^{n-1}}$ and $P_{m+2} = Z \otimes I_{2^{n-1}}$. Then
    $\{P_1, \ldots, P_{m+2}\}$ forms a quantum expander on $n$ qubits with a spectral gap of at least $\frac{\min(m\epsilon, 2)}{24(m+2)}$. 
\end{theorem}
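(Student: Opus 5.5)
The plan is to split the averaging operator into its $G_n$ part and its Pauli part, and to show that the Pauli part pushes every traceless $G_n$-invariant vector out of the invariant subspace. Each $P_i$ is a real permutation matrix, so $\overline{P_i}=P_i$. The operator to bound is therefore
\[
T=\frac{1}{m+2}\sum_{i=1}^{m+2}P_i\otimes P_i=\frac{m}{m+2}A+\frac{2}{m+2}B,
\]
where $A=\frac1m\sum_{i\le m}P_i\otimes P_i$ and $B=\frac12(X_1\otimes X_1+Z_1\otimes Z_1)$. I will work in the operator picture $|Y\rangle\leftrightarrow Y$, in which $P\otimes\overline P$ acts as $Y\mapsto PYP^\dagger$. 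The task is to show $\|Tv\|\le(1-\Delta)\|v\|$ for every $v\perp|\Phi\rangle$, that is, for every traceless $Y$.

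\textbf{Step 1: the invariant subspace.} Let $W$ be the $G_n$-invariant subspace of $\mathbb C^{2^n}\otimes\mathbb C^{2^n}$, with projector $\Pi=\E_{\bP}\,\bP\otimes\bP$. Invariant vectors are spanned by indicators of $\SL(n;\F_2)$-orbits on pairs $(x,y)\in\F_2^n\times\F_2^n$. There are five orbits for $n\ge2$: $(0,0)$; $(x,0)$ with $x\ne0$; $(0,y)$ with $y\ne0$; $(x,x)$ with $x\ne0$; and $(x,y)$ with $x\ne y$ both nonzero. In operator language,
\[
W=\mathrm{span}\{\,|0\rangle\langle0|,\ I,\ J,\ |0\rangle\langle u|,\ |u\rangle\langle0|\,\},
\]
where $u$ is the all-ones vector and $J=|u\rangle\langle u|$. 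Since $A$ commutes with $\Pi$, it fixes $W$ pointwise and preserves $W^\perp$. By \cref{eq:CNOTs_moment_two}, $\|Ap\|\le(1-\epsilon)\|p\|$ for all $p\in W^\perp$. The vector $|\Phi\rangle\propto I$ lies in $W$, so the traceless vectors decompose as $(W\cap\Phi^\perp)\oplus W^\perp$, and $W\cap\Phi^\perp$ is $4$-dimensional.

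\textbf{Step 2: a uniform Pauli bound on $W\cap\Phi^\perp$.} I will show there is an absolute constant $c>0$, independent of $n$, such that $\|BY-Y\|\ge c\|Y\|$ for every traceless $Y\in W$. Because conjugation is unitary, this is equivalent to $\mathrm{Re}\langle Y,BY\rangle\le(1-c^2/2)\|Y\|^2$.

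Qualitatively this holds because an operator fixed by conjugation by both $X_1$ and $Z_1$ has the form $I_2\otimes M$. The only element of $W$ of this form is $I$:
\begin{itemize}
\item $Z_1$ does not fix $J$, $|0\rangle\langle u|$ or $|u\rangle\langle0|$, since it flips signs on half of $u$;
\item $X_1$ does not fix $|0\rangle\langle0|$, mapping it to $|e_1\rangle\langle e_1|$.
\end{itemize}
To make this uniform in $n$, I will orthonormalize the five spanning operators, which have norms $1$, $2^{n/2}$, $2^n$, $2^{n/2}$, $2^{n/2}$ and small overlaps. I will then compute the $4\times4$ Gram matrix of $\langle Y,X_1YX_1\rangle$ and $\langle Y,Z_1YZ_1\rangle$ on the traceless part. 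Its entries are explicit functions of $2^{-n}$ that converge as $n\to\infty$, and the limiting form has top eigenvalue strictly below $1$. For example, $Z_1JZ_1$ is orthogonal to $J$, and $\langle 0|\cdot|0\rangle$ is orthogonal to $|e_1\rangle\langle e_1|$. Checking small $n\ge2$ separately then gives a constant $c$ valid for all $n$. \emph{This explicit, $n$-uniform computation is the main obstacle}, and it is where the constant $24$ has to come out.

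\textbf{Step 3: combining the two parts.} Take a unit traceless $v=w+p$ with $w\in W\cap\Phi^\perp$ and $p\in W^\perp$, and suppose $\|Tv\|\ge1-\Delta$.
\begin{itemize}
\item By the triangle inequality, $1-\Delta\le\frac{m}{m+2}\|Av\|+\frac{2}{m+2}\|Bv\|$, and both norms are at most $1$. Hence $\|Av\|\ge1-\frac{m+2}{m}\Delta$ and $\|Bv\|\ge1-\frac{m+2}{2}\Delta$.
\item Since $\|Av\|^2=\|w\|^2+\|Ap\|^2\le1-(2\epsilon-\epsilon^2)\|p\|^2$, the first bound forces $\|p\|^2\lesssim\frac{(m+2)\Delta}{m\epsilon}$.
\item $\|Bv\|$ close to $1$ makes $v$ nearly fixed by the average of two unitaries. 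Strict convexity then gives $\|X_1vX_1-v\|,\ \|Z_1vZ_1-v\|\lesssim\sqrt{(m+2)\Delta}$. Transferring this to $w$ at cost $2\|p\|$ and applying Step 2 yields $c\|w\|\lesssim\sqrt{(m+2)\Delta}+\|p\|$.
\item Since $\|w\|^2+\|p\|^2=1$, the last two bounds force $\Delta\gtrsim\frac{\min(m\epsilon,2)}{m+2}$, up to the absolute constant coming from $c$.
\end{itemize}
Tracking constants carefully, in the two regimes $m\epsilon\le2$ and $m\epsilon>2$, should give the stated $\frac{\min(m\epsilon,2)}{24(m+2)}$.
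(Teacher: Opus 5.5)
\textbf{The gap is in the third bullet of Step 3.} From $\|Bv\|\ge 1-\frac{m+2}{2}\Delta$ you cannot infer that $v$ is nearly fixed by $B$. The operator $B=\frac{1}{2}(X_1\otimes X_1+Z_1\otimes Z_1)$ is Hermitian with eigenvalues $1,0,-1$, and its $-1$ eigenspace contains traceless vectors. For example, let $v$ be the vectorization of $XZ\otimes I_{2^{n-1}}$, which anticommutes with both $X_1$ and $Z_1$. Then $Bv=-v$, so $\|Bv\|=\|v\|$, while $\|X_1vX_1-v\|=2\|v\|$. Strict convexity applied to $\frac{1}{2}(X_1vX_1+Z_1vZ_1)$ only gives $X_1vX_1\approx Z_1vZ_1$, not that each is close to $v$.

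Step 2 cannot repair this after the fact. It only bounds $\mathrm{Re}\langle Y,BY\rangle$ from above, and a vector split evenly between the $\pm1$ eigenspaces of $B$ has $\langle w,Bw\rangle=0$ but $\|Bw\|=\|w\|$. The real problem is that the triangle inequality on $\frac{m}{m+2}Av+\frac{2}{m+2}Bv$ discards the fact that $Av$ and $Bv$ must be nearly parallel for the sum to have norm close to $1$.

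\textbf{How to repair it.} Use the identity $\|p_0a+qb\|^2=p_0\|a\|^2+q\|b\|^2-p_0q\|a-b\|^2$ with $p_0=\frac{m}{m+2}$ and $q=\frac{2}{m+2}$. This gives $\|Av-Bv\|^2\le 2\Delta/(p_0q)$, alongside your (correct) bound on $\|p\|$. Since $Av=w+Ap$, you get $\|Bw-w\|\le 2\|p\|+\|Av-Bv\|$. Now your Step 2 bound $\|Bw-w\|\ge c\|w\|$ does yield a contradiction, with the right order $\Omega(\min(m\epsilon,2)/(m+2))$. Your Step 1 and the bound on $\|p\|$ are fine and agree with the paper.

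\textbf{Comparison with the paper.} This alignment argument is exactly what the paper obtains by invoking Harrow--Hastings' Lemma 1 (\Cref{lem:kill_overlap}). It takes $A=\frac{1}{m}\sum_iP_i\otimes P_i-|b_1\rangle\langle b_1|$, $B=\frac{1}{2}(X_1\otimes X_1+Z_1\otimes Z_1)-|b_1\rangle\langle b_1|$, $\Pi$ the projector onto the four traceless invariant directions, and $p=\frac{m}{m+2}$.

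The constant $24=12\cdot 2$ comes from that lemma's factor $\frac{1}{12}$ combined with $\epsilon_B\ge\frac{1}{2}$. It does not come out of the Step 2 computation. Run with the crude estimates above, the repaired argument gives the correct order but an absolute constant smaller than $\frac{1}{24}$. So your claim that careful bookkeeping ``should give'' $24$ is unsupported as written.

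For Step 2 itself, the paper avoids a limit-plus-small-$n$ argument, which would need explicit error control to be uniform. Instead it solves $\det(M^\dagger HM-\lambda M^\dagger M)=0$ exactly. The compression of $H=X_1\otimes X_1+Z_1\otimes Z_1$ to the traceless invariant subspace has eigenvalues $-\frac{2}{N-1}$ and $\frac{N-2}{N-1}$, each with multiplicity two. Hence $\langle Y,BY\rangle\le\frac{N-2}{2(N-1)}\|Y\|^2<\frac{1}{2}\|Y\|^2$ for every $n\ge 2$, which gives $\epsilon_B\ge\frac{1}{2}$ uniformly.
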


\begin{remark}
The condition in \cref{eq:CNOTs_moment_two} concerns only the
representation \(P\mapsto P^{\otimes 2}\), whereas
\Cref{thm:gap_from_kas} provides a uniform spectral gap for every
finite-dimensional representation of \(G_n\). Thus,
\Cref{thm:gap_from_kas} is stronger than necessary for
\Cref{thm:expander_CNOTs_to_unitary}, and we do not know whether the
required bound for \(P\mapsto P^{\otimes 2}\) admits a simpler proof.
Its full strength is nevertheless useful in
\Cref{appendix:unitary}, where we construct quantum tensor product
expanders at all moments.

The augmentation argument above does not directly extend to moments
\(t>1\). Already for \(t=2\), the relevant representation of the CNOT
group is \(P\mapsto P^{\otimes 4}\). The affine permutation group
generated by Pauli \(X\) and CNOT gates is an exact permutation
\(3\)-design~\cite[Proposition 2.15]{HRT25} but not a permutation \(4\)-design. Consequently, adjoining only the
single-qubit Pauli \(X\) and \(Z\) gates does not suffice to repeat the
preceding argument at higher moments. The construction in
\Cref{appendix:unitary} instead first passes to the Clifford group and
then adjoins a T gate.
\end{remark}

\begin{lemma}
    The set of $2^n \times 2^n$ matrices that commute with every CNOT gate is given by
    \begin{equation}
        \Span \{I_{2^n}, \ketbra{+^n}{+^n}, \ketbra{0^n}{0^n}, \ketbra{0^n}{+^n}, \ketbra{+^n}{0^n}\}. 
    \end{equation}
\end{lemma}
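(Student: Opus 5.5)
Since every element of $G_n$ is a permutation matrix that fixes $0^n$, a matrix $M$ commutes with all CNOT gates exactly when $M_{g x, g y} = M_{x,y}$ for all $g\in G_n=\Gamma(\SL(n;\F_2))$ and $x,y\in\F_2^n$. So the commutant is spanned by the indicator matrices of the orbits of $\SL(n;\F_2)$ acting diagonally on pairs $(x,y)\in \F_2^n\times\F_2^n$. The plan is to classify these orbits and then match their indicators with the span of the five listed matrices. Here I assume $n\geq 2$, as in \Cref{thm:expander_CNOTs_to_unitary}.

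\textbf{Step 1: orbit classification.} $\SL(n;\F_2)=\mathrm{GL}(n;\F_2)$ acts transitively on ordered linearly independent tuples of any length $k\le n$, since any such tuple extends to a basis and any two bases are related by an invertible matrix. For a pair $(x,y)$ the only linear relations that can hold are among the five cases
\[
x=y=0;\qquad x=0\neq y;\qquad y=0\neq x;\qquad x=y\neq 0;\qquad x,y \text{ independent},
\]
so these are exactly the orbits. The case $x,y$ nonzero with $x\neq y$ is the independent case, because over $\F_2$ two distinct nonzero vectors are independent. For $n\ge 2$ all five orbits are nonempty, so the commutant has dimension exactly $5$.

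\textbf{Step 2: expressing the orbit indicators.} Write $J=2^n\ketbra{+^n}{+^n}$ for the all-ones matrix, $u=2^{n/2}\ket{+^n}$ for the all-ones vector, and $e_0=\ket{0^n}$. The orbit indicators are then:
\begin{itemize}
\item $O_1=e_0e_0^T$ for $x=y=0$;
\item $O_2=e_0u^T-e_0e_0^T$ for $x=0\neq y$;
\item $O_3=ue_0^T-e_0e_0^T$ for $y=0\neq x$;
\item $O_4=I-e_0e_0^T$ for $x=y\neq 0$;
\item $O_5=J-O_1-O_2-O_3-O_4$ for the independent pairs.
\end{itemize}
Each of these is a linear combination of $I$, $\ketbra{+^n}{+^n}$, $\ketbra{0^n}{0^n}$, $\ketbra{0^n}{+^n}$ and $\ketbra{+^n}{0^n}$.

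\textbf{Step 3: conclusion.} The five listed matrices are linearly independent for $n\ge2$. This follows from the orbit decomposition: the change of basis between the $O_i$ and the five matrices is triangular-invertible. It can also be checked directly by evaluating matrix entries at $(0,0)$, $(0,y)$, $(x,0)$, $(x,x)$ and $(x,y)$. Hence the commutant equals their span.

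The only point needing care is the orbit classification, specifically the $\F_2$-specific fact that distinct nonzero vectors are independent, together with transitivity on independent pairs. I expect no real obstacle beyond this.
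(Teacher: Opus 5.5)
Your proposal is correct and follows essentially the same route as the paper. Both arguments reduce to matrices constant on the orbits of $\SL(n;\F_2)$ acting diagonally on pairs $(x,y)$, identify the same five orbits via transitivity on nonzero vectors and on independent pairs, and rewrite the orbit indicators in the stated span. Your explicit formulas for the indicators $O_1,\ldots,O_5$ and the linear-independence check fill in details that the paper only asserts.
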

\begin{proof}
    Recall that the representation $\Gamma$ of $\SL(n;\F_2)$ given by
    \begin{equation}
        \Gamma(g) \ket{x} = \ket{gx}, \quad \text{ for every }g\in \SL(n;\F_2), \ x\in \F_2^n,
    \end{equation}
    satisfies that $\Gamma(\SL(n;\F_2))=G_n$. Therefore, a matrix $M$ commutes with every CNOT gate if and only if
    \begin{equation}
        M\cdot \Gamma(g) = \Gamma(g)\cdot M, \quad \text{ for every }g\in \SL(n;\F_2). 
    \end{equation}
    Write $M = \sum_{x,y\in \F_2^n} M_{x,y} \ketbra{x}{y}$. It is further equivalent to 
    \begin{align}
        \sum_{x,y\in \F_2^n} M_{x,y} \ketbra{x}{y} &= \sum_{x,y\in \F_2^n} M_{x,y} \Gamma(g)\ketbra{x}{y}\Gamma(g)^\dagger \\
        &= \sum_{x,y\in \F_2^n} M_{x,y} \ketbra{gx}{gy} \\
        &= \sum_{x,y\in \F_2^n} M_{gx,gy} \ketbra{x}{y}, \quad \text{ for every }g\in \SL(n;\F_2). 
    \end{align}     
    This means that each matrix element $M_{x,y}$ can only depend on which orbit the pair $(x,y)$ belongs to under the action 
    \begin{equation}
        (x, y)\mapsto (gx,gy). 
    \end{equation}
    Thus the commutant of $G_n$ is the set of matrices whose entries are constant on each orbit of $\SL(n;\F_2)$ acting on ordered pairs $(x,y)$. 
    We now show that the orbits of $\SL(n;\F_2)$ acting on $\F_2^n \times \F_2^n$ are
    \begin{itemize}
        \item $(0,0)$
        \item $(0,x)$ where $x\neq 0$
        \item $(x,0)$ where $x\neq 0$
        \item $(x,x)$ where $x\neq 0$
        \item $(x,y)$ where $x,y\neq 0$ and $x\neq y$
    \end{itemize}
    These $5$ sets are disjoint and their union is $\F_2^n \times \F_2^n$. 
    Each of the first $4$ sets is an orbit since $\SL(n;\F_2)$ acts transitively on $\F_2^n\setminus \{0\}$. 
    For any $(x,y)$ where $x,y\neq 0$ and $x\neq y$, because $x$ and $y$ are linearly independent over $\F_2$, 
    there always exists $g\in \SL(n;\F_2)$ such that $gx = e_1$ and $gy = e_2$ (think in terms of Gaussian elimination). This shows that the last set is an orbit.

    Therefore, the commutant of $G_n$ is the linear span of 
    \begin{align}
        A_1 = \ketbra{0^n}, \quad A_2 = \sum_{x\in \{0,1\}^n, x\neq 0^n} \ketbra{0^n}{x}, \quad A_3 = \sum_{x\in \{0,1\}^n, x\neq 0} \ketbra{x}{0^n}, \\
        A_4 = \sum_{x\in \{0,1\}^n, x\neq 0^n} \ketbra{x}{x}, \quad \text{and}\quad A_5 = \sum_{x,y\in \{0,1\}^n, x,y\neq 0, x\neq y} \ketbra{x}{y},
    \end{align}
    which is the same as 
    \begin{equation}
        \Span \{I_{2^n}, \ketbra{+^n}{+^n}, \ketbra{0^n}{0^n}, \ketbra{0^n}{+^n}, \ketbra{+^n}{0^n}\}.
    \end{equation}
    This completes the proof. 
\end{proof}

\begin{lemma}[{\cite[Lemma 1]{HH08}}]
    \label{lem:kill_overlap}
    Let $\Pi$ be a projector and $A, B$ be Hermitian operators with $\norm{A}, \norm{B}\leq 1$. Suppose there exist $\epsilon_A, \epsilon_B\in (0,1)$ such that
    \begin{itemize}
        \item $\Pi A = A\Pi = \Pi$
        \item $\norm{(I - \Pi) A(I-\Pi)} \leq 1-\epsilon_A$
        \item $\norm{\Pi B\Pi}\leq 1-\epsilon_B$.
    \end{itemize}
    Then for any $p\in (0,1)$, we have that
    \begin{equation}
        \norm{pA + (1-p)B}\leq 1 - \frac{\epsilon_B}{12} \min (p\cdot \epsilon_A, 1-p). 
    \end{equation}
\end{lemma}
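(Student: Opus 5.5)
Since $M:=pA+(1-p)B$ is Hermitian, $\norm{M}=\sup_{\norm{v}=1}\abs{\langle v,Mv\rangle}$, so I will bound the quadratic form from above and below for an arbitrary unit vector. I decompose $v=\alpha u+\beta w$ with $u=\Pi u$, $w=(I-\Pi)w$ unit vectors and $\abs{\alpha}^2+\abs{\beta}^2=1$. The hypothesis $\Pi A=A\Pi=\Pi$ gives $A=\Pi+(I-\Pi)A(I-\Pi)$. Hence
\[
\langle v,Av\rangle=\abs{\alpha}^2+\abs{\beta}^2\langle w,Aw\rangle,
\]
and therefore
\[
\langle v,Av\rangle\le 1-\epsilon_A\abs{\beta}^2,\qquad \langle v,Av\rangle\ge \abs{\alpha}^2-\abs{\beta}^2 .
\]
So $A$ contributes a penalty $p\epsilon_A\abs{\beta}^2$ on the upper side. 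On the lower side it contributes a gain of $2p\abs{\alpha}^2$ above $-p$, since $\abs{\alpha}^2-\abs{\beta}^2=-1+2\abs{\alpha}^2$.

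For $B$ I expand
\[
\langle v,Bv\rangle=\abs{\alpha}^2 c+2\,\mathrm{Re}\bigl(\bar\alpha\beta\langle u,Bw\rangle\bigr)+\abs{\beta}^2\langle w,Bw\rangle,
\]
where $c=\langle u,Bu\rangle\in[-(1-\epsilon_B),1-\epsilon_B]$ by the third hypothesis. The key point, which avoids losing a factor of $\epsilon_B$, is to control the cross term sharply. Since $\norm{Bu}\le 1$, I have $\norm{(I-\Pi)Bu}^2\le 1-\norm{\Pi Bu}^2\le 1-c^2$. Hence $\abs{\langle u,Bw\rangle}\le\sqrt{1-c^2}\le\sqrt{2\epsilon_B}$ whenever $\abs{c}$ is close to $1-\epsilon_B$.

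This gives
\[
\langle v,Bv\rangle\le \abs{\alpha}^2c+2\abs{\alpha}\abs{\beta}\sqrt{1-c^2}+\abs{\beta}^2 ,
\]
which is the quadratic form of the $2\times 2$ matrix $\begin{pmatrix} c & \sqrt{1-c^2}\\ \sqrt{1-c^2} & 1\end{pmatrix}$ evaluated on $(\abs{\alpha},\abs{\beta})$. I will bound this carefully, possibly just by elementary estimates. The target is
\[
\langle v,Bv\rangle\le 1-\Omega(\epsilon_B)\quad\text{unless } \abs{\beta}^2\gtrsim \epsilon_B ,
\]
with the deficit roughly $\epsilon_B\abs{\alpha}^2-2\abs{\beta}\sqrt{2\epsilon_B}$.

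Then I split into two cases with a threshold $\abs{\beta}^2\ge \delta$ for $\delta=\epsilon_B/C$ and a suitable constant $C$. If $\abs{\beta}^2\ge\delta$, the $A$ term alone gives $\langle v,Mv\rangle\le 1-p\epsilon_A\epsilon_B/C$, using the trivial bound $\langle v,Bv\rangle\le 1$. If $\abs{\beta}^2<\delta$, then $\abs{\alpha}^2\ge 1-\delta$. The cross term is at most $2\sqrt{\delta}\sqrt{2\epsilon_B}=O(\epsilon_B/\sqrt C)$, so $\langle v,Bv\rangle\le 1-\epsilon_B/2$ once $C$ is large. This gives $\langle v,Mv\rangle\le 1-(1-p)\epsilon_B/2$.

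The lower bound $\langle v,Mv\rangle\ge -1+\dots$ is handled symmetrically by applying the same $B$-estimate to $-B$, which satisfies the same hypotheses. Here the $A$ term is even more favourable: when $\abs{\beta}$ is small, $\langle v,Av\rangle\ge -1+2\abs{\alpha}^2\approx 1$, which is far from $-1$.

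Choosing $C$ so that the constants combine to $1/12$ yields $\norm{M}\le 1-\frac{\epsilon_B}{12}\min(p\epsilon_A,1-p)$. The main obstacle is the cross-term estimate. A naive bound $\abs{\langle u,Bw\rangle}\le 1$ would only give a gap of order $\epsilon_B^2$. The $\sqrt{1-c^2}$ refinement is what makes the threshold $\delta\sim\epsilon_B$, rather than $\epsilon_B^2$, sufficient. I expect the case where $\abs{c}$ is much smaller than $1-\epsilon_B$ to be easy, since there the diagonal term already supplies the deficit.
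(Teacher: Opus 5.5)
The paper does not prove this lemma; it quotes it from \cite[Lemma~1]{HH08}, so there is no in-paper argument to compare against. Your route (quadratic form, split $v=\alpha u+\beta w$, threshold on $\abs{\beta}^2$) gives a valid self-contained proof once two steps are repaired. Your cross-term bound $\abs{\langle u,Bw\rangle}\le\norm{(I-\Pi)Bu}\le\sqrt{1-c^2}$ is exactly what lets a threshold $\delta\sim\epsilon_B$, rather than $\epsilon_B^2$, suffice.

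The first repair concerns the lower bound. You record $\langle v,Av\rangle\ge\abs{\alpha}^2-\abs{\beta}^2$, which throws away $\epsilon_A$. Take $v\in\mathrm{range}(I-\Pi)$, so $\alpha=0$. Then this estimate together with $\langle v,Bv\rangle\ge-1$ gives only $\langle v,Mv\rangle\ge-1$. Applying your $B$-estimate to $-B$ does not help, since that estimate only bites when $\abs{\beta}$ is small. Keep $\langle w,Aw\rangle\ge-(1-\epsilon_A)$ instead. Then $\langle v,Av\rangle\ge-1+2\abs{\alpha}^2+\epsilon_A\abs{\beta}^2\ge-1+\epsilon_A$, so $\lambda_{\min}(pA+(1-p)B)\ge-1+p\epsilon_A$. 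This already beats the required bound, so the lower side needs no case analysis at all.

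The second repair concerns the small-$\abs{\beta}$ case. The inequality $\sqrt{1-c^2}\le\sqrt{2\epsilon_B}$ holds only for $\abs{c}\ge\sqrt{1-2\epsilon_B}$. Your dichotomy between `$\abs{c}$ near $1-\epsilon_B$' and `$\abs{c}$ much smaller' therefore leaves the intermediate range unproved. You can handle all $c$ at once:
\begin{itemize}
\item For $c<0$, your bound is at most $2\sqrt{\delta}+\delta$.
\item For $c=1-t$ with $t\in[\epsilon_B,1]$, use $1-c^2\le 2t$, $\abs{\alpha}^2+\abs{\beta}^2=1$, and $\delta=\epsilon_B/C\le t/C$. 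This gives
\[
\langle v,Bv\rangle\le 1-(1-\delta)t+2\sqrt{2\delta t}\le 1-t\bigl(1-\delta-2\sqrt{2/C}\bigr).
\]
\end{itemize}

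You also cannot take $C$ large and still land on $1/12$, because the large-$\abs{\beta}$ case only yields a deficit $p\epsilon_A\epsilon_B/C$. Take $C=12$. Then $1-\tfrac{1}{12}-\tfrac{2}{\sqrt{6}}>0.1>\tfrac{1}{12}$ and $2\sqrt{\delta}+\delta<0.67$. So the small-$\abs{\beta}$ case gives $\langle v,Bv\rangle\le 1-\epsilon_B/12$, a deficit of about $0.1\,\epsilon_B$ rather than the $\epsilon_B/2$ you anticipated. Combining this with the other case yields exactly $1-\frac{\epsilon_B}{12}\min(p\epsilon_A,1-p)$, with little room to spare.
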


We are now ready to prove \Cref{thm:expander_CNOTs_to_unitary}. 

\begin{proof}[Proof of \Cref{thm:expander_CNOTs_to_unitary}]
    Let $N=2^n$. Note that $I_{N}$ is the only operator (up to scalar) that commutes with every element in $\U(N)$. 
    Let 
    \begin{equation}
        \ket{a_1} = \sum_{i\in [N]}\ket{i}\!\ket{i}. 
    \end{equation}
    Then
    \begin{equation}\label{eq:op_avg_unitary}
        \E_{\bU \sim \mu(\U(2^n))}  \bU \otimes \overline{\bU} = \frac{\ketbra{a_1}}{\norm{\ket{a_1}}^2}. 
    \end{equation}
    The set of $2^n \times 2^n$ matrices that commute with every CNOT gate is given by
    \begin{equation}
        \Span \{I_{2^n}, \ketbra{+^n}{+^n}, \ketbra{0^n}{0^n}, \ketbra{0^n}{+^n}, \ketbra{+^n}{0^n}\}. 
    \end{equation}
    Let $\ket{a_2} = \ket{+^n}\!\ket{+^n}$, $\ket{a_3} =  \ket{0^n}\!\ket{0^n}$, $\ket{a_4} = \ket{0^n}\!\ket{+^n}$ and $\ket{a_5} = \ket{+^n}\!\ket{0^n}$.
    Suppose that $\{\ket{b_i}\}_{i=1}^5$ is the orthonormal set of vectors obtained after applying the Gram-Schmidt process to $\{\ket{a_i}\}_{i=1}^5$. 
    Then $\ket{b_1} = \frac{\ket{a_1}}{\norm{\ket{a_1}}}$ and 
    \begin{equation}\label{eq:op_avg_CNOTs}
        \E_{\bP \sim \mu(G_n)}  \bP \otimes \bP = \sum_{i=1}^5 \ketbra{b_i}. 
    \end{equation}

    Let
    \begin{equation}
        \Pi = \E_{\bP \sim \mu(G_n)}  \bP \otimes \bP - \E_{\bU \sim \mu(\U(2^n))}  \bU \otimes \overline{\bU} = \sum_{i=2}^5 \ketbra{b_i}. 
    \end{equation}
    We will show that when $n\geq 2$, 
    \begin{equation}\label{eq:overlap_pi}
        \norm{\Pi\cdot (X_1 \otimes X_1 + Z_1 \otimes Z_1)\cdot \Pi} = \frac{N-2}{N-1}. 
    \end{equation}
    This will then complete the proof of \Cref{thm:expander_CNOTs_to_unitary} by applying \Cref{lem:kill_overlap} with $A = \frac{1}{m} \sum_{i=1}^m P_i \otimes P_i - \ketbra{b_1}$, $B = \frac12(X_1 \otimes X_1 + Z_1 \otimes Z_1) - \ketbra{b_1}$, $p = \frac{m}{m+2}$, $\epsilon_A = \epsilon$, and $\epsilon_B = 1 - \frac12 \cdot \frac{N-2}{N-1} \geq \frac12$. Then, 
    \begin{equation}
        \norm{\frac{1}{m+2} \sum_{i=1}^{m+2} P_i \otimes P_i - \E_{\bU \sim \mu(\U(2^n))}  \bU \otimes \overline{\bU}} = 
        \norm{pA + (1-p)B}\leq 1 - \frac{\min(m\epsilon, 2)}{24(m+2)} . 
    \end{equation}

    So it remains to prove \cref{eq:overlap_pi}. 
    Let $H = X_1 \otimes X_1 + Z_1 \otimes Z_1$. Then
    \begin{equation}
        \norm{\Pi\cdot (X_1 \otimes X_1 + Z_1 \otimes Z_1)\cdot \Pi} = \norm{\sum_{i,j=2}^5 \braket{b_i|H}{b_j}\cdot \ketbra{i}{j}}. 
    \end{equation}
    This is a $4\times 4$ matrix that can be calculated exactly but tediously, because $\ket{b_1},\ldots, \ket{b_5}$ are the vectors obtained after the Gram-Schmidt process. 
    Let us slightly simplify it. 
    Let $M = \sum_{i=1}^5 \ketbra{a_i}{i}$ and let $M=QR$ be its $QR$ decomposition. So $Q = \sum_{i=1}^5 \ketbra{b_i}{i}$, which satisfies $Q^\dagger Q = I_5$, and $R$ is upper triangular.
    Then
    \begin{equation}
        \sum_{i,j=1}^5 \braket{b_i|H}{b_j}\cdot \ketbra{i}{j} = Q^\dagger H Q. 
    \end{equation}
    Our goal is to calculate the eigenvalues of $Q^\dagger H Q$. Note that for any scalar $\lambda$ and vector $v$ such that
    \begin{equation}
        Q^\dagger H Q v = \lambda v, 
    \end{equation}
    we must have that
    \begin{equation}
        M^\dagger H M w = \lambda M^\dagger M w, \quad \text{where }w = R^{-1}v. 
    \end{equation}
    Both $M^\dagger H M$ and $M^\dagger M$ are $5\times 5$ matrices which can be calculated from $\ket{a_1},\ldots, \ket{a_5}$ directly. 
    Let $r = 1/\sqrt{N}$. Then
    \begin{equation}
        M^\dagger M = \begin{pmatrix}
            N & 1 & 1 & r & r\\
            1 & 1 & r^2 & r & r\\
            1 & r^2 & 1 & r & r\\
            r & r & r & 1 & r^2\\
            r & r & r & r^2 & 1
        \end{pmatrix}, \text{ and }
        M^\dagger H M = \begin{pmatrix}
            2N & 2 & 2 & 2r & 2r \\
            2 & 1 & 2r^2 & r & r\\
            2 & 2r^2 & 1 & r & r\\
            2r & r & r & 0 & 2r^2\\
            2r & r & r & 2r^2 & 0
        \end{pmatrix}. 
    \end{equation} 
    Then a simple symbolic calculation gives us
    \begin{equation}
        \det (M^\dagger H M - \lambda M^\dagger M) = -\frac{N-2}{N^4}\cdot(\lambda - 2)\cdot (\lambda(N-1) + 2)^2\cdot (\lambda(N-1) - N+2)^2. 
    \end{equation}
    So $Q^\dagger H Q$ has one eigenvalue of $2$, two eigenvalues of $-\frac{2}{N-1}$, and two eigenvalues of $\frac{N-2}{N-1}$.

    Note that $\ket{b_1}$ is an eigenvector of $H$ with eigenvalue $2$. 
    Since $\{\ket{b_i}\}_{i=1}^5$ is orthonormal, we have that $\braket{b_1|H}{b_j} = 0$ for $j=2,\ldots, 5$. 
    So $\ket{1}$ is the eigenvector of $Q^\dagger H Q$ with the eigenvalue $2$. 
    Since $\abs*{-\frac{2}{N-1}}\leq \frac{N-2}{N-1}$ when $n\geq 2$, this proves \cref{eq:overlap_pi}. 
\end{proof}

\subsection{Depth-$1$ expanders on the special unitary group}\label{appendix:unitary}

For a finite set or compact subgroup \(S\), let \(\mu(S)\) denote the uniform or Haar probability measure on $S$. We say the spectral gap of a family of expanders is uniformly gapped if there exists
\(\Delta>0\), independent of \(n\), such that for every finite-dimensional unitary representation \(\rho\)
of \(H_n\) without invariant vectors, 
\begin{equation}
    \left\|\mathbb E_{U\sim\mu(S_n)}\rho(U)\right\|
    \leq 1-\Delta . 
\end{equation}

Throughout this subsection, we use the convention that \(T:=e^{-\ii \pi Z/8}\in\SU(2)\).
We prove the following theorem.

\begin{theorem}\label{thm:unitary-expander}
Let \(n=3s\) with \(s\in\mathbb{N}\). There exists an explicit expander
\begin{equation}
    \mathsf{V}^{(n)}=\{V_1,\ldots,V_{25}\}\subset \SU(2^n),
\end{equation}
on the unitary group such that
$V_1=T\otimes I_{n-1}, V_2=T^{\dagger}\otimes I_{n-1}$
and every other \(V_i\) is a depth-$1$ 1D Clifford circuit with a uniformly bounded spectral gap. 
\end{theorem}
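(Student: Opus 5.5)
The plan is a two-layer construction. First we build a uniformly gapped generating set of $23$ depth-$1$ 1D Clifford circuits for the Clifford group $\mathrm{Cl}(n)$ (modulo phases), gapped in every representation. Then we adjoin $T\otimes I$ and $T^\dagger\otimes I$ and transfer the gap to $\SU(2^n)$ using the uniform spectral gap of random Pauli rotations~\cite{baer2026random,HLT25}. The gluing in both layers uses the abstract form of \Cref{lem:kill_overlap} (the \cite{HH08} lemma), applied in an arbitrary representation $\rho$ rather than only in $U\otimes\overline U$.

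\emph{Step 1 (Clifford layer).} Take the $14$ Kassabov CNOT circuits $U_3,\ldots,U_{16}$. By \Cref{thm:gap_from_kas} they give a uniform gap for every representation of $G_{3s}$. Adjoin the single-qubit Paulis $X_1,Z_1$ and a few translation-invariant depth-$1$ layers, e.g.\ $H^{\otimes n}$, $S^{\otimes n}$, $S^{\dagger\otimes n}$, and one or two depth-$1$ $\mathrm{CZ}$ layers, for a total of $23$ elements. Modulo Paulis, $\mathrm{Cl}(n)$ is $\mathrm{Sp}(2n;\F_2)$. Modulo Paulis, $H^{\otimes n}$ conjugates $G_n$ to itself via transpose-inverse, and it conjugates the diagonal (CZ/S) subgroup to the ``X-type'' quadratic subgroup. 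By the Bruhat/Aaronson--Gottesman-type bounded product decomposition (Nikolov~\cite{nikolov2005product}), $\mathrm{Cl}(n)$ is a product of a bounded number $k$ of the following subgroups: $G_n$, the Pauli group, the diagonal Clifford subgroup $D$, and $H^{\otimes n} D H^{\otimes n}$. Bounded generation gives $\calK(\mathrm{Cl}(n);\bigcup_i H_i)\gtrsim 1/k$.

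Then \Cref{lemma:kazhdan-by-subgroups} reduces the problem to a uniform Kazhdan constant for each subgroup with respect to a few of our generators. For $G_n$ this is \Cref{thm:Kas_SL}. For the Pauli group and for $D$, the elements of our set are not generators of these subgroups on their own, so we instead take the subgroups $\langle G_n, X_1,Z_1\rangle$ and $\langle G_n, \mathrm{CZ\ layer}, S^{\otimes n}\rangle$. Every element of the abelian groups involved is a bounded-length product of a $G_n$-conjugate of a fixed short word. For example, any diagonal quadratic form is a sum of $O(1)$ $\SL(n;\F_2)$-translates of a fixed form, by Witt/rank classification. \Cref{lemma:short-product} together with \Cref{lemma:kazhdan-by-subgroups} then gives uniform constants. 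Finally, \Cref{lem:kazhdan_gaps_relation}, together with a $\lambda_{\min}$ bound via an odd relation as in \Cref{lem:lower_bound_min_eval} (or using $H^2=I$), yields a uniform Clifford expander $\|\E_{g\in S_{\mathrm{Cl}}}\rho(g)-\E_{C\sim\mu(\mathrm{Cl})}\rho(C)\|\le 1-\epsilon_{\mathrm{Cl}}$ for all $\rho$.

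\emph{Step 2 (adding $T$).} Fix an irrep $\rho$ of $\SU(2^n)$ without invariant vectors. Let $\Pi=\E_{C}\rho(C)$, the projector onto Clifford-invariant vectors. Set $A=\frac1{23}\sum_{g\in S_{\mathrm{Cl}}}\rho(g)$ and $B=\tfrac12(\rho(T_1)+\rho(T_1^\dagger))$; both are Hermitian with norm at most $1$, and Step 1 gives $\|(I-\Pi)A(I-\Pi)\|\le 1-\epsilon_{\mathrm{Cl}}$ and $A\Pi=\Pi A=\Pi$. For the third hypothesis, $\Pi B\Pi=\Pi\,\E_C\rho(CT_1C^\dagger)\,\Pi$ (taking the real part), and $CT_1C^\dagger$ ranges uniformly over Pauli rotations $e^{-\ii\pi P/8}$ with $P$ a uniformly random non-identity Pauli, up to sign. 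The random Pauli rotation theorem of~\cite{baer2026random,HLT25} bounds $\|\E_P\rho(e^{\pm \ii\pi P/8})\|\le 1-\epsilon_{\mathrm{rot}}$ uniformly in $n$ and $\rho$ on non-invariant vectors, which gives $\|\Pi B\Pi\|\le 1-\epsilon_{\mathrm{rot}}$. \Cref{lem:kill_overlap} with $p=23/25$ then gives a gap of order $\epsilon_{\mathrm{rot}}\min(\epsilon_{\mathrm{Cl}},1)/12$, independent of $n$ and $\rho$. Two points need care here. First, we should pass to $\SU$ by fixing the determinant-one representatives, which is harmless for $n\ge 2$. Second, the Clifford group is finite, so $\Pi$ may be nonzero even though $\rho$ has no $\SU$-invariant vectors; this is exactly the case the $T$ gate handles.

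\emph{Expected obstacle.} The main difficulty is Step 1: producing a uniform Kazhdan constant for $\mathrm{Cl}(n)$ with only depth-$1$ generators. In particular, we need the bounded-product decomposition into subgroups that are each uniformly generated by our fixed layers. The abelian pieces (Paulis and diagonal Cliffords) are not boundedly generated on their own, so they must be handled by absorbing conjugation by the Kassabov CNOT group. I would first try to phrase all pieces as relative Kazhdan constants of the pairs $(G_n\ltimes \F_2^n,\F_2^n)$ and $(G_n\ltimes \Sym^2,\Sym^2)$, following Kassabov's treatment of $\EL(3;R)$.
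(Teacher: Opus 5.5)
\textbf{There is a genuine gap in Step~2,} in the third hypothesis of \Cref{lem:kill_overlap}.

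You assert that \cite{baer2026random,HLT25} give $\|\mathbb{E}_P\,\rho(e^{\pm i\pi P/8})\|\le 1-\epsilon_{\mathrm{rot}}$ uniformly in $n$ and $\rho$. The result actually available is for $Ce^{i\phi Z_1}C^\dagger$ with the angle $\phi$ \emph{uniformly random} on $[-\pi,\pi)$. The bound in \cite{HLT25} also degrades with the moment $t$, so it is not uniform in $\rho$ in any case. A fixed angle is genuinely different. $\mathbb{E}_\phi\,\rho(e^{i\phi Z_1})$ kills every vector of nonzero $Z_1$-weight. By contrast, $\rho(T_1)$ fixes every vector whose weight is divisible by $16$, and large irreps contain many such vectors.

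Moreover, $\Pi\rho(CT_1C^\dagger)\Pi=\Pi\rho(T_1)\Pi$ for Clifford $C$. So your hypothesis says exactly that $T_1$ moves every Clifford-invariant unit vector by a uniform amount, in every representation without invariant vectors. Given Step~1, that is essentially the whole content of the theorem, so Step~2 assumes what it has to prove. You would also need the lower bound $\Pi B\Pi\succeq-(1-\epsilon_B)\Pi$, which you do not address; making $B$ lazy would handle that part.

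\textbf{The missing idea} is a bridge from random-angle rotations to the fixed $T$ gate, and it needs a separate, number-theoretic input. The paper proceeds as follows.
\begin{enumerate}
\item From $\mu(\mathrm{Cl}(n))*\mu(K_Z)*\mu(\mathrm{Cl}(n))=\nu_{\mathrm{PR}}*\mu(\mathrm{Cl}(n))$ it gets $\|P_{\mathrm C}P_ZP_{\mathrm C}\|\le 15/16$ on representations without invariant vectors.
\item It enlarges $K_Z$ to $K_2=\SU(2)$ on qubit~$1$. This gives a smaller invariant subspace, so the bound persists.
\item It replaces the Haar average $P_2$ by a bounded power of the walk on $\{I,iH_1,T_1\}$. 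This uses the spectral gap of single-qubit Clifford$+T$ on $\SU(2)$, which is uniform over \emph{all} representations of $\SU(2)$ \cite{sarnak2015letter}. By restriction to $K_2$, it holds in every representation of $\SU(2^n)$, independently of $n$.
\item It replaces $P_{\mathrm C}$ by a bounded power of the Clifford walk and telescopes. This gives a gap for a measure supported on bounded-length words, hence a uniform Kazhdan constant.
\item \Cref{lemma:short-product} transfers this Kazhdan constant to $\mathsf V_n=\mathsf W_n\cup\{T_1,T_1^\dagger\}$, and \Cref{lem:kazhdan_gaps_relation} turns it back into a gap.
\end{enumerate}

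\textbf{Step~1 is a different but plausible route.} You use a Bruhat-type decomposition into $G_n$, Paulis, diagonal Cliffords and their Hadamard conjugates, which needs a quadratic-form classification argument. The paper instead uses the Aaronson--Gottesman normal form with the small finite subgroups $K_{\mathrm H,1},K_{\mathrm H,n/2},K_{\mathrm S,1},K_{\mathrm S,n/2}$. There, every Hadamard or phase layer is a bounded product of qubit permutations from $G_n$ and two fixed layers, so no such classification is needed. Your version is also less explicit. The generators you list do not add up to $23$. And $H^2=I$ is not an odd relation, so the $\lambda_{\min}$ bound must come from the Kassabov relation $xyxyz=e$.
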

We will also prove the following Corollary, which resolves a conjecture by Lubotzky~\cite{bourgain2017random}:
\begin{corollary}\label{cor:unitary-expander}
    For all $d\geq 2$ there exists an explicit expander of degree $\leq 5400$ on $\SU(d)$ with a uniformly bounded spectral gap.
\end{corollary}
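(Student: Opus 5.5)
The plan is to reduce the case of general $d$ to \Cref{thm:unitary-expander} by using Kazhdan constants. We embed a bounded number of copies of $\SU(2^n)$ into $\SU(d)$, each acting on a coordinate block, and combine \Cref{lemma:kazhdan-by-subgroups}, \Cref{lemma:short-product} and \Cref{lem:kazhdan_gaps_relation}.

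\textbf{Step 1 (from gap to Kazhdan constant for each block).} Given $d\ge 2$, pick $n=3s$ maximal with $2^n\le d$, so that $m:=2^n$ satisfies $d<8m$; for $d<8$ we may take $n=0$ or treat small $d$ by an ad hoc fixed universal set with algebraic entries via Bourgain--Gamburd, since only finitely many $d$ are involved. Let $\mathsf S=\mathsf V^{(n)}\cup(\mathsf V^{(n)})^{-1}$. A uniform spectral gap $\Delta$ for $\mathsf V^{(n)}$ over all representations without invariant vectors gives a uniform Kazhdan bound $\calK(\SU(m);\mathsf S)\ge\sqrt{2\Delta}$. Indeed, if $\norm{\rho(g)v-v}<\epsilon$ for all $g\in\mathsf S$, then $\mathrm{Re}\langle v,\Phi_\rho v\rangle>1-\epsilon^2/2$, which contradicts \cref{eq:rho-gap} once $\epsilon^2/2\le\Delta$.

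\textbf{Step 2 (bounded generation of $\SU(d)$ by block subgroups).} Fix subsets $S_1,\dots,S_k\subset[d]$ with $|S_i|=m$, and let $H_i\cong\SU(m)$ be the subgroup acting on $\mathrm{span}\{e_j:j\in S_i\}$. We want $k$ and a length $L$, both bounded independently of $d$, such that every $U\in\SU(d)$ is a product of at most $L$ elements of $\bigcup_i H_i$. For this we would use the cosine--sine decomposition recursively. Split $[d]$ into two halves $P,Q$ and write
\begin{equation}
U=(A\oplus B)\,\mathrm{CS}\,(C\oplus D),
\end{equation}
where $\mathrm{CS}$ is a direct sum of commuting $2\times 2$ rotations pairing $P$ with $Q$. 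We then group the rotations so that each group lives on a set of size $\le m$. Each $|P|,|Q|$-block is handled recursively, and since $d/m<8$ only $O(1)$ levels are needed. Determinant phases are absorbed by multiplying by diagonal $\SU$ elements supported on overlapping blocks. Choosing the $S_i$ as the constantly many unions of the $O(1)$ dyadic pieces of size $\lesssim m/8$, padded to size exactly $m$, gives bounded $k$ and $L$.

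\textbf{Step 3 (assembling the Kazhdan bound).} Since $\calK(G;G)\ge\sqrt2$ for any compact $G$ (averaging a vector over $G$), \Cref{lemma:short-product} gives
\begin{equation}
\calK\Bigl(\SU(d);\bigcup_i H_i\Bigr)\ge\sqrt2/L.
\end{equation}
Let $\mathsf S_i$ be the copy of $\mathsf S$ embedded in $H_i$. Then \Cref{lemma:kazhdan-by-subgroups} yields
\begin{equation}
\calK\Bigl(\SU(d);\bigcup_i\mathsf S_i\Bigr)\ge\frac{\sqrt2}{2L}\sqrt{2\Delta}.
\end{equation}
Finally, we adjoin the identity and apply \Cref{lem:kazhdan_gaps_relation} to get a uniform operator-norm gap. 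The degree is at most $50k+1$, and we then check that $k\le 107$ so that this is $\le 5400$.

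\textbf{Main obstacle.} The main obstacle is Step 2: making the bounded-generation argument explicit with uniformly bounded $k$ and $L$, including the bookkeeping that keeps every factor inside $\SU$ of a block of size exactly $m$. If the recursive CS approach gets messy, we would first try the simpler fact that $\SU(S\cup S')$ is a product of $O(1)$ elements of $\SU(S)$ and $\SU(S')$ whenever $|S\cap S'|\ge\max(|S\setminus S'|,|S'\setminus S|)$. This fact follows from a single CS step. We would then chain it over a constant number of overlapping blocks.
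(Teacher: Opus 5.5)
Your proposal is correct and follows essentially the paper's route. The paper also obtains bounded generation of $\SU(d)$ by constantly many embedded copies of $\SU(2^{3s})$ from a recursive KAK (cosine--sine) decomposition with three levels, counting $216$ factors; it then combines $\calK(\SU(d);\SU(d))\geq\sqrt2$ with \Cref{lemma:short-product,lemma:kazhdan-by-subgroups,lem:kazhdan_gaps_relation}, and uses Bourgain--Gamburd for $d<8$. Your deferred count also goes through: $\mathsf V$ already contains $I$ and is inverse-closed, so the degree is at most $24k+1$. Your halving scheme, with determinant phases absorbed into one $2\times 2$ rotation block per level, needs only $k\leq 32$ distinct blocks (eight rotation groups at each of three levels plus eight leaves), which is better than the paper's $216\times 25=5400$.
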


We leverage the recently obtained spectral gap for random Pauli
rotations~\cite{baer2026random}. We first construct a uniformly gapped
generating set for the Clifford group.

For \(n\geq 3\), let \(H_j\) and \(S_j\) denote the corresponding
single-qubit gates acting on qubit \(j\). We denote the finite \(n\)-qubit Clifford circuit group by
\begin{equation}
    \mathrm{Cl}(n)
    :=
    \left\langle
        H_j,S_j,\CNOT_{j,k}:
        j,k\in[n],\ j\neq k
    \right\rangle
    \leq \SU(2^n) . 
\end{equation}

\begin{lemma}\label{lemma:Clifford-expander}
Let \(n=3s\) with \(s\in\mathbb{N}\). There exists an explicit uniformly
gapped generating set
\begin{equation}
    \mathsf{W}^{(n)}=\{W_1,\ldots,W_{23}\}
    \subset \mathrm{Cl}(n),
\end{equation}
such that every \(W_i\) is a depth-$1$ Clifford circuit.
\end{lemma}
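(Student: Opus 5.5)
The plan is to reduce the Clifford group to its symplectic quotient, build generators for that quotient out of conjugated copies of Kassabov's CNOT layers, and finally add two single-qubit Paulis. We combine the pieces with \Cref{lemma:kazhdan-by-subgroups} and \Cref{lemma:short-product}, and convert the resulting Kazhdan constant into a uniform gap via \Cref{lem:kazhdan_gaps_relation}, with the smallest eigenvalue handled as in \Cref{lem:lower_bound_min_eval}. Throughout we use
\begin{equation}
  \mathrm{Cl}(n)/\mathcal{P}_n\cong \mathrm{Sp}(2n;\F_2),
\end{equation}
where $\mathcal{P}_n$ is the Pauli group (up to phases). We also use that the CNOT group $G_n=\Gamma(\SL(n;\F_2))$ sits inside $\mathrm{Sp}(2n;\F_2)$ as the Levi block $g\mapsto \mathrm{diag}(g,g^{-T})$.

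\textbf{Step 1 (generating set).} We take the $14$ Kassabov layers $U_3,\ldots,U_{16}$ from \Cref{thm:expander_gap}. To these we add a few depth-$1$ single-qubit layers: $H^{\otimes n}$, $S^{\otimes n}$, $(HS)^{\otimes n}$, and the local gates $S_1$ and $H_1$. Finally we add the Paulis $X_1$ and $Z_1$, together with enough further depth-$1$ layers (for instance, products of the previous ones) to reach $23$ elements and to make the set closed under inverses. Each element is a depth-$1$ Clifford circuit. Conjugating a depth-$1$ CNOT layer by a depth-$1$ single-qubit layer produces another depth-$1$ Clifford layer. Consequently each conjugate subgroup $L^{-1}G_nL$, for $L$ in our list of single-qubit layers, is generated by a bounded number of words of bounded length in our set. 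By \Cref{thm:Kas_SL} and \Cref{lemma:short-product}, each such conjugate therefore has Kazhdan constant at least an absolute constant with respect to those words.

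\textbf{Step 2 (Kazhdan constant of the symplectic quotient with respect to these subgroups).} We view $\mathrm{Sp}(6s;\F_2)$ as a Steinberg-type group of type $C_3$ over the ring $\Mat(s;\F_2)$ with the transpose involution. In this picture, $G_n$ together with its conjugates by $H^{\otimes n}$ and $S^{\otimes n}$ cover all root subgroups: long roots correspond to CZ/$S$-type symmetric blocks, and short roots to CNOT blocks. We then invoke the Ershov--Jaikin-Zapirain/Kassabov bound, which states that the Kazhdan constant of such a group with respect to the union of its root subgroups (equivalently, of a bounded number of $\SL_3$-type subgroups) is bounded below uniformly in $s$. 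Feeding this into \Cref{lemma:kazhdan-by-subgroups}, with $H_i$ the conjugates of $G_n$ and $S_i$ their Kassabov generators, gives a uniform lower bound on $\calK(\mathrm{Sp}(2n;\F_2);\text{image of }\mathsf W)$. \textbf{This is the step I expect to be the main obstacle.} The difficulty is to check that the cited uniform bound applies to the twisted (symplectic) case over the noncommutative ring $\Mat(s;\F_2)$. If it does not, the fallback is a bounded product decomposition (Bruhat/Aaronson--Gottesman style), writing $\mathrm{Sp}(2n;\F_2)$ as a product of a bounded number of the conjugates of $G_n$, combined with the standard fact that a bounded product decomposition $G=H_1\cdots H_k$ yields $\calK(G;\bigcup H_i)\gtrsim 1/k$.

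\textbf{Step 3 (lifting through the Pauli kernel and concluding).} Let $\rho$ be an irrep of $\mathrm{Cl}(n)$ without invariant vectors. If $\rho$ is trivial on $\mathcal{P}_n$, it factors through $\mathrm{Sp}(2n;\F_2)$ and Step 2 applies directly. Otherwise $\rho|_{\mathcal{P}_n}$ has no invariant vectors, and the Clifford group acts transitively on nonidentity Paulis modulo phase. A vector nearly fixed by $X_1$, $Z_1$ and by the Sp-part is then nearly fixed by every conjugate of $X_1$ and $Z_1$, and hence by all of $\mathcal{P}_n$; this would contradict the absence of invariant vectors. We make this quantitative with \Cref{lemma:kazhdan-by-subgroups} applied to the subgroups $\mathcal{P}_n$ (more precisely, bounded-length conjugates of $\langle X_1,Z_1\rangle$) and the Sp-lifts. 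Finally, \Cref{lem:kazhdan_gaps_relation} bounds $\lambda_{\max}$. To bound $\lambda_{\min}$ we reuse the odd relation $xyxyz=e$ among three CNOT layers exactly as in the proof of \Cref{lem:lower_bound_min_eval}, which gives $\lambda_{\min}\ge -1+\Omega(1)$, and we obtain the uniform gap.
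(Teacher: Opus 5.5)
\textbf{There is a genuine gap in Step 2, which carries the whole lemma.} Its main route rests on a false claim. The subgroups $G_n$, $H^{\otimes n}G_nH^{\otimes n}$ and $S^{\otimes n}G_n(S^\dagger)^{\otimes n}$ do not cover the root subgroups for $\pm(e_a+e_b)$ and $\pm 2e_a$:
\begin{itemize}
\item Since $(H\otimes H)\CNOT_{a,b}(H\otimes H)=\CNOT_{b,a}$, $H^{\otimes n}$ normalizes $G_n$, so that conjugate adds nothing.
\item CNOT circuits and $S$ gates both map $Z$-type Paulis to $Z$-type Paulis, so all three subgroups normalize $\langle Z_1,\ldots,Z_n\rangle$. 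The root elements of one sign, which are $H^{\otimes n}$-conjugates of CZ/$S$ circuits, do not normalize it, so none of them lies in your subgroups.
\item Every element of $S^{\otimes n}G_n(S^\dagger)^{\otimes n}$ acts on the $X$-part of a Pauli through the corresponding $g\in\SL(n;\F_2)$. The CZ/$S$ circuits forming the root subgroups of the other sign fix $X$-parts, so these intersections are trivial too.
\end{itemize}
So a bound on $\calK(\mathrm{Sp}(2n;\F_2);\bigcup_\alpha X_\alpha)$ cannot be fed into \Cref{lemma:kazhdan-by-subgroups} with the $H_i$ taken to be conjugates of $G_n$.

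Your rephrasing via boundedly many $\SL_3$-type subgroups also fails. In $C_3$ no $A_2$-subsystem contains a long root, and over $\F_2$ every short-root element preserves the quadratic form $x\cdot z$. Such subgroups therefore generate at most $O^+_{2n}(\F_2)$, a proper subgroup of $\mathrm{Sp}(2n;\F_2)$.

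Linking root subgroups to conjugates of $G_n$ requires every root element to be a bounded word in those conjugates. That is exactly your fallback bounded product decomposition, which you assert but do not prove. It is not routine for your layers. The easy construction, building an arbitrary Hadamard layer $H_A$ or phase layer from boundedly many qubit-permuted copies of fixed layers, is unavailable: the permutations in $G_n$ fix $H^{\otimes n}$ and $S^{\otimes n}$.

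That bounded decomposition is in fact the paper's entire proof, carried out directly in $\mathrm{Cl}(n)$:
\begin{itemize}
\item It uses the Aaronson--Gottesman normal form together with half layers $H^{\otimes n/2}\otimes I_{n/2}$ and $S^{\otimes n/2}\otimes I_{n/2}$ plus $H_1,S_1$.
\item For $|A|$ even, write $A=B\triangle C$ with $|B|=|C|=n/2$. Then $H_A=H_BH_C$, each factor a permutation-conjugate of the half layer; phase layers are handled the same way.
\item Thus $\mathrm{Cl}(n)$ lies in $L$-fold products of $G_n$ and four small cyclic subgroups. \Cref{lemma:short-product} with $\calK(G;G)\geq\sqrt2$ gives a uniform Kazhdan constant for that union.
\item \Cref{lemma:kazhdan-by-subgroups} with \Cref{thm:Kas_SL} then replaces $G_n$ by Kassabov's $14$ layers. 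Since the identity is in $\mathsf W^{(n)}$, \Cref{lem:kazhdan_gaps_relation} yields the gap.
\end{itemize}
Because $Z=S^2$ and $X=HZH$ already sit in these layers, no separate Pauli-lifting step is needed.

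Your Step 3 is not uniform as written. Bounded-length conjugates of $X_1,Z_1$ by depth-$1$ layers are Paulis of bounded weight $w$. On the abelian quotient $\F_2^{2n}$ of $\mathcal P_n$, let $\chi_j(P)=-1$ iff $P$ has an $X$ or $Y$ on qubit $j$. The unit vector $n^{-1/2}\sum_j e_{\chi_j}$ is moved by at most $2\sqrt{w/n}$ by every such Pauli. Hence $\calK(\mathcal P_n;\cdot)=O(n^{-1/2})$, and \Cref{lemma:kazhdan-by-subgroups} gives nothing uniform. Your qualitative ``nearly fixed by every conjugate'' argument becomes quantitative only once every Pauli is $uZ_1u^{-1}$ for a bounded word $u$, which is bounded generation again.

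Two smaller points: your odd-relation bound on $\lambda_{\min}$ is fine, but closing your listed set under inverses already gives $24$ elements, not $23$.
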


This immediately yields another constant-depth quantum expander. 
In
contrast to the expander in \Cref{thm:expander_gap}, this construction
requires both more generators and the use of Hadamard and \(S\) gates.

Conjugation on Pauli operators induces a homomorphism from
\(\mathrm{Cl}(n)\) onto the symplectic group \(\mathrm{Sp}(2n;\F_2)\).
Uniformly gapped generating sets for \(\mathrm{Sp}(2n;\F_2)\) are known to
exist~\cite{kassabov2006finite,lubotzky2011finite}. Here we instead show
how a gapped generating set for \(G_n = \Gamma(\SL(n;\F_2))\), the group of all CNOT gates, can be used to
construct one for the Clifford group using methods native to quantum
information theory. 
We believe that this makes the circuit structure of the generators more
transparent.

\begin{proof}[Proof of \Cref{lemma:Clifford-expander}]
Assume \(n=3s\) is even for simplicity. The odd case proceeds analogously. 
Aaronson and Gottesman~\cite{aaronson2004improved}
showed that every Clifford unitary admits, up to a global phase, a normal form
\begin{equation}
    \mathsf{H}\mathsf{C}\mathsf{P}\mathsf{C}\mathsf{P}\mathsf{C}
    \mathsf{H}\mathsf{P}\mathsf{C}\mathsf{P}\mathsf{C},
\end{equation}
where \(\mathsf{P}\) denotes a layer of single-qubit gates from $\langle S \rangle = \{I, S, Z, S^\dagger\}$,
\(\mathsf{H}\) denotes a layer of $I$ and Hadamard gates, and
\(\mathsf{C} \in G_n\) is an arbitrary CNOT circuit.
Define
\begin{align}
    K_{\mathrm{H},1}
        :=\langle
            H\otimes I_{n-1}
          \rangle, \qquad
    &K_{\mathrm{H},n/2}
        :=\langle
            H^{\otimes n/2}\otimes I_{n/2}
          \rangle,\\
    K_{\mathrm{S},1}
        :=\langle
            S\otimes I_{n-1}
          \rangle, \qquad
    &K_{\mathrm{S},n/2}
        :=\langle
            S^{\otimes n/2}\otimes I_{n/2}
          \rangle.
\end{align}

First note that every Hadamard layer can be written as a product of constantly many elements of $G_n, K_{\mathrm{H},1}, K_{\mathrm{H},n/2}$. 
Indeed, for \(A\subseteq[n]\), write $H_A:=\bigotimes_{j=1}^n H^{\mathbf{1}_{j\in A}}$. 
Since \(G_n\) contains all qubit permutations, the claim holds for $H_A$ for every $|A|=1$. 
It remains to show the claim when \(|A|\) is even. 
Observe that there exist \(B,C\subseteq[n]\) with $ |B|=|C|=n/2$ such that $A = B\triangle C$, where $\triangle$ denotes the symmetric difference.
Hence $ H_{A}=H_BH_C$ and $H_B,H_C$ is a product of two qubit permutations and $H^{\otimes n/2} \otimes I_{n/2}$. 

An analogous bounded-product decomposition works for the phase layers. 
Since $(HS)^3 = e^{\mathrm{i}\pi/4}I$, the residual global
phase costs only constantly many additional factors.
Combining these observations with Aaronson--Gottesman, there exists a constant
\(L\) such that
\begin{equation}
    \mathrm{Cl}(n)
    \subseteq
    \bigl(
        G_n
        \cup K_{\mathrm{H},1}
        \cup K_{\mathrm{H},n/2}
        \cup K_{\mathrm{S},1}
        \cup K_{\mathrm{S},n/2}
    \bigr)^{L}.
\end{equation}

Since $\mathcal{K}(G;G)\geq \sqrt{2}$ (see e.g.~\cite{kassabov2007symmetric}),
the short-product lemma, \Cref{lemma:short-product}, implies that there is an absolute constant $\kappa_0>0$ such that
\begin{equation}
    \mathcal{K}\!\left(
        \mathrm{Cl}(n);
        G_n
        \cup K_{\mathrm{H},1}
        \cup K_{\mathrm{H},n/2}
        \cup K_{\mathrm{S},1}
        \cup K_{\mathrm{S},n/2}
    \right)
    \geq \kappa_0. 
\end{equation}
Recall that \(G_n = \Gamma(\SL(3s;\F_2)\) and $\calK(\SL(3s;\F_2); \Sigma_{3s}) >1/400$ by \Cref{thm:Kas_SL}. Set
\begin{equation}
    \mathsf{W}^{(n)}
    :=
    \Gamma(\Sigma_{3s})
    \cup K_{\mathrm{H},1}
    \cup K_{\mathrm{H},n/2}
    \cup K_{\mathrm{S},1}
    \cup K_{\mathrm{S},n/2}.
\end{equation}
Since $\abs{\Sigma_{3s}} = 14$, then $|\mathsf{W}^{(n)}|=23$. Thus $\mathcal{K}\bigl(\mathrm{Cl}(n);\mathsf{W}^{(n)}\bigr) \geq \kappa_1$ for an absolute constant \(\kappa_1>0\). Every element of \(\mathsf{W}^{(n)}\) is a depth-$1$ Clifford circuit.

Finally, as $\mathsf W_n$ contains the identity, \Cref{lem:kazhdan_gaps_relation} converts the constant lower bound on the Kazhdan constant into a constant lower bound on the spectral gap.
\end{proof}

\begin{remark}
We expect the same argument to extend to all \(n\). Choose $m=3\left\lfloor\frac{n}{3}\right\rfloor$ and embed one copy of \(\mathrm{Cl}(m)\) into the first \(m\) qubits
and another into the last \(m\) qubits. The results
of~\cite{baer2026random} should imply a gap for the resulting product
walk and hence a constant Kazhdan constant. One could then apply
\Cref{lemma:short-product,lemma:kazhdan-by-subgroups} to the two
overlapping Clifford subgroups.
\end{remark}

We are now ready to prove \Cref{thm:unitary-expander}.
The proof will reveal that adding a single T gate (and its inverse) to $\mathsf{W}^{(n)}$ suffices to obtain a uniformly gapped generator set for $\SU(2^n)$. 

\begin{proof}[Proof of \Cref{thm:unitary-expander}]
Set \(d=2^n\). For a finite-dimensional unitary representation
    $\rho:\SU(d)\to U(\mathcal{H})$,
define its invariant subspace by
\begin{equation}
    \operatorname{Inv}(\rho)
    :=
    \left\{
       | \psi\rangle\in\mathcal{H}:
        \rho(U)|\psi\rangle=|\psi\rangle
        \text{ for every }U\in\SU(d)
    \right\}.
\end{equation}
For a probability measure \(\nu\) on \(\SU(d)\), define the
averaging operator
\begin{equation}
    M_\rho(\nu)
    :=
    \int_{\SU(d)}\rho(U)\,\mathrm{d}\nu(U)
\end{equation}
and the essential norm
\begin{equation}
    g(\nu):=\sup_{
        \rho: \,
        \operatorname{Inv}(\rho)=\{0\}
    }
    \bigl\|M_\rho(\nu)\bigr\|.
\end{equation}
For independent \(U_i\sim\mu_i\), let \(\mu_1*\mu_2\) denote the
distribution of \(U_1U_2\), the \textit{convolution} of $\mu_1$ and $\mu_2$. Then
\begin{equation}
    M_\rho(\mu_1*\mu_2)
    =
    M_\rho(\mu_1)M_\rho(\mu_2).
    \label{eq:moment-convolution}
\end{equation}

Let  $Z_1:=Z\otimes I_{n-1}$ and
denote by \(\nu_{\mathrm{PR}}\) the distribution of a random Pauli rotation $  C e^{\mathrm{i}\phi Z_1}C^\dagger$, where \(C\) is uniformly distributed over \(\mathrm{Cl}(n)\) and
\(\phi\) is uniformly distributed over \([-\pi,\pi)\). The spectral-gap
bound of~\cite{baer2026random} states that
\begin{equation}
    g(\nu_{\mathrm{PR}})
    \leq 15/16.
\end{equation}
Let $K_Z:=
\left\{
    e^{\mathrm{i}\phi Z_1}:
    \phi\in[-\pi,\pi)
\right\}$ and denote by \(\mu(S)\) the uniform probability measure when
\(S\) is finite and the normalized Haar measure when \(S\) is a compact subgroup.
From a change of variables we find $\mu(\mathrm{Cl}(n))*\mu(K_Z)*\mu(\mathrm{Cl}(n))=\nu_{\mathrm{PR}} *\mu(\mathrm{Cl}(n))$.
Consequently,
\begin{equation}
    M_\rho(\mu(\mathrm{Cl}(n))*\mu(K_Z)*\mu(\mathrm{Cl}(n)))
    =
    M_\rho(\nu_{\mathrm{PR}})
  M_{\rho}(\mu(\mathrm{Cl}(n))),
\end{equation}
and hence
\begin{equation*}
    g(\mu(\mathrm{Cl}(n))*\mu(K_Z)*\mu(\mathrm{Cl}(n)))
    \leq
    g(\nu_{\mathrm{PR}})\leq 15/16.
\end{equation*}

Next, let $K_2
:=
\big\{
    U\otimes I_{n-1}:
    U\in\SU(2)
\big\}$.
Since
$K_Z\subseteq K_2$,
every \(K_2\)-invariant vector is also \(K_Z\)-invariant. 
Therefore $M_{\rho}(\mu(K_2))\preceq M_{\rho}(\mu(K_Z))$.
Thus,
\begin{equation}
    g(\mu(\mathrm{Cl}(n))\ast\mu(K_2)\ast \mu(\mathrm{Cl}(n)))
    \leq
      g(\mu(\mathrm{Cl}(n))\ast\mu(K_Z)\ast \mu(\mathrm{Cl}(n)))\leq 15/16. 
\end{equation}

We now replace the two subgroup averages $M_\rho(\mu(\mathrm{Cl}(n))$ and $M_\rho(\mu(K_2))$ by finite walks. Write $H_1:=H\otimes I_{n-1}$ and $T_1:=T\otimes I_{n-1}$. 
The set \(\mathsf W_n\) generates \(\mathrm{Cl}(n)\) and
\(\{\ii H_1,T_1,T_1^{\dagger}\}\) generates a dense subgroup of \(K_2\); we adjoin
\(I\) to the latter set to make the walk lazy.
By \Cref{lemma:Clifford-expander} and the gap from~\cite{sarnak2015letter}, there are absolute constants
\(q_{\mathrm C},q_2<1\) such that, for every representation \(\rho\),
\begin{equation}
    \norm*{
        \underbrace{M_\rho(\mu(\mathsf W_n))}_{Q_{\mathrm C}}
        -\underbrace{M_\rho(\mu(\mathrm{Cl}(n)))}_{P_{\mathrm C}}
    }
    \leq q_{\mathrm C},
    \qquad
    \norm*{
        \underbrace{M_\rho(\mu(\{I,\ii H_1,T_1\}))}_{Q_2}
        -\underbrace{M_\rho(\mu(K_2))}_{P_2}
    }
    \leq q_2.
    \label{eq:finite-walk-gaps}
\end{equation}
Haar invariance gives
\begin{equation}
    Q_{\mathrm C}P_{\mathrm C}=P_{\mathrm C}Q_{\mathrm C}=P_{\mathrm C},
    \qquad
    Q_2P_2=P_2Q_2=P_2.
\end{equation}
Hence \(Q^\ell-P=(Q-P)^\ell\) for either pair \((Q,P)\). Thus,
for integers \(\ell_{\mathrm C},\ell_2\geq1\),
\cref{eq:finite-walk-gaps} implies
\begin{equation}
    \norm*{Q_{\mathrm C}^{\ell_{\mathrm C}}-P_{\mathrm C}}
    \leq q_{\mathrm C}^{\ell_{\mathrm C}},
    \qquad
    \norm*{Q_2^{\ell_2}-P_2}
    \leq q_2^{\ell_2}.
    \label{eq:powered-walk-gaps}
\end{equation}

Define
\begin{equation}
    \nu
    :=
    \mu(\mathsf W_n)^{*\ell_{\mathrm C}}
    *\mu(\{I,\ii H_1,T_1\})^{*\ell_2}
    *\mu(\mathsf W_n)^{*\ell_{\mathrm C}}.
\end{equation}
By \cref{eq:moment-convolution,eq:powered-walk-gaps} and a telescoping argument,
\begin{equation}
    \left\|M_\rho(\nu)-P_{\mathrm C}P_2P_{\mathrm C}\right\|
    \leq
    2q_{\mathrm C}^{\ell_{\mathrm C}}+q_2^{\ell_2}.
\end{equation}
For representations without invariant vectors, the preceding subgroup
estimate gives \(\|P_{\mathrm C}P_2P_{\mathrm C}\|\leq15/16\). Hence, by choosing sufficiently large absolute constant integers \(\ell_{\mathrm C}\) and \(\ell_2\), we have
\begin{equation}
    g(\nu)
    \leq
    \frac{15}{16}
    +2q_{\mathrm C}^{\ell_{\mathrm C}}
    +q_2^{\ell_2}
    \leq \frac{32}{33}. 
    \label{eq:finite-walk-gap}
\end{equation}

The measure \(\nu\) is supported on a constant-size family of
constant-depth circuits. By \cref{eq:finite-walk-gap}, its support has a
Kazhdan constant bounded below by an absolute constant; indeed, for every
unit vector \(v\) in a representation without invariant vectors,
\(\max_{U\in\operatorname{supp}\nu}\|\rho(U)v-v\|^2
\geq 2(1-g(\nu))\). Moreover,
\(\ii H_1=(H_1S_1)^6H_1\) and
\(I,H_1,S_1\in\mathsf W_n\), so every circuit in the support is
a bounded-length word in
\begin{equation}
    \mathsf V_n
    :=\mathsf W_n\cup\{T_1,T_1^{\dagger}\}
    =\mathsf W_n\cup\{H_1,T_1,T_1^{\dagger}\}.
\end{equation}
Then \Cref{lemma:short-product} transfers the uniform Kazhdan constant to \(\mathsf V_n\). Since $\mathsf V_n$ contains the identity and is closed under taking the inverses, \Cref{lem:kazhdan_gaps_relation} yields a uniform spectral gap for \(\mu(\mathsf V_n)\). 
Every generator in \(\mathsf V_n\) has depth one, and its only non-Clifford gates are \(T_1,T_1^{\dagger}\).
\end{proof}

We can now prove~\cref{cor:unitary-expander} using the KAK decomposition of the special unitary group. 
More precisely, we use the following lemma: 
\begin{lemma}\label{lemma:KAK-decomp}
For any $d\geq 8$, let $s\geq 1$ be the largest integer such that $2^{3s}\leq d$.
Then, there is a family of injective homomorphisms $\{\Lambda_{j}:\SU(2^{3s})\to \SU(d)\}_{j=1}^{216}$ such that any $U\in \SU(d)$ can be written as $U=\prod_{j=1}^{216}U_j$, with $U_j\in \Lambda_j(\SU(2^{3s}))$.
\end{lemma}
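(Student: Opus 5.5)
The plan is to iterate the cosine--sine (CS, or ``KAK'') decomposition of the unitary group three times. Each application splits a unitary on $D$ coordinates into $6$ factors. Each factor acts on a coordinate subspace of dimension at most $\lceil D/2\rceil$ and as the identity elsewhere. Three rounds give $6^3=216$ factors. The dimension count closes as follows. By maximality of $s$ we have $2^{3s}\le d<2^{3s+3}$, so with $m:=2^{3s}$ we get $d\le 8m-1$. Hence $\lceil\lceil\lceil d/2\rceil/2\rceil/2\rceil\le m$, so every final factor is supported on at most $m$ coordinates.

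\textbf{Step 1: one round of CS.} For $D=k_1+k_2$ with $k_1=\lceil D/2\rceil$ and $k_2=\lfloor D/2\rfloor$, the CS decomposition writes any $U\in\U(D)$ as
\begin{equation}
U=(A_1\oplus A_2)\,R\,(B_1\oplus B_2),
\end{equation}
with $A_i,B_i\in\U(k_i)$. The middle factor $R$ is a direct sum of $2\times2$ rotations on the pairs of coordinates $(i,k_1+i)$, for $i\le k_2$, together with a $1\times1$ block when $D$ is odd. Since $A_1\oplus A_2=(A_1\oplus I)(I\oplus A_2)$, each block-diagonal factor is a product of two commuting unitaries. One is supported on the first $k_1$ coordinates and the other on the last $k_2$. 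We split the pairs of $R$ into two halves, each touching at most $2\lceil k_2/2\rceil\le\lceil D/2\rceil$ coordinates. Then $R=R'R''$, where each of $R',R''$ is a direct sum of $2\times2$ unitary blocks supported on at most $\lceil D/2\rceil$ coordinates. This gives $U$ as a product of $6$ unitaries, each supported on a coordinate subspace of dimension at most $\lceil D/2\rceil$.

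\textbf{Step 2: recursion.} Next we apply Step 1 to each of the $6$ factors, viewed as a unitary on its own support, and then once more. A unitary supported on a coordinate subspace $C$ decomposes into factors supported on coordinate subsets of $C$, so all supports remain coordinate subspaces of $\C^d$. After three rounds, $U=\prod_{j=1}^{216}U_j$, where each $U_j$ is supported on a set $C_j$ of at most $m$ coordinates. The sets $C_j$ depend only on $d$ and not on $U$, since the splitting pattern is fixed. We enlarge each $C_j$ arbitrarily to a set $C'_j$ of exactly $m$ coordinates, which is possible since $m\le d$. We then define $\Lambda_j(X):=X\oplus I$, with $X$ acting on $\Span\{\ket{c}:c\in C'_j\}$. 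This is evidently an injective homomorphism $\SU(m)\to\SU(d)$.

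\textbf{Step 3: determinants (main obstacle).} The remaining issue is that Step 1 produces factors in $\U$ rather than $\SU$, whereas $\Lambda_j$ only accepts $\SU(m)$. First we try to push phases into the $R$-type factors at every round. We replace $A_i$ by $A_i\,\mathrm{diag}(\det A_i^{-1},1,\ldots,1)$, and similarly for $B_i$, and compensate by multiplying $R$ by diagonal phases on the corresponding coordinates. This keeps $R$ a direct sum of $2\times2$ (and $1\times1$) blocks on the same pairs. Inside the recursion, we fix the phases bottom-up. For the $R$-type factors, we redistribute phases between $R'$ and $R''$ by moving a diagonal phase from a coordinate in $R'$ to a coordinate in $R''$, leaving $R'R''$ unchanged. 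This makes $\det R'=1$, and then $\det R''=\det U\cdot(\text{product of the others})^{-1}=1$ automatically. If some factor supported on fewer than $m$ coordinates still carries a phase, we can cancel it on a padding coordinate in $C'_j\setminus C_j$, provided $|C_j|<m$ and this is compatible with the product. If this bookkeeping becomes awkward, the fallback is to note that all factors can be made special except one, and the last one then has determinant $1$ because $\det U=1$.
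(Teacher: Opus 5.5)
Your overall strategy is the paper's: three rounds of the cosine--sine (KAK) decomposition, six factors per round, giving $6^3=216$. The paper splits $d=k+r$ with $k\in\{m,2m,4m\}$ and $k\le d\le 2k$ rather than into halves, but that difference is immaterial. The genuine gap is Step~3, which you correctly flag as the crux but do not resolve.

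The mechanism you propose cannot work. $R'$ and $R''$ are supported on disjoint coordinate sets $S'$ and $S''$, so $R=R'R''$ has the form $X'\oplus X''\oplus I$. This determines $R'$ and $R''$ uniquely among operators with those supports. So there is no diagonal phase you can move from $R'$ to $R''$ while keeping both supports and the product fixed. In particular, $\det R'$ is just the product of the determinants of the blocks in its half, and whether it equals $1$ is decided entirely by where you placed the compensating phases.

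The fallbacks do not close this either. The claim that ``all factors can be made special except one'' is true only if the phases are placed correctly, which is exactly the point at issue. If $\det A_1$ and $\det B_2$ land in different halves of $R$, then already in one round both $R'$ and $R''$ fail to be special. Inside the recursion, each such non-special factor forces a non-special descendant at the bottom level, because $\det U=1$ constrains only the total product. Cancelling a phase on a padding coordinate likewise changes the product unless it is compensated somewhere else, and you do not arrange that.

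The missing idea, which is exactly how the paper argues, is to put all four compensating phases on a single rotation pair. Take the standard CS middle factor: blocks $\begin{pmatrix}c_i&-s_i\\ s_i&c_i\end{pmatrix}\in\mathrm{SO}(2)$, with entry $1$ on the unpaired coordinate. Then $\det R=1$, hence $\det A_1\det A_2\det B_1\det B_2=\det U=1$.

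Write $A_1\oplus A_2=(\tilde A_1\oplus\tilde A_2)D_A$ and $B_1\oplus B_2=D_B(\tilde B_1\oplus\tilde B_2)$. Here $\tilde A_i,\tilde B_i$ are special, and $D_A,D_B$ are diagonal phases supported only on coordinates $1$ and $k_1+1$. Then $D_ARD_B$ agrees with $R$ except on the pair $(1,k_1+1)$. On that pair its block is $\mathrm{diag}(\det A_1,\det A_2)\begin{pmatrix}c_1&-s_1\\ s_1&c_1\end{pmatrix}\mathrm{diag}(\det B_1,\det B_2)$, which has determinant $1$.

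So every block lies in $\SU(2)$, both halves $R',R''$ are automatically special, and all six factors are special unitaries on their supports. The recursion then closes, because each factor passed to the next round is again special. In fact your own formula $A_i\,\mathrm{diag}(\det A_i^{-1},1,\ldots,1)$, with the $B$-phases placed on the left next to $R$, already does this. Coordinates $1$ and $k_1+1$ form a CS pair; you only needed to notice it instead of attempting a redistribution.

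A smaller slip: the inequality $2\lceil k_2/2\rceil\le\lceil D/2\rceil$ fails when $D\equiv 2\pmod 4$. For example, with $D=6$ two of the three pairs occupy $4>3$ coordinates. The final count still survives. The correct support bound is $g(D)=\max(\lceil D/2\rceil,\,2\lceil\lfloor D/2\rfloor/2\rceil)$. It is nondecreasing, and for even $m$ it satisfies $g(8m-1)=4m$, $g(4m)=2m$ and $g(2m)=m$.
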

\begin{proof}
    First, we use the KAK (or Cartan) decomposition~\cite{wierichs2024kak} of $\rm{SU}(d)$ of the form $U=K_1 A K_2$. 
 We have $K_1,K_2\in \rm{S}(\rm{U}(d-k)\times \rm{U}(k))$ and $A$ can be obtained from embedding of $\rm \SU(2)^{\times k}$.
We first show that there are six embeddings of $\rm \SU(k)$ whose product is the full group $\rm \SU(d)$ if $k\leq d \leq 2m$. 
Indeed, write $d=k+r$ with $1\leq r\leq k$ and fix $\mathbb{C}^d=E\oplus P\oplus Q$, with $\dim E=k-r$, $\dim P=\dim Q=r$. 
Then, we can set $B=E\oplus P$ and $C=E\oplus Q$. 
The KAK decomposition than yields a decomposition
\begin{equation}
U=(L_B\oplus L_Q) A (R_B\oplus R_Q),
\end{equation}
where $A=I_E\oplus \bigoplus_{i=1}^r N_i$, $N_i\in \rm \SU(2)$. 
Set $l_B=\det L_B$, $l_Q=\det L_Q$, $r_B=\det R_B$, $r_Q=\det R_q$. 
As $\det U=1$ we have $l_B l_Q r_B r_Q=1$. 
We can now absorb the superfluous determinants of $L_B, L_Q, R_B$, and $R_Q$ into an $\SU(2)$ block: 
\begin{equation}
\hat{N}_i:=\begin{pmatrix}l_B&0\\0 &l_Q\end{pmatrix} A \begin{pmatrix}r_B&0\\0 &r_Q\end{pmatrix}.
\end{equation}
The corresponding block is then still in $\SU(2)$. 
This allows us to rescale $L_B, L_Q, R_B$, and $R_Q$ in a single direction to make each block's determinant $1$. 
In the case $r=1$ there is no $\SU(2)$ block to absorb the determinants, but the statement is trivial then.

Finally, set $m=2^{3s}$ and $m\leq d\leq 8m$. 
Then, the above condition $k\leq d\leq 2k$ is satisfied for some $k=m, 2m, 4m$. 
We simply apply the above decomposition recursively at most three times. 
Overall, we need at most $6^3=216$ embeddings.
 \end{proof}

\begin{proof}[Proof of~\cref{cor:unitary-expander}]
    The statement follows directly from~\cref{lemma:KAK-decomp} in combination with \cref{lemma:kazhdan-by-subgroups}:
    As $\mathcal{K}(\rm{SU}(d); \rm{SU}(d))\geq \sqrt{2}$, we can directly apply the lemma to obtain a uniformly bounded Kazhdan constant for $\bigcup_j \Lambda_{j}(\mathsf{V_{3s}})$, where $s$ is the largest integer such that $2^{3s}\leq d$. 
    Again, we can translate the Kazhdan constant to a spectral gap for $d\geq 8$ via~\cref{lem:kazhdan_gaps_relation}.
 The overall number of elements in the generator set is then at most $216\times 25=5400$.
 For $d<8$ we can always pick any sufficiently small generator set with a spectral gap, which exists by~\cite{bourgain2012spectral}.
\end{proof}

\section{Application 1: tightness of gap vs entanglement in 1D}\label{sec:vollaw}
Given a Hamiltonian $H$ on a 1D line of $n$ qubits with geometrically local interactions whose unique ground state is $\ket{\psi}$, the bipartite entanglement entropy (denote the two regions as $A$ and $B$) obeys
\begin{equation}
S(\psi_A) \leq O\left(\frac{\log^3 d}{\Delta}\right),
\end{equation}
where $d$ is the local dimension and $\Delta$ is the spectral gap~\cite{arad2013area}.  For the rest of the section, we will set $d = O(1)$ and ignore the dependence on it.  An open question has been whether linear scaling in $1/\Delta$ is optimal.  Phrased another way, what is the largest possible gap for the ground state of a 1D Hamiltonian to have $\Theta(n)$ bipartite entanglement?  If the largest gap is $\sim 1/n$, then the bound in Ref.~\cite{arad2013area} is tight.  Currently, the largest known gap scaling is $\sim 1/n^4$, provided in a construction in Ref.~\cite{gottesman2010entanglement}.  The Hamiltonian is frustrated and also has a somewhat undesirable feature of some of the Hamiltonian terms having norm $1/\text{poly}(n)$\footnote{While this is not mathematically problematic, physicists do not associate such a property with a natural Hamiltonian since the interaction strength or the strength of an external field depends on the system size and vanishes in the thermodynamic limit.   For this reason, we would like to avoid such Hamiltonian families.}.  

When the Hamiltonian is frustration-free, it is believed that the gap versus entanglement scaling obeys $S(\psi_A) \leq O\left(\frac{1}{\sqrt{\Delta}}\right)$ (assuming a constant local dimension).  Thus, if the ground state of a 1D Hamiltonian has $\Theta(L)$ bipartite entanglement, then the gap can be at most $\sim 1/L^2$.  In this section, we use the constant-depth quantum expander to prove the following theorem:
\begin{theorem}\label{thm:hamfamily}
There exists a family of frustration-free Hamiltonians $\{H_n: n \in \mathbb{Z}^+\}$ where $H_n$ is defined on a line of $n$ sites with $O(1)$ local dimension, such that:
\begin{enumerate}
\item $H_n$ has a unique ground state with bipartite entanglement entropy $\Theta(n)$
\item $H_n$ has a spectral gap $\Delta_n = \Theta(n^{-2})$.
\end{enumerate}
\end{theorem}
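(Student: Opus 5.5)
We will follow the outline sketched in the introduction, modeled on \cite{aharonov2014local}. The $n$ sites are split into a left half and a right half, each of length $\Theta(n)$. Each half carries a data register of $N = 3s = \Theta(n)$ qubits and a clock register, realized as a modified Feynman--Kitaev domain-wall clock, with one clock site per data qubit.

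The construction proceeds in four steps.

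\textbf{The walk.} Let $\{U_1,\dots,U_{16}\}$ be the depth-$1$ expander of \Cref{thm:expander_gap}. The left clock sweeps from the center outward, one site at a time. It carries a label $i\in[16]$, chosen at the center, and as it sweeps it applies the local gates of $U_i$ to the data qubits it passes. The right clock simultaneously applies $\overline{U_i}$, equivalently $U_i^\dagger$ up to transposition, to the right data register.

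The following choices make the walk geometrically local:
\begin{itemize}
\item Since $U_i$ is depth-$1$ with gates of range at most $5$ (qubits $3i+a$ and $3i+b$), each gate can be applied by a nearest-neighbour clock transition.
\item We use a constant local dimension per site: one data qubit, a clock symbol, and a constant label, so the label $i$ is copied along.
\item The periodic wrap-around gate can be handled by folding the circle into a line with doubled sites.
\item We pad each half-sweep with $\Theta(n)$ identity steps.
\end{itemize}

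\textbf{The coupling.} At the turning point, where the clock has finished one full application of $U_i\otimes\overline{U_i}$, a coupling term resets the clock to the start. It also lets the label be resampled uniformly via a projector onto $\frac{1}{4}\sum_i\ket{i}$.

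Under the standard Feynman--Kitaev isometry, the Hamiltonian restricted to the legal-clock subspace becomes unitarily equivalent to $I\otimes L_{\mathrm{path}} + (\text{coupling})$. Here $L_{\mathrm{path}}$ is a path Laplacian of length $T=\Theta(n)$, and at the boundary the coupling acts on the data as $I - \frac{1}{16}\sum_i U_i\otimes\overline{U_i}$. This is the step where the design of the "modified" clock matters. The coupling must be a sum of local projectors whose kernel is exactly the history state. The label register makes this possible: the resampling projector is local because it acts at the center site only.

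\textbf{Ground state and entanglement.} We then show frustration-freeness, with ground state equal to the history state built on $\ket{\Gamma}$ between the two data registers. Concretely, we check that every local term annihilates $\sum_t \ket{t}\otimes W_t(\ket{\Gamma}\otimes\ket{\text{label}})$. This holds because $(U\otimes\overline U)\ket{\Gamma}=\ket{\Gamma}$.

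Uniqueness follows from \Cref{thm:expander_gap}: the only vector fixed by $\frac{1}{16}\sum_i U_i\otimes\overline{U_i}$ is $\ket{\Gamma}$. Illegal clock configurations are penalized by standard local clock terms that commute with the legal subspace.

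The entanglement across the middle cut is $\Theta(n)$. At every time $t$ the left and right data registers are related by a local unitary to $\ket{\Gamma}$, which has $N$ ebits. The clock contributes at most $O(\log n)$, and this can be bounded by continuity or a direct Schmidt computation.

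\textbf{Gap.} The upper bound $\Delta=O(n^{-2})$ comes from a variational state: a low-frequency mode on the path of length $T$. The lower bound $\Omega(n^{-2})$ is the main obstacle. We would write the effective operator as a path Laplacian $L_T$, with gap $\Theta(T^{-2})$, plus a boundary term $I-\Phi$, whose gap on the orthogonal complement of $\ket{\Gamma}$ is at least a constant by \Cref{thm:expander_gap}. We then apply a geometric-lemma / detectability-lemma argument (Kitaev's projection lemma, or \Cref{lem:kill_overlap} in spirit):
\begin{itemize}
\item on the complement of $\ket{\Gamma}$, the walk must "leak" at the boundary, giving energy $\gtrsim \Delta_{\mathrm{exp}}/T^{2}$ up to constants (the standard $1/T^2$ bound for a path with a constant-strength defect at one end);
\item within the $\ket{\Gamma}$ sector, the gap is that of $L_T$.
\end{itemize}
Illegal clock states have $\Omega(1)$ energy, or $\Omega(n^{-2})$ after Kitaev's projection lemma, since the legal subspace is invariant. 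The role of the padding is to make all $16$ branches have equal length, so that the label and the walk decouple and the one-dimensional reduction is exact.
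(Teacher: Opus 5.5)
There is a genuine gap: the two mechanisms your effective model rests on cannot be implemented by geometrically local terms.

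First, you have the left and right clocks apply $U_i$ and $\overline{U_i}$ ``simultaneously'' under one shared label, so that the legal clock space is a single path of length $T$. The left and right data sit on opposite sides of the middle cut, and both clocks sweep outward from the center. At time $t$ the two heads are therefore $2t$ sites apart, so no local term can advance them in lockstep or compare their labels. Local terms let the clocks move independently, which makes the legal clock space a product of two paths, not a path.

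Second, ``resetting the clock to the start'' takes a domain-wall clock from its final configuration (heads at the outer ends) back to its initial one, which changes every clock site. The resampling projector acting on a single site does not make this hop local, and you do not supply a modified clock that would.

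So neither the reduction to $I\otimes L_{\mathrm{path}}$ plus a boundary term $I-\frac{1}{16}\sum_i U_i\otimes\overline{U_i}$, nor the path-with-a-defect gap estimate built on it, describes a local Hamiltonian. The estimate also fails in the setting local terms actually produce. If two independent clocks are coupled only at one joint configuration, the defect is a single corner of an $n\times n$ grid of clock configurations. A test function growing like $\log(\text{distance})/\log n$ then shows the gap is only $O(1/(n^2\log n))$.

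The paper's construction is what you get by making both sweeps end at the center, where a final check is local. The price is that the labels can no longer be shared. Each clock draws its own label from $\{1,\ldots,17\}$, with $U_1=I$ adjoined to the $16$ expander unitaries. The only coupling is the term $H_M$ from \cite{aharonov2014local}, acting on the two adjacent label sites at the middle and penalizing $\ket{1}_L\ket{i}_R-\ket{i}_L\ket{1}_R$. This forces $(U_i\otimes I)\ket{\phi}=(I\otimes\overline{U}_i)\ket{\phi}$ for all $i$, hence $\ket{\phi}=\ket{\Gamma}$. The entanglement bound then follows from controlled unitaries $U_L\otimes U_R$, each acting on one side only.

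Padding is not there to equalize branch lengths; the staircase decomposition $U_i=\prod_{k=1}^{n-1}U_{i,k}$ already does that. Its role is different: once a label reaches the coupling site it stays there while a $\ast$ domain wall runs through $\Theta(n)$ further sites. The configurations on which $H_M$ acts therefore carry a constant fraction of the weight of the product history state. This gives $PH_MP\geq c(P-P_0)$ with $c=\Delta(m-1)/4m^2$ independent of $n$. Together with $H-H_M\geq b(1-P)$, $b=\Theta(n^{-2})$, \Cref{lem:proj} yields the $\Omega(n^{-2})$ lower bound, and this is exactly what removes the logarithmic loss above.
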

\subsection{Construction of Hamiltonian family}

We provide a construction of $H_n$.  $H_n$ is defined on a line of $\Theta(n)$ qudits whose local Hilbert space will be denoted as
\begin{equation}
\mathcal{H}_{\text{loc}} \cong \mathrm{span}\{\ket{\ast}, \ket{0}, \ket{1}, \ldots, \ket{m-1}, \ket{m}\}
\end{equation}
where $m = 17$ is the number of unitaries specifying the quantum expander plus one.  The extra unitary represented by $\ket{1}$ is the identity.  Next, we will choose the geometry below.
\begin{center}
\includegraphics[scale=0.65]{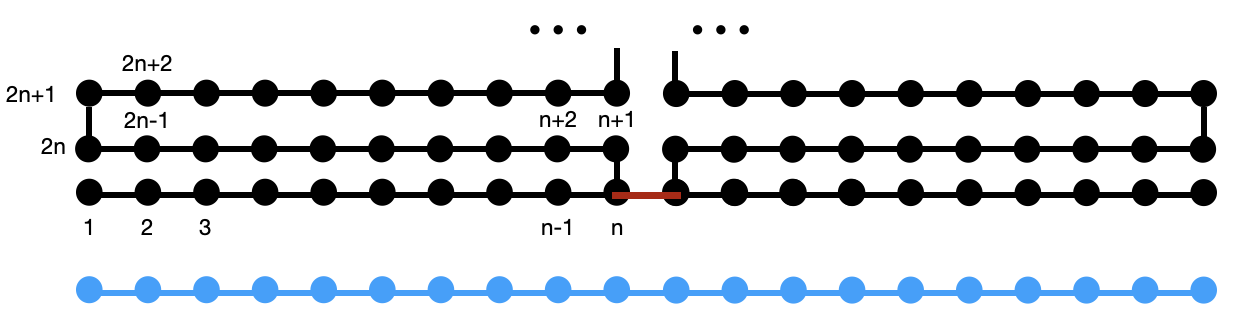}
\end{center}

The coloring specifies the different roles of qudits.  Blue qudits are \emph{data} qudits and these are responsible for generating large ground state entanglement.  Black qudits are \emph{clock} qudits, which are responsible to applying the quantum expander unitaries to the blue qudits.  Black qudits are divided into two halves, which we call left and right.  Edges correspond to Hamiltonian interactions.  The red link is a special interaction coupling left and right clock qudits. 

Note that the clock qudits form a ``snaking geometry''.  The total length of the snake is $\alpha n$ with $\alpha > 1$ for both left and right clock qudits.  This means that the Hamiltonian is still 1D local; one can see this by ``flattening'' the Hamiltonian into one that is defined on a line of $2n$ qudits, each with a larger but still constant local dimension.  However, we will work directly with the snaking geometry for ease of presentation.

The full Hamiltonian will be written in the form
\begin{equation}
H = H_{\mathrm{check}} + H_M + H_{\mathrm{prop}}.
\end{equation}
We will construct each of these in the following subsections.

\subsubsection{Constructing \(H_{\mathrm{check}}\)}

Denote the set of left and right clock qudits by $L$ and $R$.  In the diagram above, a canonical ordering is given for the qudits in $L$.  A similar ordering can be given for $R$.  We will be using these orderings moving forward.

The check Hamiltonian will be a diagonal Hamiltonian that enforces certain configurations of the clock qudits in the groundspace.  Define a word $w \in \{\ast, 0, 1, \cdots, m\}^{\alpha n}$.  Define a word to be valid in $L$ if it takes the form
\begin{equation}
w = \begin{cases}
\ast^k \# \, 0^{\alpha n - k -1} &  0\leq k \leq n-2 \\
\ast^{n-1} \# \ast^{k-n+1} \, 0^{\alpha n - k-1} & k \geq n-1
\end{cases}
\end{equation}
where the ordering of qubits is the canonical one and $\# \in \{1,2,\ldots,m\}$.  We want the ground space on both the left and right clock qudits to only consist of valid words.  $H_{\mathrm{check}}$ will energetically enforce this constraint.  We write $H_{\mathrm{check}} = H_{\mathrm{check}, L} + H_{\mathrm{check}, R}$.  $H_{\mathrm{check}, L}$ includes the following terms:
\begin{enumerate}
\item A term of the form $H_{1} = \sum_{k > n} \sum_{i=1}^m \ketbra{i}{i}_{k,L}$.  This enforces that the ground state subspace cannot have any $\#$ characters past position $n$.
\item A term of the form 
\begin{align}
H_2 = &\sum_{k < n} \sum_{i=1}^m \paren*{\ketbra{0}{0}_{k,L}\ketbra{i}{i}_{k+1,L} + \ketbra{i}{i}_{k,L}\ketbra{\ast}{\ast}_{k+1,L}} \nonumber\\&+ \sum_{k < n} \paren*{\ketbra{0}{0}_{k,L}\ketbra{\ast}{\ast}_{k+1,L} + \ketbra{\ast}{\ast}_{k,L}\ketbra{0}{0}_{k+1,L}} \nonumber\\&+ \sum_{k < n} \sum_{i,j=1}^m \ketbra{i}{i}_{k,L}\ketbra{j}{j}_{k+1,L}.
\end{align}
This ensures that the length-$n$ prefix of the word must take the form $\ast^a \# 0^b$ where $a + b + 1 = n$.  However, it also allows $\ast^n$ and $0^n$.
\item A term of the form $H_3 = \ketbra{0}{0}_{1,L} + \ketbra{\ast}{\ast}_{n,L}$.  This eliminates the words $\ast^n$ and $0^n$ from the ground state subspace.
\item A term of the form $H_4 = \sum_{k \geq n}\ketbra{0}{0}_{k,L} \ketbra{\ast}{\ast}_{k+1,L}$.  This enforces that the word following the length-$n$ prefix takes the form $\ast^c \, 0^d$ for $c+d = (\alpha -1) n$.
\end{enumerate}

The check Hamiltonian $H_{\mathrm{check}, L} = H_1 + H_2 + H_3+ H_4$.  Similarly, $H_{\mathrm{check}, R}$ can be constructed  by swapping $L$ and $R$ in $H_1, \ldots, H_4$.

\subsubsection{Constructing \(H_{\mathrm{prop}}\)}

Next, we discuss how to construct the propagation part of the Hamiltonian, which is responsible for applying unitaries to the data qubits.  We first discuss the Hamiltonian without the unitaries before discussing how to incorporate them.

On the left clock qudits, define the propagation Hamiltonian
\begin{equation}
H_{\mathrm{prop}, L,1} = -\ketbra{+}{+}_1-\ketbra{\ast}{\ast}_1 + \sum_{i=1}^m \sum_{k < n} (\ket{i}_{k}\ket{0}_{k+1} - \ket{\ast}_k\ket{i}_{k+1})(\bra{i}_k\bra{0}_{k+1} - \bra{\ast}_k\bra{i}_{k+1})
\end{equation}
where we have dropped the subscript $L$ in the indices.  Here, $\ket{+} = \frac{1}{\sqrt{m}}\sum_{i=1}^m \ket{i}$ and the first two terms enforce that $\ket{+}$ (and not some other superposition) propagates across the chain in the ground state.  For the remaining $(\alpha - 1)n$ sites, define
\begin{align}
h_{\mathrm{start}, 2} &= \sum_{i=1}^m(\ket{i}_n\ket{0}_{n+1}\ket{0}_{n+2} - \ket{i}_n\ket{\ast}_{n+1}\ket{0}_{n+2})(\bra{i}_n\bra{0}_{n+1}\bra{0}_{n+2} - \bra{i}_n\bra{\ast}_{n+1}\bra{0}_{n+2}) \\
h_{j, 2} &= (\ket{\ast}_{n+j}\ket{0}\ket{0} - \ket{\ast}_{n+j}\ket{\ast}\ket{0})(\bra{\ast}_{n+j}\bra{0}\bra{0} - \bra{\ast}_{n+j}\bra{\ast}\bra{0}) \\
h_{\mathrm{end}, 2} &= (\ket{\ast}_{\alpha n -1}\ket{0}_{\alpha n} - \ket{\ast}_{\alpha n -1}\ket{\ast}_{\alpha n})(\bra{\ast}_{\alpha n -1}\bra{0}_{\alpha n} - \bra{\ast}_{\alpha n -1}\bra{\ast}_{\alpha n})
\end{align}
where the notation $\ket{\ast}_{n+j}\ket{0}\ket{0}$ is shorthand for $\ket{\ast}_{n+j}\ket{0}_{n+j+1}\ket{0}_{n+j+2}$.  Then, define the Hamiltonian on the remaining $(\alpha - 1)n$ sites via
\begin{align}
H_{\mathrm{prop}, L,2} = h_{\mathrm{start}, 2} + \sum_{j=1}^{(\alpha - 1) n - 2} h_{j,2} + h_{\mathrm{end}, 2}.
\end{align}
The first term creates a domain wall of $\ast$ only if the character at site $n$ is $\#$.  The remaining terms propagate a domain wall of $\ast$ along the remaining $(\alpha - 1) n$ sites.    We can define $H_{\mathrm{prop}, L} = H_{\mathrm{prop}, L,1} + H_{\mathrm{prop}, L,2}$ and similarly define $H_{\mathrm{prop}, R}$.

Write $H_{\mathrm{prop}, L} = h_1 + \sum_{k<n} h_{k, k+1} + \sum_{k\geq n} h_{k, k+1, k+2} + h_{\alpha n-1, \alpha n}$ with $h_{i,i+1,i+2}$ including all terms supported completely in sites $i$, $i+1$ and $i+2$, $h_{i,i+1}$ including all terms supported entirely in sites $i$ and $i+1$, $h_1 = -\ketbra{+}{+}_1-\ketbra{\ast}{\ast}_1$, and $h_{\alpha n-1, \alpha n} = h_{\mathrm{end},2}$. Then, this Hamiltonian is frustration-free and the ground state is the simultaneous ground state of $\{h_1, h_{1,2}, \cdots, h_{n, n+1, n+2}, \cdots, h_{\alpha n-1, \alpha n}\}$. Additionally, $H_{\mathrm{check}, L} + H_{\mathrm{prop}, L}$ is similarly frustration-free and the ground state restricted to the left clock qudits is unique and given by
\begin{equation}\label{gsleft}
\ket{\mathrm{GS}_L} = \frac{1}{\sqrt{\alpha n}}\sum_{k \leq n-2} \ket{\ast^k}\otimes \ket{+}\otimes\ket{0^{\alpha n - k-1}} + \frac{1}{\sqrt{\alpha n}} \sum_{k > n-2} \ket{\ast^{n-1}}\otimes \ket{+} \otimes\ket{\ast^{k-n+1} \, 0^{\alpha n-k-1}}
\end{equation}
where the notation $\ket{\ast^k} = \ket{\ast}^{\otimes k}$ and so on.  The state $\ket{\mathrm{GS}_R}$ has a similar form.

Next, we incorporate the data qubits.  Recall that there are $2n$ data qubits and these can be divided into left and right halves.  This will modify $H_{\mathrm{prop}, L,1}$ and $H_{\mathrm{prop}, R,1}$ into $H'_{\mathrm{prop}, L,1}$ and $H'_{\mathrm{prop}, R,1}$.  Recall that $m=17$ is one more than the number of unitaries specifying the quantum expander.  Call each of these unitaries $U_i$, with $i \in \{2,3,\ldots,m\}$.  The unitary $U_1 = I$.  
These unitaries can be decomposed as
\begin{equation}
U_i = \prod_{k=1}^{n-1} U_{i,k}
\end{equation}
where $k$ indexes a qubit on the left half of the data qubits that $U_{i,k}$ starts at, and each of the $U_{i,k}$ is assumed to have constant range.  This results in a ``staircase'', or ``sequential'' circuit for $U_i$.  Since in our case $U_i$ is depth-$1$ for all $i$, such a decomposition is possible.  We then define
\begin{equation}
\begin{aligned}
H'_{\mathrm{prop}, L,1}
={}&-\ketbra{+}{+}_1-\ketbra{\ast}{\ast}_1 \\
&+\sum_{i=1}^m \sum_{k < n}
\bigl(\ket{i}_{k}\ket{0}_{k+1} \otimes I
    - \ket{\ast}_k\ket{i}_{k+1} \otimes U_{i,k}\bigr)
\bigl(\bra{i}_k\bra{0}_{k+1} \otimes I
    - \bra{\ast}_k\bra{i}_{k+1}\otimes U^{\dagger}_{i,k}\bigr).
\end{aligned}
\end{equation}
where the notation $A \otimes B$ indicates that $A$ acts on the left clock qudits and $B$ acts on the left data qudits.  This term is designed to implement a standard Feynman-Kitaev Hamiltonian so that if the state of the clock qudit $k$ is $\ket{i}_k$, the unitary $U_{i,k}$ is applied to the data.  Similarly, we can write
\begin{equation}
\begin{aligned}
H'_{\mathrm{prop},R,1}
={}&-\ketbra{+}{+}_1-\ketbra{\ast}{\ast}_1 \\
&+\sum_{i=1}^m \sum_{k < n}
\bigl(\ket{i}_{k}\ket{0}_{k+1} \otimes I
    - \ket{\ast}_k\ket{i}_{k+1} \otimes \overline{U}_{i,k}\bigr)
\bigl(\bra{i}_k\bra{0}_{k+1} \otimes I
    - \bra{\ast}_k\bra{i}_{k+1}\otimes U^T_{i,k}\bigr).
\end{aligned}
\end{equation}
which implements the complex conjugate of $U_i$.  The full propagation Hamiltonian is $H_{\mathrm{prop}} = H'_{\mathrm{prop}, L,1} + H_{\mathrm{prop}, L,2} + H'_{\mathrm{prop}, R,1} + H_{\mathrm{prop}, R,2}$.

Finally, we note that the unitaries in our depth-$1$ quantum expander are defined on a circle rather than a line.  This can be accommodated for in the Hamiltonian without sacrificing 1D locality.  For this, we simply fold the left and right chains in half, so that we can apply a unitary between the first and last qubits in a geometrically local way.  1D locality is still obeyed, and the bipartite entanglement entropy about the middle is unaffected.  We may also use the depth-$1$ expander in Ref.~\cite{liu2026almost}, which is defined on a line and does not require the extra folding trick.  This comes at the expense of increasing the local Hilbert space dimension by quite a bit, given that their expander requires more unitaries than ours.

\subsubsection{Constructing $H_M$}

Finally, we must specify the interaction terms corresponding to the red link in the diagram.  Inspired by the construction of Ref.~\cite{aharonov2014local}, we will choose it to take the form 
\begin{equation}
H_{M} = \frac{1}{2}\sum_{i = 2}^m (\ket{1}_{n,L}\ket{i}_{n,R} - \ket{i}_{n,L}\ket{1}_{n,R})(\bra{1}_{n,L}\bra{i}_{n,R} - \bra{i}_{n,L}\bra{1}_{n,R})
\end{equation}
This imposes a constraint on amplitudes in the ground state that will force the left and right data qubits to be very entangled with each other.  Note that $\norm{H_M} \leq 1$.

\subsection{Proof of Theorem~\ref{thm:hamfamily}}

We now provide a proof of Theorem~\ref{thm:hamfamily}.  The proof can be divided into a few parts: first, we construct the ground state of $H_n$ (for succinctness, we will drop the subscript $n$) and argue that it has entanglement entropy $\Theta(n)$.  Then, we provide a gap estimate for $H_L$ and $H_R$.  Then, we prove a gap estimate when adding $H_M$.

Recall that by construction, the ground state of $H_L = H_{\mathrm{check}, L} + H_{\mathrm{prop}, L}$ is given in Eqn.~\ref{gsleft}.  Define the unitary
\begin{equation}
W_{i,j} = \prod_{k=1}^j U_{i,k}.
\end{equation}
Then, it follows that the ground state of $H_{\mathrm{check}, L} + H'_{\mathrm{prop}, L, 1} + H_{\mathrm{prop}, L, 2}$ is
\begin{equation}
\begin{aligned}
\ket{\mathrm{GS}_L'}
={}& \frac{1}{\sqrt{m\alpha n}}
    \sum_{a=1}^m\sum_{k \leq n-2}
    \ket{\ast^k a\, 0^{\alpha n-k-1}}
    \otimes W_{a,k}\ket{\phi} \\
&+ \frac{1}{\sqrt{m\alpha n}}
    \sum_{a=1}^m\sum_{k > n-2}
    \ket{\ast^{n-1}a\ast^{k-n+1}\,0^{\alpha n-k-1}}
    \otimes U_a\ket{\phi}.
\end{aligned}
\end{equation}
Define $\ket{\alpha_{a,k}} = \ket{\ast^k a\, 0^{\alpha n - k-1}}$ for $k \leq n-2$ and define $\ket{\beta_{a,k}} = \ket{\ast^{n-1} a \ast^{k-n+1} \, 0^{\alpha n-k-1}}$ for $k > n-2$.  Define the unnormalized state 
\begin{equation}
    \ket{\beta_{a}} =\sum_{k > n-2} \ket{\ast^{n-1} a \ast^{k-n+1} \, 0^{\alpha n-k-1}}.
\end{equation}
We may then write
\begin{equation}
\ket{\mathrm{GS}_L'} = \frac{1}{\sqrt{m\alpha n}}\sum_{a=1}^m\sum_{k \leq n-2} \ket{\alpha_{a,k}} \otimes W_{a,k} \ket{\phi} + \frac{1}{\sqrt{m\alpha n}} \sum_{a=1}^m \ket{\beta_{a}}\otimes U_{a} \ket{\phi}.
\end{equation}
A similar ground state can be written for $H_{\mathrm{check}, R} + H'_{\mathrm{prop}, R, 1} + H_{\mathrm{prop}, R, 2}$ but with $W_{i,j}$ replaced with $\overline{W}_{i,j}$ and $U_i$ replaced with $\overline{U}_i$; call this $\ket{\mathrm{GS}_R'}$.  Note that $\ket{\phi}$ is an arbitrary state on the data qubits; thus, the ground space has exponentially large dimension.

Next, we discuss the ground state upon adding $H_M$.  An arbitrary state in the ground space of $H-H_M$ can be written as
\begin{align}\label{eq:genstateinker}
\ket{\psi} = \frac{1}{m\alpha n}&\sum_{a,b=1}^m\sum_{k_L,k_R \leq n-2} \ket{\alpha_{a,k_L}}_L \ket{\alpha_{b,k_R}}_R \otimes (W_{a,k_L} \otimes \overline{W}_{b,k_R}) \ket{\phi} \nonumber \\ &+ \frac{1}{m\alpha n} \sum_{a,b=1}^m \sum_{k_R \leq n-2}\ket{\beta_{a}}_L \ket{\alpha_{b, k_R}}_R\otimes (U_{a} \otimes \overline{W}_{b, k_R}) \ket{\phi} \nonumber \\
&+ \frac{1}{m\alpha n} \sum_{a,b=1}^m \sum_{k_L \leq n-2}\ket{\alpha_{a, k_L}}_L \ket{\beta_{b}}_R \otimes (W_{a, k_L}\otimes \overline{U}_{b}) \ket{\phi} \nonumber \\
&+ \frac{1}{m\alpha n}\sum_{a,b=1}^m \ket{\beta_{a}}_L \ket{\beta_{b}}_R \otimes (U_{a} \otimes \overline{U}_{b}) \ket{\phi}.
\end{align}
Since $H_M \geq 0$, the ground state of $H$ will satisfy $H_M \ket{\psi} = 0$.  Evaluating $H_M \ket{\psi}$ eliminates the first three terms; requiring that it annihilates the last term leaves us with the system of equations
\begin{equation}
(U_1 \otimes \overline{U}_b) \ket{\phi} = (U_b \otimes \overline{U}_1) \ket{\phi}
\end{equation}
for all $b$.  If $\{U_b: b \geq 2\}$ form a quantum expander and $U_1 = I$, then $\ket{\phi} = \ket{\Gamma}$ is the only solution, where $\ket{\Gamma}$ is a state of $n$ Bell pairs between the left and right regions.  This follows a similar argument to Ref.~\cite{aharonov2014local}.  This gives us the ground state of $H$, which we denote by $\ket{\mathrm{GS}}$, and it is unique.
\begin{lemma}\label{lem:highent}
The ground state $\ket{\mathrm{GS}}$ of $H$ has bipartite entanglement entropy $\Theta(n)$.
\end{lemma}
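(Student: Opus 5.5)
The plan is to show that the ground state is obtained from the Bell-pair state $\ket{\Gamma}$ by applying a local isometry on each side of the cut. Local isometries preserve entanglement entropy, so $S$ would equal the entanglement of $\ket{\Gamma}$, namely $n\log 2 = \Theta(n)$. Here the bipartition is $A = L \cup (\text{left data})$ and $B = R \cup (\text{right data})$, i.e.\ the cut through the red link.

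First, I rewrite the single-chain ground state. Let $c$ range over the $m\alpha n$ valid clock configurations on $L$: these are $\ket{\alpha_{a,k}}$ for $k\le n-2$, and $\ket{\beta_{a,k}}$ for $n-2<k\le \alpha n-1$, with $a\in\{1,\dots,m\}$. To each configuration attach the data unitary $X_c$, which is $W_{a,k}$ in the first case and $U_a$ in the second. Define
\begin{equation}
V_L = \frac{1}{\sqrt{m\alpha n}}\sum_{c} \ket{c}_L \otimes X_c ,
\end{equation}
a map from the left data qubits to $L\otimes(\text{left data})$. The clock states $\ket{c}$ are orthonormal and each $X_c$ is unitary, so $V_L^\dagger V_L = \frac{1}{m\alpha n}\sum_c X_c^\dagger X_c = I$, and $V_L$ is an isometry. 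I will check that the count of configurations is exactly $m\alpha n$, consistent with the normalization in $\ket{\mathrm{GS}_L'}$: there are $\alpha n$ positions for the domain wall, with $k=0,\dots,\alpha n-1$. Define $V_R$ in the same way with $\overline{X}_c$; it is also an isometry.

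Second, I identify the ground state. Comparing with \cref{eq:genstateinker}, every state in the kernel of $H-H_M$ equals $(V_L\otimes V_R)\ket{\phi}$, because the four sums there are exactly the expansion of this product. The preceding argument, which uses the quantum expander property of $\{U_b\}_{b\ge 2}$ from \Cref{thm:expander_gap} together with $U_1=I$, forces $\ket{\phi}=\ket{\Gamma}$. Hence $\ket{\mathrm{GS}} = (V_L\otimes V_R)\ket{\Gamma}$ with $\ket{\Gamma}$ normalized.

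Finally, I compute the entropy. Under the local isometry $V_L\otimes V_R$, the reduced state on $A$ is $V_L\,\tr_{\text{right data}}(\ketbra{\Gamma}{\Gamma})\,V_L^\dagger$. This is unitarily equivalent, on its support, to the maximally mixed state on $n$ qubits, so $S(\mathrm{GS}_A) = n\log 2$. As a sanity check on the upper side, $S$ is trivially $O(n)$ because $A$ consists of $O(n)$ constant-dimensional sites.

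The only step needing care is the bookkeeping in the first step: that the sums in \cref{eq:genstateinker} factor exactly as $V_L\otimes V_R$, and that each $\ket{\beta_a}$ is correctly split into its $\alpha n-n+1$ orthonormal components so that the normalization matches an isometry. I expect this to be the main (though routine) obstacle; everything else follows from invariance of entanglement under local isometries.
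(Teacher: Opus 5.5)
Your proposal is correct and is essentially the paper's argument. The paper applies the local controlled unitaries $U_L\otimes U_R$, which undo $W_{a,k}$ and $U_a$ conditioned on the clock word, to map $\ket{\mathrm{GS}}$ to $\ket{\xi}_L\otimes\ket{\xi}_R\otimes\ket{\Gamma}$; your isometry is exactly $V_L = U_L^\dagger(\ket{\xi}_L\otimes I)$ with $\ket{\xi}$ the uniform superposition of the $m\alpha n$ valid words, so both proofs reduce to invariance of entanglement under local unitaries or isometries.
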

\begin{proof}
Define the unitary
\begin{equation}
U_L = \sum_{a = 1}^m \sum_{k_L \leq n-2} \ketbra{\alpha_{a,k_L}}{\alpha_{a,k_L}}_L \otimes W^\dagger_{a,k_L} + \sum_{a = 1}^m \sum_{k_L > n-2} \ketbra{\beta_{a,k_L}}{\beta_{a,k_L}}_L \otimes U^\dagger_{a} + \sum_{\gamma \, \mathrm{invalid}} \ketbra{\gamma}{\gamma}_L \otimes I
\end{equation}
where $\gamma$ is an invalid word and the last term is needed for $U_L$ to be unitary.  Similarly, define
\begin{equation}
U_R = \sum_{a = 1}^m \sum_{k_R \leq n-2} \ketbra{\alpha_{a,k_R}}{\alpha_{a,k_R}}_R \otimes W^T_{a,k_R} + \sum_{a = 1}^m \sum_{k_R > n-2} \ketbra{\beta_{a,k_R}}{\beta_{a,k_R}}_R \otimes U^T_{a} + \sum_{\gamma \, \mathrm{invalid}} \ketbra{\gamma}{\gamma}_R \otimes I.
\end{equation}
Note that $(U_L \otimes U_R) \ket{\mathrm{GS}}$ takes the form $\ket{\xi}_L \otimes \ket{\xi}_R \otimes \ket{\Gamma}$ for unspecified $\ket{\xi}$.  Then, calling $\rho = \ketbra{\mathrm{GS}}{\mathrm{GS}}$ and $\Tr_{R}(\cdot)$ the partial trace over $R$ and the right half of the data qubits, we have $\Tr_{R}((U_L \otimes U_R)\rho(U_L \otimes U_R)^\dagger) = U_L (\Tr_{R}\rho) U_L^\dagger$  and thus the bipartite entanglement entropy of $(U_L \otimes U_R) \ket{\mathrm{GS}}$ and $\ket{\mathrm{GS}}$ are the same.  The bipartite entanglement entropy of $(U_L \otimes U_R) \ket{\mathrm{GS}}$ is the bipartite entanglement entropy of $\ket{\Gamma}$, which is $n$.  This proves the Lemma. 
\end{proof}

Next, we prove the gap estimate.  First, we establish the gap for the Hamiltonian $H - H_M$.  Note that $H - H_M$ has a degenerate ground space so the gap corresponds to $\mathrm{gap}(H-H_M) = \min_{E \neq E_0} E-E_0$ where $E_0$ is the ground state energy. 
\begin{lemma}\label{lem:H-HMgap}
The gap of $H - H_M$ is $\Theta(n^{-2})$.
\end{lemma}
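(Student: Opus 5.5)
The plan is to reduce $H-H_M$ to two decoupled single-chain problems and then to a single-particle hopping problem. Since $H-H_M = H_L + H_R$ with $H_L = H_{\mathrm{check},L}+H'_{\mathrm{prop},L,1}+H_{\mathrm{prop},L,2}$ acting on the left clock and left data, and $H_R$ the analogous operator on the right, the two pieces commute and act on disjoint registers. Hence $\mathrm{gap}(H-H_M)=\min(\mathrm{gap}(H_L),\mathrm{gap}(H_R))$, and it suffices to show $\mathrm{gap}(H_L)=\Theta(n^{-2})$. The same argument applies verbatim to $H_R$.

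For $H_L$, I would first remove the data unitaries by conjugating with the controlled unitary $U_L$ from the proof of \Cref{lem:highent}. This is the standard Feynman--Kitaev trick. On valid words it maps each term $(\ket{i0}\otimes I-\ket{\ast i}\otimes U_{i,k})(\cdots)^\dagger$ to $(\ket{i0}-\ket{\ast i})(\cdots)^\dagger\otimes I$, since $W_{a,k}=W_{a,k-1}U_{a,k}$. Then $U_L H_L U_L^\dagger = H_L^{\mathrm{clock}}\otimes I_{\mathrm{data}}$, where $H_L^{\mathrm{clock}}=H_{\mathrm{check},L}+H_{\mathrm{prop},L}$ is the unitary-free clock Hamiltonian. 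I need to check that $H_{\mathrm{check}}$ is diagonal, so that the propagation terms preserve the valid/invalid split. Then $H_L^{\mathrm{clock}}$ is block diagonal: invalid words are penalized by $H_{\mathrm{check}}\ge 1$ on that block, and the propagation terms either map invalid to invalid or annihilate. The low-energy analysis therefore reduces to the valid subspace, with the invalid block contributing energy $\Omega(1)$.

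On the valid subspace, spanned by $\ket{\alpha_{a,k}}$ and $\ket{\beta_{a,k}}$, the label $a$ is conserved by every propagation term except the boundary term $h_1$. I would therefore organize the space as $m$ chains of length $\alpha n$ indexed by $a$, with the clock position $k$ as a single particle. Each projector $\tfrac12(\ket{x}-\ket{y})(\bra{x}-\bra{y})$, up to the factor $2$, becomes a graph-Laplacian edge between consecutive positions, so $H|_{\mathrm{valid}} = \bigoplus_a 2L_{\mathrm{path}}$ plus the rank-bounded boundary term $-\ketbra{+}{+}_1-\ketbra{\ast}{\ast}_1$, shifted by a constant. On the path graph of length $\alpha n$ with these boundary conditions, the boundary term selects the uniform $\ket{+}$ combination in the $a$ index. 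This gives a unique uniform ground state, and all other $a$-combinations, i.e.\ those orthogonal to $\ket{+}$ at site $1$, are lifted by a constant from the boundary. The path Laplacian has gap $\Theta(1/(\alpha n)^2)=\Theta(n^{-2})$, which gives both the upper bound, via the trial state $\cos(\pi k/\alpha n)$, and the lower bound. I also need to handle the mixed sector where $a$ is not $\ket{+}$: there the boundary term acts as a Dirichlet-type end on a path, whose lowest eigenvalue is also $\Theta(n^{-2})$. So that sector is at least $\Omega(n^{-2})$ above the ground energy, but not $\omega(n^{-2})$, which is consistent with the claim.

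The main obstacle is the bookkeeping at the junction $k=n-1\to n$, where the clock switches from moving the $\#$ symbol (the two-site terms) to extending the $\ast$ domain wall (the three-site terms $h_{\mathrm{start},2}$ and $h_{j,2}$). I must verify that each valid word $\ket{\beta_{a,k}}$ is connected to its successor by exactly one projector, and that no term couples distinct $a$. Then the effective graph really is a path, with uniform edge weights up to constants, rather than a graph with extra branching or leakage into invalid words. If a term does leak, the fallback is the projection lemma (Kempe--Kitaev--Regev), using the $\Omega(1)$ penalty on invalid words, with the path gap $\Theta(n^{-2})$ and constant-norm coupling. This only changes the gap by constants, since the penalty does not need to be large compared to a constant when the terms are frustration-free and block diagonal.
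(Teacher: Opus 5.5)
Your route is the paper's: you decouple $H_L$ from $H_R$, set aside the invalid block using the unit penalty from $H_{\mathrm{check}}$, conjugate by $U_L$ to remove the data unitaries, and split the valid clock space by the label $a$. There $h_1$ leaves a free end in the $\ket{+}$ sector and a pinned, Dirichlet-type end in the complementary sectors; these are exactly the paper's Fourier blocks $H_0$ and $H_{k>0}$. One correction of emphasis: the sectors orthogonal to $\ket{+}$ are not ``lifted by a constant''. Their lowest level lies only $2-2\cos\bigl(\pi/(2\alpha n+1)\bigr)\approx \pi^2/(4\alpha^2 n^2)$ above the ground energy, which is below the internal gap $\approx \pi^2/(\alpha^2 n^2)$ of the $\ket{+}$ sector. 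So these sectors set the gap, as you correctly say later.

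The genuine gap is the junction check you deferred, and for the terms as written it fails. The two-site term of $H'_{\mathrm{prop},L,1}$ on the pair $(n-1,n)$ is not conditioned on anything to its right. It therefore acts on $\ket{\ast}_{n-1}\ket{a}_n$ inside every $\ket{\beta_{a,k}}$ with $k\ge n$, and sends it to $\ast^{n-2}\,a\,0\,\ast^{k-n+1}\,0^{\alpha n-k-1}$. That word is invalid; it is penalized once, by the $0\ast$ term of $H_4$ on sites $(n,n+1)$. Hence:
\begin{itemize}
\item the valid subspace is not invariant;
\item the clock graph is not a path, since its second half is glued to a large region of penalized words;
\item the literal $H_L$ is not even frustration-free. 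Annihilation by this term forces the amplitude on $\beta_{a,k}$ to equal the amplitude on a penalized word, hence to vanish, and this propagates back along the whole chain. In particular $\ket{\mathrm{GS}_L}$ is not annihilated by this term.
\end{itemize}

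Your fallback does not repair this. The projection lemma needs the penalty $J$ to dominate the leaking coupling $V$, with errors of order $\norm{V}^2/J$. Here both are $\Theta(1)$, while you need accuracy $\Theta(n^{-2})$. Moreover, the justification you give (``frustration-free and block diagonal'') is precisely the property that fails.

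The missing step is a change to the Hamiltonian, not to the analysis. Make the $(n-1,n)$ transition a three-site term conditioned on $\ket{0}_{n+1}$: replace $\ket{i}_{n-1}\ket{0}_n$ and $\ket{\ast}_{n-1}\ket{i}_n$ by $\ket{i}_{n-1}\ket{0}_n\ket{0}_{n+1}$ and $\ket{\ast}_{n-1}\ket{i}_n\ket{0}_{n+1}$, just as $h_{\mathrm{start},2}$ and $h_{j,2}$ are conditioned. Then every valid word has exactly its two path neighbours, the valid/invalid decomposition is exact, and your path-Laplacian argument (equivalently the paper's tridiagonal computation) goes through verbatim. The paper's own proof asserts the block structure $H_{\mathrm{valid}}\oplus H_{\mathrm{invalid}}$ without checking this junction either.
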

\begin{proof}
We first observe the fact that both Hamiltonians $H$ and $H-H_M$ have a block structure $H_{\mathrm{valid}} \oplus H_{\mathrm{invalid}}$, where $H_{\mathrm{valid}}$ is defined within the subspace of valid words and $H_{\mathrm{invalid}}$ is defined for invalid words.  Suppose $\ket{\mathrm{inv}}$ is a state in the invalid subspace and $\ket{\mathrm{val}}$ is a state in the valid subspace.  Then $\mel{\mathrm{inv}}{H_{\mathrm{check}}}{\mathrm{inv}} \geq \mel{\mathrm{val}}{H_{\mathrm{check}}}{\mathrm{val}} + 1$.  Thus, because $H$ is frustration-free, the ground state energies of both sectors satisfy $E_0(H_{\mathrm{invalid}}) \geq E_0(H_{\mathrm{valid}}) + 1$.  So we only need to focus on computing the gap of $H_{\mathrm{valid}}$.

Within $H_{\mathrm{valid}}$, the effect of $H_{\mathrm{check}}$ is adding a constant times the identity matrix, which does not modify the gap.  Next, by conjugating the Hamiltonian by $U_L \otimes U_R$ reduces the problem to studying the spectrum of $H_{\mathrm{prop}, L, 1} + H_{\mathrm{prop}, L, 2} + H_{\mathrm{prop}, R, 1} + H_{\mathrm{prop}, R, 2}$.  By symmetry, we only need to study $H_{\mathrm{prop}, L, 1} + H_{\mathrm{prop}, L, 2}$.  Define the states 
\begin{equation}
\ket{i, k} = \begin{cases}
\frac{1}{\sqrt{m}}\sum_{j=1}^m \omega^{kj}\ket{\alpha_{j, i}} & i \leq n-2 \\
\frac{1}{\sqrt{m}}\sum_{j=1}^m \omega^{kj}\ket{\beta_{j, i}} & i > n-2
\end{cases}
\end{equation}
for $k \in \{0,1,\ldots, m-1\}$ and $\omega = \exp(2\pi i/m)$.  These states satisfy orthonormality $\braket{i,k}{i',k'} = \delta_{i,i'} \delta_{k,k'}$.  By direct computation, the Hamiltonian further block diagonalizes as $H_{\mathrm{valid}} = \bigoplus_{k=0}^{m-1} H_k$, where $H_k$ is projected onto the subspace of states spanned by $\ket{i,k}$ for all $i$.  When $k > 0$, the Hamiltonian $H_k$ is the tridiagonal matrix
\begin{equation}
H_k = \begin{pmatrix} 
    1 & -1 &  & \\
    -1 & 1 & -1 &  \\
     & -1 & 1 & -1 &  \\
     &  & -1 & \ddots & \ddots & \\ 
     &  &  & \ddots\\
    \end{pmatrix}
\end{equation}
When $k = 0$, the Hamiltonian $H_0$ is the tridiagonal matrix
\begin{equation}\label{eq:k=0specialization}
H_0 = \begin{pmatrix} 
    0 & -1 &  & \\
    -1 & 1 & -1 &  \\
     & -1 & 1 & -1 &  \\
     &  & -1 & \ddots & \ddots & \\ 
     &  &  & \ddots\\
    \end{pmatrix}
\end{equation}
A standard computation shows that $E_0(H_{k > 0}) - E_0(H_0) \approx \pi^2/(4 \alpha^2 n^2)$ for large $n$ and that $\mathrm{gap}(H_0) \approx  \pi^2/(\alpha^2 n^2)$.  Here, $\approx$ means equality with error terms $\Theta(n^{-3})$.  Thus, $\mathrm{gap}(H - H_M) = \mathrm{gap}(H_{\mathrm{valid}}) \approx \pi^2/(4 \alpha^2 n^2)$, which is $\Theta(n^{-2})$.  When $\alpha = 2$, we can lower bound the gap by $4/(25 n^2)$\footnote{This follows from the gap of $H_{\mathrm{valid}}$ being $2 - 2 \cos \left(\frac{\pi}{2\alpha n + 1}\right)$.  With $\alpha = 2$, $2 - 2 \cos \left(\frac{\pi}{4 n + 1}\right) \geq \frac{4}{25 n^2}$}.
\end{proof}

The final step is to compute the gap when $H_M$ is added back in.  For this, we use the following lemma
\begin{lemma}\label{lem:proj}
Let $A\geq 0$ and $B \geq 0$ be operators, and define $P$ to be the projector onto $\ker A$.  Suppose there exists a projector $P_0 \subseteq \ker B \cap \ker A$ onto a 1-dimension subspace such that
\begin{equation}
A \geq b(1-P), \hspace{0.5cm} PBP \geq c(P-P_0)
\end{equation}
with $\norm{B} \leq 1$ and $b < c/2$.  Then, $A + B$ has ground space projector $P_0$ and gap $\geq \eta b$ where $\eta$ only depends on $c$.
\end{lemma}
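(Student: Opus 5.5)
Since $A\ge 0$ and $B\ge 0$, we have $A+B\ge 0$. Moreover $P_0\subseteq\ker A\cap\ker B$ gives $(A+B)P_0=0$, so the range of $P_0$ consists of ground states with energy $0$. It therefore suffices to show $\langle v|(A+B)|v\rangle\ge \eta b$ for every unit vector $v\perp P_0$. Uniqueness of the ground space then follows automatically.

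Fix a unit vector $v$ with $P_0v=0$ and decompose it as $v=x+y$, with $x=Pv$ and $y=(1-P)v$. Then $x\perp P_0$ as well, since $P_0\le P$. By hypothesis,
\begin{equation}
\langle v|A|v\rangle\ge b\|y\|^2,\qquad \langle x|B|x\rangle\ge c\|x\|^2.
\end{equation}
To control the cross terms in $\langle v|B|v\rangle$ we use the positivity of $B$ rather than its norm. Write $\langle v|B|v\rangle=\|B^{1/2}(x+y)\|^2$. The triangle inequality gives
\begin{equation}
\|B^{1/2}v\|\ge \|B^{1/2}x\|-\|B^{1/2}y\|\ge \sqrt{c}\,\|x\|-\|y\|,
\end{equation}
where the last step uses $\|B\|\le 1$.

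We now split into two cases with a threshold $\theta$ to be chosen.

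\emph{Case 1: $\|y\|^2\ge\theta$.} Then $\langle v|A|v\rangle\ge b\theta$ and $B\ge 0$, so $\langle v|(A+B)|v\rangle\ge b\theta$.

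\emph{Case 2: $\|y\|^2<\theta$.} Then $\|x\|^2>1-\theta$. Choose $\theta$ so that $\sqrt{c(1-\theta)}-\sqrt{\theta}\ge \tfrac12\sqrt{c}$; a choice such as $\theta=c/16$ works, using $c\le 1$, which follows from $\|B\|\le 1$ whenever $P\neq P_0$. With this choice,
\begin{equation}
\langle v|B|v\rangle\ge \tfrac{c}{4}\ge \tfrac{c}{4}\cdot 2b\ge b\cdot\tfrac{c}{2},
\end{equation}
where we used $b<c/2\le 1$.

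Combining the two cases gives a gap of at least $\min(\theta,c/4)\,b=\eta b$ with $\eta=c/16$, which depends only on $c$. The hypothesis $b<c/2$ is used only to make the bound in Case 2 dominate. In fact the bound $\langle v|B|v\rangle\ge c/4$ is already a constant, and $c/4\ge \eta b$ holds for any $b\le 1$. The only potential obstacle is handling the cross terms, and the square-root triangle inequality above avoids any reliance on an approximate orthogonality of $B$ across $P$ and $1-P$.
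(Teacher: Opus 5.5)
Your proof is correct. It has the same skeleton as the paper's: decompose $v=x+y$ with $x=Pv$, $y=(1-P)v$, then split on whether $y$ is large (use $A\ge b(1-P)$) or small (use $PBP\ge c(P-P_0)$).

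The real difference is how you handle the cross term. The paper bounds $\langle v|B|v\rangle\ge\langle x|B|x\rangle-2\|B\|\|x\|\|y\|$ and splits at $\|y\|\le c\|x\|/4$. This forces the threshold on $\|y\|^2$ to be of order $c^2$ and gives $\eta=\min(1/2,c^2/32)$. The paper then needs $b<c/2$ to make $\min(b,c/2)=b$ in the small-$y$ case.

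Your reverse triangle inequality for $B^{1/2}$ is equivalent to Cauchy--Schwarz in the $B$-semi-inner product. It compares $\|y\|$ with $\sqrt{c}\,\|x\|$ rather than $c\|x\|$, so a threshold $\theta\sim c$ suffices and you get $\eta=c/16$. Linear dependence on $c$ is the correct order. Take $P_0=|e_0\rangle\langle e_0|$, $A=b|e_2\rangle\langle e_2|$, $B=|w\rangle\langle w|$ with $w=\sqrt{c}\,e_1+\sqrt{1-c}\,e_2$. All hypotheses hold, and the gap is about $cb/(1+b)$.

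Your argument also makes $b<c/2$ essentially unnecessary. Any $b\le 4$ works, or you can simply state the gap as $\min(bc/16,c/4)$. In the paper's application $c=\Delta(m-1)/4m^2$ is tiny, so the improvement from $c^2$ to $c$ matters for the constant.

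One small loose end: you justify $c\le 1$ only when $P\neq P_0$, and you never treat $P=P_0$. In that case the hypothesis $PBP\ge c(P-P_0)$ is vacuous, so $c$ can be arbitrarily large. Then $\eta=c/16$ can be false: take $B=0$ and $A=b(1-P_0)$, whose gap is exactly $b$. Handle this case directly (there $x=0$, so $\langle v|A|v\rangle\ge b$), or replace $c$ by $\min(c,1)$ throughout to get $\eta=\min(c,1)/16$.
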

\begin{proof}
Let $\ket{\psi}$ be a state satisfying $P_0 \ket{\psi} = 0$.  Write $\ket{\psi} = P\ket{\psi} + (I-P)\ket{\psi} \equiv \ket{x} + \ket{y}$.  Then,
\begin{align}
\mel{\psi}{A+B}{\psi} &\geq b \norm{y}^2 + \mel{\psi}{B}{\psi} \\
&\geq b \norm{y}^2 + \mel{x}{B}{x} - 2 \norm{B} \norm{x} \norm{y} \\
& \geq b \norm{y}^2 + c \mel{\psi}{P}{\psi} - c \mel{\psi}{P_0}{\psi} - 2 \norm{x} \norm{y} \\
& \geq b \norm{y}^2 + c \norm{x}^2 - 2 \norm{x} \norm{y}
\end{align}
Alternatively, we can also write
\begin{equation}
\mel{\psi}{A+B}{\psi} = \mel{\psi}{A}{\psi} + \mel{\psi}{B}{\psi} \geq \mel{\psi}{A}{\psi} \geq b \norm{y}^2.
\end{equation}
Suppose $\norm{y} \leq c \norm{x}/4$.  Then, we use the first inequality to write $\mel{\psi}{A+B}{\psi} \geq \min(b, c/2)(\norm{x}^2 + \norm{y}^2) = b$.  Suppose $\norm{y} \geq c \norm{x}/4$.  Then we can use the second inequality to write $\mel{\psi}{A+B}{\psi} \geq b \norm{y}^2 \geq b/2(\norm{y}^2 + c^2 \norm{x}^2/16) \geq b \min(1/2, c^2/32)$.
\end{proof}
Next, we will choose $\alpha = 2$ for the rest of the analysis.
\begin{lemma}\label{lem:HMgap}
With $A = H-H_M$ and $B = H_M$ where $A$ and $B$ are defined in Lemma~\ref{lem:proj} and $H$ is shifted by a suitable constant to have 0 ground state energy, $c = \Delta(m-1)/4m^2$ where $\Delta$ is the gap of the quantum expander.
\end{lemma}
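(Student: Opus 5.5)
The plan is to compute $PBP$ exactly by parametrizing $\ker A$ with the data state $\ket{\phi}$, and then reduce the bound $PBP \geq c(P-P_0)$ to the quantum expander property of $\{U_b\}_{b\ge2}$.

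\emph{Step 1: parametrize $\ker A$.} After shifting $H$ so that its ground energy is $0$, $\ker A$ is the space of states \eqref{eq:genstateinker}, one for each data state $\ket{\phi}$ on the $2n$ data qubits. Write this as $\ket{\psi}=V\ket{\phi}$. I will check that $V$ is an isometry. The clock states $\ket{\alpha_{a,k}}$ and $\ket{\beta_{a,k}}$ are pairwise orthogonal, each $W_{a,k}$, $U_a$ is unitary, and there are $m^2(\alpha n)^2$ clock configurations in total. Hence $\norm{V\phi}^2=\frac{1}{m^2\alpha^2n^2}\cdot m^2\alpha^2 n^2\norm{\phi}^2=\norm{\phi}^2$. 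It follows that $P=VV^\dagger$. I take $P_0=V\ketbra{\Gamma}{\Gamma}V^\dagger$, which lies in $\ker A\cap\ker B$ by the computation preceding Lemma~\ref{lem:highent}. The claim then reduces to the operator inequality on the data space
\[
V^\dagger H_M V \;\geq\; c\,(I-\ketbra{\Gamma}{\Gamma}).
\]

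\emph{Step 2: compute $V^\dagger H_M V$.} $H_M$ acts only on site $n$ of each clock. A $\#$ sits at site $n$ only in the $\ket{\beta_a}$ sector, so only the last line of \eqref{eq:genstateinker} contributes. Moreover, $H_M$ maps $\beta$-sector states into the span of $\ket{\beta_1}\ket{\beta_i}$ and $\ket{\beta_i}\ket{\beta_1}$, and these are orthogonal to the other sectors. Write $H_M=\sum_{i\ge2}\ketbra{v_i}{v_i}$ with $\ket{v_i}=(\ket{1}\ket{i}-\ket{i}\ket{1})/\sqrt2$, and use $\norm{\beta_a}^2=(\alpha-1)n+1$. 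This gives
\[
\mel{\phi}{V^\dagger H_M V}{\phi}=\frac{((\alpha-1)n+1)^2}{m^2\alpha^2n^2}\cdot\frac12\sum_{i=2}^m\norm{(I\otimes\overline U_i-U_i\otimes I)\ket{\phi}}^2 ,
\]
where I use $U_1=I$. With $\alpha=2$, the prefactor is at least $\frac{1}{4m^2}$.

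\emph{Step 3: invoke the expander gap.} Expanding the square gives
\[
\sum_{i\ge2}\norm{(I\otimes\overline U_i-U_i\otimes I)\phi}^2=2(m-1)\Bigl(\norm{\phi}^2-\mathrm{Re}\,\mel{\phi}{M'}{\phi}\Bigr),
\qquad M'=\frac{1}{m-1}\sum_{i\ge2}U_i^\dagger\otimes\overline U_i .
\]
$M'$ is the conjugate transpose of $\frac{1}{m-1}\sum_{i\ge2} U_i\otimes\overline U_i$ up to relabeling, so it has the same singular values. It also fixes $\ket{\Gamma}$ and preserves $\ket{\Gamma}^\perp$. By \Cref{thm:expander_gap}, every $\phi\perp\Gamma$ then satisfies $|\mel{\phi}{M'}{\phi}|\le(1-\Delta)\norm{\phi}^2$. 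Both sides of the target inequality are block diagonal with respect to $\mathrm{span}\{\Gamma\}\oplus\Gamma^\perp$, and the $\Gamma$ block is $0$ on both sides. So it suffices to treat $\phi\perp\Gamma$, for which
\[
\mel{\phi}{V^\dagger H_MV}{\phi}\ge\frac{1}{4m^2}\cdot\frac12\cdot2(m-1)\Delta\norm{\phi}^2=\frac{\Delta(m-1)}{4m^2}\norm{\phi}^2 .
\]
This gives $c=\Delta(m-1)/4m^2$.

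The main obstacle is bookkeeping rather than concept. I must verify that cross terms between the $\beta\beta$ sector and the other sectors vanish under $H_M$. I must also match the vectorization convention, namely which of $U_i\otimes\overline U_i$ or $U_i^\dagger\otimes\overline U_i$ appears, so that the expander bound legitimately applies on $\Gamma^\perp$. Finally, I must track $\norm{\beta_a}^2$ exactly to confirm the constant $1/4$.
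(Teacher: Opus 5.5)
Your proposal is correct and follows essentially the same route as the paper. You parametrize $\ker(H-H_M)$ by the data state $\ket{\phi}$ and obtain the same prefactor $(n+1)^2/(8m^2n^2)\geq 1/(8m^2)$. Your bound $\sum_{i\geq 2}\norm{(I\otimes\overline{U}_i-U_i\otimes I)\ket{\phi}}^2\geq 2(m-1)\Delta\norm{\phi}^2$ for $\ket{\phi}\perp\ket{\Gamma}$ is the paper's inequality $\sum_{i}\norm{U_i\Phi U_i^\dagger-\Phi}_2^2\geq 2(m-1)\Delta$, written without the operator--state correspondence; your isometry $V$ with $P=VV^\dagger$ only makes the reduction to the data space explicit.

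The one step to repair is your ``up to relabeling'' claim about the singular values of $M'$. It is false for general unitaries, but it holds here because every $U_i$ is a product of $X$, $Z$, or CNOT gates on disjoint qubits. Each $U_i$ is therefore a real symmetric involution, so $M'=\frac{1}{m-1}\sum_{i\geq 2}U_i\otimes\overline{U}_i$ exactly and $M'$ fixes $\ket{\Gamma}$. The paper relies on the same fact implicitly, both when it writes $\norm{U_i\Phi-\Phi U_i}_2$ and when it asserts that $\ket{\Gamma}$ satisfies the $H_M$ constraint.
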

\begin{proof}
We compute $P H_M P$, where $P$ is a projector onto $\ker(H-H_M)$.  Suppose $P_0$ is the projector onto the unique ground state of $H$; then any state $\ket{\psi} \in \supp(P-P_0)$ can be written as Eqn.~\ref{eq:genstateinker} with $\ket{\phi}$ satisfying $\braket{\phi}{\Gamma} = 0$.  Then, a computation gives (with $\alpha = 2$)
\begin{equation} 
\mel{\psi}{H_M}{\psi} = \frac{(n+1)^2}{8 m^2n^2}\sum_{i=2}^m \mel{\phi}{(U^\dagger_1 \otimes U^T_i - U^\dagger_i \otimes U^T_1)(U_1 \otimes \overline{U}_i - U_i \otimes \overline{U}_1)}{\phi}.
\end{equation}
To simplify this, we use an operator-state correspondence following Ref.~\cite{aharonov2014local}.  Then, $\ket{\Gamma} \to I/2^{n/2}$, and $\ket{\phi} \to \Phi$ with $\norm{\Phi}_2 = 1$ and $\Tr \Phi = 0$ following from $\braket{\Gamma}{\phi} = 0$.  Then, using the fact that $U_1 = I$, the previous equation becomes
\begin{align} 
\mel{\psi}{H_M}{\psi} &= \frac{(n+1)^2}{8 m^2 n^2}\sum_{i=2}^m \norm{U_i \Phi - \Phi U_i}_2^2 \\ &\geq \frac{1}{8 m^2}\sum_{i=2}^m \norm{U_i \Phi U_i^{\dagger} - \Phi}_2^2 \geq \frac{\Delta (m-1)}{4m^2}. 
\end{align}
where the last inequality holds from an identical analysis in~\cite{aharonov2014local}, Sec. 4.2.  This provides the value of $c$.  Recall that $\Delta = 1/(2\cdot10^8)$ from Theorem~\ref{thm:expander_gap}.
\end{proof}
We now arrive at the desired Theorem:
\begin{proof}[Proof of Theorem~\ref{thm:hamfamily}]
From Lemma~\ref{lem:proj} and Lemma~\ref{lem:H-HMgap} we see that $H-H_M \geq b(1-P)$ for $b = 4/(25 n^2)$.  Using Lemma~\ref{lem:HMgap}, $c = \Theta(1)$ and is independent of $n$.  Therefore, Lemma~\ref{lem:proj} tells us that the gap of $H$ is $\Omega(n^{-2})$ and Lemma~\ref{lem:highent} proves a unique ground state with entanglement $\Theta(n)$.

It remains to show an upper bound on the gap.  For this, we use a variational principle.  Recall the definition of $U_L$ and $U_R$ and define the excited state
\begin{equation}
\ket{e} = (U_L \otimes U_R)^\dagger \ket{E}_L \otimes \ket{\Omega}_R \otimes \ket{\Gamma_n}
\end{equation}
where $\ket{\Omega}_R$ is the ground state of $H_{\mathrm{prop}, R, 1} + H_{\mathrm{prop}, R, 2}$, $\ket{E}_L$ is the first excited state of $H_{\mathrm{prop}, L, 1} + H_{\mathrm{prop}, L, 2}$ in the $k=0$ block of the valid subspace (whose Hamiltonian is given in Eq.~\ref{eq:k=0specialization}), and $\ket{\Gamma_n}$ is the maximally entangled state on the data register.  Note that $\ket{E}_L$ has energy $\Theta(n^{-2})$ larger than the ground state energy of the Hamiltonian in Eq.~\ref{eq:k=0specialization}.   It is also clear that $\braket{e}{\mathrm{GS}} = 0$.  Then, $\mel{e}{H - H_M}{e} - \mel{\mathrm{GS}}{H - H_M}{\mathrm{GS}} = \Theta(n^{-2})$.  A simple calculation shows that $\mel{e}{H_M}{e} = 0$, so this state has gap $O(n^{-2})$ above the ground state energy, thus proving a gap $\Theta(n^{-2})$ when combined with the lower bound.
\end{proof}

Note that for the proof of the main theorem, we only needed the unitaries of the quantum expander to be sequential/staircase circuits.  This is a strictly larger class than constant-depth quantum expanders.  It would be interesting if there were simple constructions of sequential quantum expanders which may decrease the number of unitaries or provide a larger gap.

\paragraph{Mutual information of Gibbs states.}
One corollary is related to the area law scaling of mutual information in Gibbs states~\cite{wolf2008area, kuwahara2021improved}.  In particular, Ref.~\cite{kuwahara2021improved} showed that for a Gibbs state of Hamiltonian $H$ at inverse temperature $\beta$, denoted by $\rho_{\beta}$, the mutual information $I_\beta(A:B)$ scales like
\begin{equation}
I_{\beta}(A:B) \leq \beta^{\gamma} |\partial A|
\end{equation}
for $\gamma = 2/3$.  Thus, while $\beta$ is normally thought of as the thermal correlation length, for entanglement the correlation length is $\beta^{\gamma}$.  It is believed though that the optimal exponent $\gamma_c$ is $1/2$.  In fact, Ref.~\cite{kuwahara2021improved} provides the bound $1/5 \leq \gamma_c \leq 2/3$.  The lower bound follows from using the 1D construction from Ref.~\cite{gottesman2010entanglement}, in particular the fact that it has gap $1/n^4$, to show that $I_{\beta}(A:B)$ when $\beta = \Theta(n^5)$ is close to that of the ground state which is $\Theta(n)$.  Thus, it follows that $I_{\beta}(A:B) = \Theta(\beta^{1/5} |\partial A|)$ for this Hamiltonian and choice of $\beta$.  Using our 1D construction instead of Ref.~\cite{gottesman2010entanglement}, we find $I_{\beta}(A:B) = \Theta(\beta^{1/3} |\partial A|)$ for $\beta = \Theta(n^3)$, thus improving the bound on $\gamma_c$ to $1/3 \leq \gamma_c \leq 2/3$.

\section{Application 2: streaming testing of highly entangled states}\label{sec:streaming}

The second application of the quantum expander is to convert the communication protocol for testing $n$ EPR pairs in \cite{aharonov2014local} into a streaming protocol. In particular, we will test for closeness to the state
\begin{equation}
    \ket{\Gamma_n}
    = \frac{1}{2^{n/2}}
      \sum_{x\in\{0,1\}^n}
      \ket{x}\otimes\ket{x}.
\end{equation}
The streaming model we use is defined below.  

\begin{definition}[Streaming model]
In the streaming model, a state $\rho$ on $2n$ qubits is sent to a verifier in the order $1,2,\ldots,2n$. The verifier has access to a small memory storing $O(1)$ qubits and cbits.  At time $t$, when the verifier receives qubit $t$, they may apply a quantum operation to qubit $t$ and any data in memory and they may discard some of the qubits or cbits in memory.  We also assume the stream sends constant-sized classical messages marking times $1$, $n$, and $2n$. After receiving the last marker, the verifier outputs accept or reject.
\end{definition}

Let $\{U_1,\ldots,U_{16}\}$ be the depth-$1$ quantum expander from \Cref{sec:expander}, with gap $\Delta$.  Suppose that the range of the gates in the quantum expander is $\ell$.  Recall that the expander is defined on a 1D ring so that some of the CNOT gates will have support on qubits $1$ and $n$.  We call these \emph{wraparound gates}.

\begin{claim}\label{claim:streaming}
There is a one-pass streaming test for $\ket{\Gamma_n}$ using $O(\ell+\log m) = O(1)$ qubits and $O(1)$ classical bits of memory that implements the POVM $\{M, I-M\}$ with
\begin{equation}
\norm{M - \ketbra{\Gamma_n}{\Gamma_n}} \leq 1-\Delta
\end{equation}
where $\Delta$ is the gap of the depth-$1$ quantum expander.
\end{claim}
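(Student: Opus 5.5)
The plan is to stream the test of~\cite{aharonov2014local}: pick a uniformly random index $i\in\{1,\ldots,16\}$, apply $U_i$ to the first $n$ qubits and $\overline{U}_i$ to the last $n$ qubits, then measure whether the result equals $\ket{\Gamma_n}$. Because $(U\otimes\overline{U})\ket{\Gamma_n}=\ket{\Gamma_n}$, this is equivalent to measuring the operator $\frac{1}{16}\sum_i (U_i\otimes \overline{U}_i)$ in a suitable sense. To make it streamable, I would use the standard swap/controlled variant. Keep the random index $i$ as a classical register, or coherently as a $\log m$-qubit register in $\ket{+}$. First apply $U_i$ on the left half. Then project onto $\ket{\Gamma_n}$ by checking Bell pairs between qubit $j$ and qubit $n+j$.

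Projecting onto $\ket{\Gamma_n}$ directly would need $n$ qubits of memory, so I would instead follow~\cite{aharonov2014local} in realizing the POVM element as
\begin{equation}
M=\frac{1}{m}\sum_{i}\bigl(U_i\otimes\overline{U}_i\bigr)\quad\text{(made positive)}.
\end{equation}
This can be done by a Hadamard-test-like procedure: prepare an ancilla register $\frac{1}{\sqrt m}\sum_i\ket{i}$, apply the controlled-$U_i$ to the left half and the controlled-$\overline{U}_i$ to the right half, and measure the ancilla in the $\ket{+}$ state. The acceptance operator is then
\begin{equation}
M=A^\dagger A,\qquad A=\frac{1}{m}\sum_i U_i\otimes\overline{U}_i .
\end{equation}
Its top eigenvector is $\ket{\Gamma_n}$ with eigenvalue $1$, and on the orthogonal complement $\norm{M}\le(1-\Delta)^2\le 1-\Delta$ by the expander property, since traceless $X$ corresponds to vectors orthogonal to $\ket{\Gamma_n}$. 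This gives $\norm{M-\ketbra{\Gamma_n}{\Gamma_n}}\le 1-\Delta$, as claimed.

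The streaming implementation is the key step. Each controlled-$U_i$ is a depth-$1$ circuit of range-$\ell$ gates, so it can be applied sequentially. When qubit $t$ arrives, the verifier holds a sliding window of the last $O(\ell)$ qubits, applies every gate of $U_i$ (controlled on the ancilla) whose support is now fully within the window, and releases qubits whose gates are all finished. Those released qubits are simply discarded, or traced out. The wraparound gates are the obstacle: they couple qubit $1$ with qubits near $n$. To handle them, I would buffer the first $O(\ell)$ qubits of each half in memory until the last $O(\ell)$ qubits of that half arrive, and apply the wraparound gates then. This uses $O(\ell)$ extra memory, and the markers at times $1$, $n$ and $2n$ tell the verifier when to do this.

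One subtlety I expect to be the main hurdle is that discarding qubits spoils the interference needed for the ancilla measurement. The left and right data qubits must end up returned identically so that the ancilla disentangles, and qubits cannot be discarded before the $\ket{+}$ measurement. The fix is that the coherence of the test depends only on $\bra{\psi}A^\dagger A\ket{\psi}$, which can be computed from the reduced state. Concretely, I would check that tracing out the data after the controlled unitaries leaves the ancilla in the state $\frac{1}{m}\sum_{i,j}\ketbra{i}{j}\,\bra{\psi}(U_j\otimes\overline{U}_j)^\dagger(U_i\otimes\overline{U}_i)\ket{\psi}$. The $\ket{+}$ outcome probability of this state is exactly $\bra{\psi}A^\dagger A\ket{\psi}$, and it is unaffected by when each data qubit is traced out, because partial traces commute with operations on other systems. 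Thus the memory is the ancilla ($O(\log m)$ qubits) plus the $O(\ell)$ window plus the buffered boundary qubits, which is $O(1)$ in total.
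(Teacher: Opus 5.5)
Your proposal is correct and follows essentially the same route as the paper. Both use a coherent control register in $\ket{+}$, apply controlled-$U_i$ on the first half and controlled-$\overline{U}_i$ on the second half with a sliding window of $O(\ell)$ qubits plus a buffer of the first $O(\ell)$ qubits for the marker-triggered wraparound gates, and end with a $\ket{+}$ measurement realizing $M=A^\dagger A$ with $A=\frac1m\sum_i U_i\otimes\overline{U}_i$. Your partial-trace argument and the $(1-\Delta)^2\le 1-\Delta$ bound fill in what the paper delegates to \cite{aharonov2014local}. The one point the paper makes that you leave implicit is that the period-$3$ translation invariance of $U_7,\ldots,U_{16}$ lets the verifier read which gate to apply from an $O(1)$ look-up table, which keeps the classical memory $O(1)$.
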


\begin{proof}
The verifier prepares $\ket{+}_A=\frac{1}{\sqrt{m}}\sum_{i=1}^m\ket{i}_A$ where $m=16$.
While the first half is streamed, they apply the controlled unitary $\sum_i \ketbra{i}{i}_A \otimes U_i$; after receiving the classical message marking time $n$, they apply the controlled unitary $\sum_i \ketbra{i}{i}_A \otimes \overline{U}_i$.  After receiving the classical message marking time $2n$, they measure $A$ with POVM $\{\ketbra{+}{+},I-\ketbra{+}{+}\}$ and accept on the first outcome.  This corresponds to the POVM $\{A^\dagger A,I-A^\dagger A\}$ on the qubits with
\begin{equation}
A = \frac{1}{m} \sum_{i=1}^m U_i \otimes \overline{U}_i.
\end{equation}

To implement the controlled unitary in the streaming setting, the verifier retains the first $\ell-1$ qubits and a rolling window containing its last seen $\ell-1$ qubits. Some gates will have support on qubits $1$ and $n$.  For those that do not, once the verifier has all qubits in their support, they apply the controlled gate. Once the middle marker is sent, the verifier, having stored the first $\ell-1$ qubits, applies the controlled wraparound gates and discards all qubits in memory. The verifier behaves analogously for the second half of the stream. 

Recall that the unitaries $U_7, \cdots, U_{16}$ of the depth-$1$ expander have a translation-invariance with period $3$ property.  All other unitaries have gates supported either on the first few qubits or the last few qubits. Due to this property, the verifier only needs an $O(1)$-sized look-up table to implement the unitaries of the depth-$1$ expander. 

The remaining analysis reduces to the expander EPR test of Ref.~\cite[Section~3.2]{aharonov2014local}, which proves the bound on the accept probability.  Finally, the verifier stores the $\lceil\log m\rceil$-qubit register $A$, at most $2(\ell-1)$ streamed qubits, and a constant number of classical control bits. This proves the claim.
\end{proof}
Finally, we prove the theorem advertised in the introduction:
\begin{theorem}
There exists a streaming state tester storing $O(\Delta^{-1}\log(1/\epsilon))$ qubits that implements a POVM $\{M, I-M\}$ with
\begin{equation}
\norm{M - \ketbra{\Gamma_n}{\Gamma_n}} \leq \epsilon.
\end{equation}
\end{theorem}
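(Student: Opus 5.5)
The plan is to amplify the single-shot test of \Cref{claim:streaming} by repeating it coherently in one pass. Take $k=\lceil \log(1/\epsilon)/\Delta\rceil$ independent ancilla registers $A_1,\ldots,A_k$, each of $\lceil\log m\rceil$ qubits and prepared in $\ket{+}$. The verifier then streams the state once and, for each arriving qubit, applies the controlled gates of $U_{i_1}$, then $U_{i_2}$, up to $U_{i_k}$ in sequence, with $A_j$ as the control for the $j$-th layer. Equivalently, it implements the controlled product $\sum_{i_1,\ldots,i_k}\ketbra{i_1\cdots i_k}{i_1\cdots i_k}\otimes U_{i_k}\cdots U_{i_1}$ on the first half. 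On the second half it does the same with $\overline{U}$. At the end it measures every $A_j$ in $\{\ketbra{+}{+},I-\ketbra{+}{+}\}$ and accepts only if all outcomes are $+$.

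\textbf{Step 1: streaming implementability with $O(k)$ memory.} A product of $k$ depth-$1$ circuits of range $\ell$ is a depth-$k$ circuit with light cone width $O(k\ell)$. A rolling window of $O(k\ell)$ streamed qubits therefore suffices. Each qubit leaves memory only after all $k$ layers touching it have been applied, and the layers are applied in staircase order as in the proof of \Cref{claim:streaming}.

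Wraparound gates are the delicate part. The verifier must keep the first $O(k\ell)$ qubits until the marker at time $n$. There is an ordering problem, though: layer $j$'s wraparound gates should act before layer $j+1$ acts on those qubits. To avoid this, I would instead use the fold trick from \Cref{sec:vollaw}, or simply treat the leftmost block as deferred. Concretely, the verifier keeps the first $O(k\ell)$ qubits unprocessed except for the gates that are legal to apply before their wraparound partners arrive, and completes all layers on that block once the last qubits of the half arrive. This costs $O(k\ell)$ qubits. The $3$-periodicity of $U_7,\dots,U_{16}$ keeps the classical control to a constant-size lookup table plus an $O(\log k)$ counter.

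Total memory is $O(k(\ell+\log m))=O(\Delta^{-1}\log(1/\epsilon))$ qubits.

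\textbf{Step 2: the POVM.} Exactly as in \Cref{claim:streaming}, projecting all ancillas onto $\ket{+}^{\otimes k}$ yields the Kraus operator
\[
\Big(\tfrac1m\sum_i U_i\otimes\overline U_i\Big)^k = A^k,
\]
so that $M=(A^k)^\dagger A^k$. This holds because the $k$ controlled layers on the $A\otimes\overline{A}$ pair compose, and each $\ket{+}$ projection averages one layer.

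\textbf{Step 3: the bound.} The operator $A$ fixes $\ket{\Gamma_n}$, and so does $A^\dagger$, since each $U_i\otimes\overline U_i$ fixes $\ket{\Gamma_n}$. Both therefore preserve its orthogonal complement, on which $\norm{A}\le 1-\Delta$ by the expander property. Hence
\[
\norm{M-\ketbra{\Gamma_n}{\Gamma_n}}\le(1-\Delta)^{2k}\le e^{-2\Delta k}\le\epsilon.
\]
The main obstacle I expect is Step 1's handling of wraparound gates across $k$ layers. If the deferred-block bookkeeping proves awkward, the fallback is to relabel qubits in the folded order $1,n,2,n-1,\ldots$. This requires the stream order to be fixed accordingly, or else restricting to the 1D-line expander of \cite{liu2026almost}, which has no wraparound gates.
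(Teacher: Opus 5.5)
Your proposal is correct and follows essentially the same route as the paper. The paper likewise amplifies by taking all $r$-fold products $U_{i_r}\cdots U_{i_1}$ (equivalent to your $k$ ancilla registers each prepared in $\ket{+}$), implements the resulting depth-$r$ controlled circuit with an $O(r\ell)$ sliding window while retaining the first $O(r\ell)$ qubits for the wraparound gates, and sets $(1-\Delta)^r=\epsilon$. Your version is slightly more careful than the paper's: you make explicit that the Kraus operator is exactly $A^k$, which yields the sharper bound $(1-\Delta)^{2k}$, and you spell out that the retained block may only receive gates lying outside the forward light cone of the wraparound gates until the half-marker arrives.
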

\begin{proof}
We repeat the quantum expander many times to amplify the gap $\Delta$.  More precisely, fix an integer $r\geq 1$ and replace the original set of unitaries by all products
\begin{equation}
    U_{i_r}\cdots U_{i_2}U_{i_1},
    \qquad i_1,\ldots,i_r\in[K].
\end{equation}
Construct a quantum expander with these unitaries.  Then, this expander still has $\ket{\Gamma_n}$ as a $+1$ eigenvalue state, while its norm on any orthogonal state is at most $(1-\Delta)^r$. Therefore, the generalization of the POVM in~\cref{claim:streaming} to the iterated expander gives
\begin{equation}
    \norm{A^\dagger A - \ketbra{\Gamma_n}{\Gamma_n}} \leq (1-\Delta)^r.
\end{equation}
Next, we analyze the space complexity.  Each unitary of the expander is now a depth-$r$ local circuit. It can still be implemented in one pass by keeping a larger window of streamed qubits: since each layer has range $\ell$, a sliding window of $O(r\ell)$ qubits is sufficient to apply all $r$ layers before those qubits are discarded. To handle the wraparound gates, the verifier retains the first $O(r\ell)$ qubits until receiving the middle or final marker. It also stores the $r$ labels $i_1,\ldots,i_r$, which requires $r\lceil\log m\rceil$ qubits, in all requiring $O(r\ell + r \log m)$ qubits.  Since $\ell, m = O(1)$ and setting $(1-\Delta)^{r} = \epsilon$, the space complexity is upper bounded by $O(\Delta^{-1} \log(1/\epsilon))$.
\end{proof}

Additionally, this streaming test has zero probability of a false negative: if the input state is $\ket{\Gamma_n}$ the test always outputs accept.

Finally, we briefly mention generalizations.  If the goal is to test $U\ket{\Gamma_n}$ where $U$ is a constant-depth (almost translation-invariant) circuit or an MPO of constant bond dimension, then it suffices to construct the operator $A' = U A U^{\dagger}$, which remains a sum of a constant number of matrix product operators.  This can be implemented in the streaming setting, which mildly generalizes the state tester beyond testing $\ket{\Gamma_n}$.

\section{Improved local gap scaling of 1D ground states}
\label{append:localgapscaling}

Our construction in \cref{sec:vollaw} raises the question of how it is possible for entanglement entropy to scale with inverse gap in frustration-free Hamiltonians.  Our example has $S \gtrsim \Delta_{\text{gbl}}^{-1/2}$, where we have introduced a slightly different notation, denoting the gap or ``global gap'' by $\Delta_{\text{gbl}}$.  Define also $\Delta_{\text{loc}}$ to be the local gap:
\begin{equation}
\Delta_{\text{loc}} := \min_{S \subseteq [n]} \mathrm{gap}\left(\sum_{i,i+1 \in S} h_{i,i+1}\right)
\end{equation}
where $\mathrm{gap}(H)$ is the smallest non-zero eigenvalue of $H$.
A matching upper bound to $S \gtrsim \Delta_{\text{gbl}}^{-1/2}$ is not known. However, in this section we build on the argument from \cite{AAG22} to show that $S\lesssim \Delta_{\text{loc}}^{-3/4 + o(1)}$.    Thus, the result in this section provides an improved dependence on the local gap for 1D frustration-free Hamiltonians, and brings us closer to the conjectured bound of $O(1/\bar{\Delta_{\text{gbl}}}^{1/2})$.

Given a local gap $\Delta := \Delta_{\text{loc}}$, we can approximate any ground state
projector $\Sigma_t$ on a subchain of length $t$ by a Chebyshev
polynomial \cite{arad2013area} of degree $d$ and with approximation
error $\|\Sigma_t - P_d(H_t)\|\leq e^{-\Theta(d\sqrt{\Delta/t})}$, with
$H_t$ the Hamiltonian restricted to the subchain. This approximation,
along with the Schmidt rank bounds from \cite{arad2013area} give us an
entanglement scaling of $\tilde{O}(1/\Delta)$. But we can further
improve the approximation using robust polynomials \cite{She12} and
the detectability lemma \cite{AALV09} (in particular, using the
coarse-grained version in \cite{PhysRevB.93.205142}): first
approximate $\Sigma_t$ as a product of projectors on smaller regions of
length $r<t$ as $\|\Sigma_t - \prod_i\Sigma^{(i)}_r\|_\infty \leq
e^{-\Theta(r\sqrt{\Delta})}$ (where $(i)$ an index of the smaller
projectors that ensures that the `odd' projectors show first and then
`even' projectors show) and then use the Chebyshev approximation along
with Sherstov's polynomial to approximate $\prod_i\Sigma^{(i)}_r$ by
a polynomial of degree $\frac{t}{\sqrt{r\Delta}}$ and with
approximation error $e^{-\Theta(\frac{t}{r})}$. Balancing the two
kinds of errors, we have $r^2= t/\sqrt{\Delta}$ and hence obtain an
approximation to $\Sigma_t$ of degree $t^{3/4}/\Delta^{3/8}$ and
approximation error $e^{-t^{3/4}\Delta^{1/8}}$. We need
$r>\frac{1}{\sqrt{\Delta}}$, meaning that the analysis above applies
when $t > \frac{1}{\sqrt{\Delta}}$.

We then recursively repeat this procedure to obtain better approximations. Suppose at the $j$-th level of recursion, we have found an approximation to $\Sigma_{m_j}$ of degree $\tilde{O}\left(\frac{m_j^{1-a_j}}{\Delta^{c_j}}\right)$ with error $e^{-m_j^{1-2f_j}\Delta^{\frac{1}{2}-f_j}}$ whenever $m_j \geq \frac{1}{\sqrt{\Delta}}$.  Here we have hidden several subpolynomial factors in the $\tilde{O}$ notation. Let us use it to obtain an improved approximation to $\Sigma_{m_{j+1}}$, with $m_{j+1}>m_j$. Write $\Sigma_{m_{j+1}}$ as a product of $\frac{m_{j+1}}{m_j}$ number of $\Sigma_{m_j}$, which has error $e^{-m_j\sqrt{\Delta}}$ due to the detectability lemma and the local gap assumption. Then apply Sherstov's robust method with degree $\frac{m_{j+1}}{m_j}$. The overall error is then
\begin{equation}
\exp\left(-m_j\sqrt{\Delta}\right)+ \exp\left(-\frac{m_{j+1}}{m_j} m_j^{1-2f_j}\Delta^{\frac{1}{2}-f_j}\right).
\end{equation}
Setting $m_j^{1+2f_j}\Delta^{f_j}=m_{j+1}$, the error becomes $e^{-m_j\sqrt{\Delta}}=\exp\left(-m^{1/(1+2f_j)}_{j+1}\Delta^{\frac{1}{2}-\frac{f_j}{1+2f_j}}\right)$. The total degree of approximation is 
\begin{equation}
\frac{m_{j+1}}{m_j}\cdot \frac{m_j^{1-a_j}}{\Delta^{c_j}}=\frac{m_j^{1+2f_j-a_j}}{\Delta^{c_j-f_j}}=\frac{m_{j+1}^{\frac{1+2f_j-a_j}{1+2f_j}}}{\Delta^{(c_j-f_j) + \frac{f_j(1+2f_j-a_j)}{1+2f_j}}}=\frac{m_{j+1}^{1-\frac{a_j}{1+2f_j}}}{\Delta^{c_j-a_jf_j/(1+2f_j)}}.
\end{equation}
The constraint $m_{j+1}>m_j$ is the same as $m_j^{2f_j}\geq \frac{1}{\Delta^{f_j}}$, which is satisfied since $m_j \geq \frac{1}{\sqrt{\Delta}}$ (this is the reason to choose the coefficient in the error as $m_j^{1-2f_j}$, which may seem arbitrary at first). Hence, we obtain a polynomial approximation to $\Sigma_{m_{j+1}}$ of degree $m_{j+1}^{1-a_{j+1}}/\Delta^{c_{j+1}}$, with error $\exp\left(-m_{j+1}^{1-2f_{j+1}}\Delta^{\frac{1}{2}-f_{j+1}}\right)$, where
\begin{equation}
    f_{j+1} = \frac{f_j}{1+2f_j},\quad a_{j+1}=\frac{a_j}{1+2f_j}, \quad c_{j+1}=c_j-\frac{a_jf_j}{1+2f_j}.
\end{equation}
Recall the initial condition $a_0=c_0=f_0=\frac{1}{2}$. 
The solution to above equations is
\begin{equation}
f_j = \frac{1}{2j+2}, \quad a_j=\frac{1}{2j+2}, \quad c_j = \frac{j+2}{4j+4}.
\end{equation}
In the limit of large $j$, we obtain the solutions $a_{*}=1, b_{*}=\frac{1}{2}, c_{*}=\frac{1}{4}, f_{*}=\frac{1}{2}$. Hence, $\Sigma_t$ can be approximated by a polynomial of degree $\tilde{O}\left(\frac{t}{\Delta^{1/4}}\right)$, with error $e^{-\Tilde{\Theta}(t\sqrt{\Delta})}$, where $\tilde{\Theta}$ hides sub-polynomial losses.

In order to obtain the desired entanglement bound, we now approximate the ground space with the product $\Sigma_t \Sigma_{<}\Sigma_{>}$, where $\Sigma_{<}$ and $\Sigma_{>}$ have an overlap of length $\frac{t}{4}$ with $\Sigma_t$, to the left and right of the cut. The approximation error is $e^{-\Theta(t\sqrt{\Delta})}$. Then, we approximate $\Sigma_{t}$ with the above polynomial and then raise to the power of $k$. This gives a $(D,\Delta)$-AGSP \cite{PhysRevB.85.195145} with overall shrinking $\Delta= e^{-\tilde{\Theta}(kt\sqrt{\Delta})}$ and the Schmidt rank $D=\tilde{O}(e^{k/\Delta^{1/4} + t})$. Setting $k=t\Delta^{1/4}$, we find that the shrinking is $e^{-\tilde{\Theta}(t^2\Delta^{3/4})}$ and Schmidt rank $\tilde{O}(e^{t})$. For the AGSP condition $D\Delta < \frac{1}{2}$, we set $t=\tilde{O}(1)/\Delta^{3/4}$, and get entropy bound of $\log D = \tilde{O}(1/\Delta^{3/4})$.  

\section{Spectral gap versus ground state correlation}
\label{append:CATlb}

For the Hamiltonian family constructed in~\cref{sec:vollaw}, the gap $\Theta(n^{-2})$ primarily followed from the gap of the propagation Hamiltonian. It is natural to ask if states like those in Eqn.~\ref{gsleft} require Hamiltonians with small gap. In this section, we show that the spectral gap must decay rapidly if the ground state has large correlation.

We note that an old result by Hastings and Koma~\cite{hastingskoma} establishes that, for Hamiltonians on bounded degree graphs with gap $\Delta$,  the ground state correlation function obeys
\begin{equation}
\mathrm{corr}(A, B) \leq C \norm{A} \norm{B} \exp(- d(A, B)/\xi)
\end{equation}
for $d(A,B)$ the graph distance and $\xi = O(\Delta^{-1})$.  Suppose the ground state is close to a cat state, and we can choose operators $A$ and $B$ such that $\mathrm{corr}(A, B) = O(1)$ (for instance $A = Z_i$ and $B = Z_j$).  This means that $\Delta \lesssim 1/d(A,B)$.  However, the diameter bound is poor: on an expander graph or tree, $d(A,B) \sim \log n$ where $n$ is the number of qubits.  Below, we will provide a stronger bound of $\Delta \leq \Theta(1/n)$ independent of the geometry using a different approach.

We illustrate the bound for cat state, but emphasize that it is more generally applicable.  For cat states, it is easy to see that $\ket{0^n} \pm \ket{1^n}$ have identical marginals on $\leq n-1$-qubit subsystems, and thus cannot be distinguished by any local Hamiltonian.  However, if the ground state approximates a cat state only on a subset of its qubits, then bounds on the gap scaling become interesting and nontrivial. 

\subsection{Upper bound on the gap}
Let $\ket{\psi}_{AB}$ be a quantum state on $n+m$ qubits, where $A$ consists of $n$ qubits and $B$ consists of the remaining $m$ qubits, such that  
\begin{equation}
    \|\psi_A - \ketbra{\CAT}_n\| \leq \frac{1}{10} \,.
\end{equation}
Here, $m$ may be arbitrarily large in comparison to $n$. Suppose $H=\sum_\alpha h_{\alpha}$ (with $\norm{h_{\alpha}} = O(1)$) is a $k$-local Hamiltonian of constant degree with spectral gap $\Delta$, and $\ket{\psi}_{AB}$ is a unique ground state of $H$. 
We truncate the $B$-only part of the Hamiltonian,
\begin{equation}
    H_B \coloneqq \sum_{\alpha: \ \supp(\alpha) \subseteq B} h_\alpha \,,
\end{equation}
and let $\widetilde{H}_B$ be a non-local operator supported on $B$ by retaining the $O(n+1/\Delta)$ lowest eigenvalues and eigenspaces of $H_B$. 
Define
\begin{equation}
    H' \coloneqq H-H_B + \widetilde{H}_B \,.
\end{equation}
As shown in \cite[Theorem 2.6]{AKL16}, the ground state $\ket{\psi'}_{AB}$ of $H'$ has fidelity at least $0.99$ with $\ket{\psi}_{AB}$, and hence $\|\psi'_A - \ketbra{\CAT}_n\| \leq 1/5$ and the spectral gap of $H'$ is $\Omega(\Delta)$. Also $\|H'\|=O(n)$.

Let $P_{d}$ be the shifted Chebyshev polynomial of degree $d$. We have 
$$\|\ketbra{\psi'}_{AB} - P_d(H')\|\leq 2 e^{-\Omega(1)\cdot d\sqrt{\Delta/n}}.$$ In particular, for $d=n/10k$, we have
$$\|\ketbra{\psi'}_{AB} - P_d(H')\|\leq 2 e^{-\Omega(1/k)\cdot \sqrt{n\Delta}}.$$
Now, we follow an argument from \cite{KAAV15} (made explicit in \cite[Lemma 2.10]{AM23}). Let $\Pi^{0}_A=\ketbra{0^{\otimes n}}$ and $\Pi^{1}_A=\ketbra{1^{\otimes n}}$. We know that $$\|\Pi^0_A\ketbra{\CAT}_n\Pi^1_A\| = \frac{1}{2}.$$ However, $\|\Pi^0_AP_d(H')\Pi^1_A\| = 0$ for $d=n/10k$, since the terms of $H'$ cannot flip all 0s to all 1s with degree at most $n/10$. This forces
$$2 e^{-\Omega(1/k)\cdot \sqrt{n\Delta}} \geq 1/2-1/5, \implies \Delta \leq O\!\left(\frac{k^2}{n}\right).$$

\begin{remark}
   The argument above can be applied to any ``CAT-like'' state, whenever one can find two well-separated projectors $\Pi^0_A$ and $\Pi^1_A$ that each have good overlap with the state.  For example, consider the state in Eqn.~\ref{gsleft}, call it $\ket{\phi}$.  Choose $\Pi_A^1 = \ketbra{\ast^{n-1}}{\ast^{n-1}}\otimes \ketbra{+}{+} \otimes \ketbra{0^{(\alpha - 1)n}}{0^{(\alpha - 1)n}}$ and $\Pi_A^0 = \ketbra{+}{+}\otimes \ketbra{0^{\alpha n-1}}{0^{\alpha n-1}}$.  Then, $\norm{\Pi_A^0 \ketbra{\phi}{\phi} \Pi_A^1} \simeq \frac{1}{n}$.  Since $\|\Pi^0_AP_d(H')\Pi^1_A\| = 0$ for $d=\Theta(n)$, this gives the bound $\Delta = O\left(\frac{\log^2 n}{n}\right)$.
\end{remark}
\begin{remark}\label{remark:FFgapcat}
   The CAT-state bound can be further improved when $H$ is 1D and frustration-free. It is known \cite{gosset2016correlation} that the correlation between two sites at distance $d$ apart in a ground state scales as $e^{- d \Theta(\sqrt{\Delta})}$. However, $\ket{\CAT}_n$ has correlation $1$ for any pair of qubits. This means $e^{- n \Theta(\sqrt{\Delta})} = \Omega(1)$ and hence $\Delta = O(1/n^2)$. 
\end{remark}

\subsection{Tightness of the bound}

For frustration-free Hamiltonians, we can achieve the bound in Remark~\ref{remark:FFgapcat} via a Feynman-Kitaev construction.  However, for the more general bound presented for the cat state and similar states that lack correlation decay, it is helpful to know about its tightness.

We provide a simple example demonstrating tightness of the bound.  Consider the ground state $\ket{\psi}$ of the XXZ Hamiltonian $H=\sum_i(X_iX_{i+1}+ Y_iY_{i+1}+\cos(\pi(1-\eta)) Z_iZ_{i+1})$ with spectral gap $\Theta(\eta^2/n)$, $\bra{\psi}(\sum_i (-1)^iX_i)\ket{\psi}=0$ and correlation $\bra{\psi}X_iX_j\ket{\psi}$ scaling as $\sim \frac{(-1)^{i-j}}{|i-j|^{\eta}}$ (see \cite{HF98} for an easily accessible and explicit formula, though this result was known far before then). This implies that 
\begin{equation}
V:=\sqrt{\bra{\psi}\left(\sum_i (-1)^iX_i\right)^2\ket{\psi}-\bra{\psi}\left(\sum_i (-1)^iX_i\right)\ket{\psi}^2} = \sqrt{n + \sum_{k=1}^{n-1} (n-k) k^{-\eta}}=\Omega(n^{1-\eta/2}).
\end{equation}
Let $\Pi_{\geq V/2}$ (similarly $\Pi_{\leq - V/2}$) be the projector onto the subspace of X-hamming weight $\geq V/2$ (similarly $\leq - V/2$). By symmetry, $\bra{\psi}\Pi_{\geq V/2}\ket{\psi} = \bra{\psi}\Pi_{\leq -V/2}\ket{\psi} \coloneq p$. Then,
\begin{equation}
V^2 \leq 2 p n^2 + (1-2p) \frac{V^2}{4}
\end{equation}
which implies $p \geq \frac{3V^2}{2(4n^2-V^2)}\geq \frac{3V^2}{8n^2} \sim n^{-\eta}$. Thus, the ground state has inverse-polynomial support on two sets which are Hamming distance $V/2$ apart, and the spectral gap is $O(1/n)$.  Applying the same arguments from the previous subsection, this shows that the well-separation technique is nearly tight. 

Though the bound is nearly tight, we do not know of a local Hamiltonian with a ground state that is close to the CAT state itself and a spectral gap $\Omega(1/n)$. One idea would be to develop a low-depth circuit to ``distill'' a cat state from the above CFT state and use this to construct a local parent Hamiltonian (possibly allowing for logarithmic-range interactions).  However, it is unclear to us how to implement the details.

\section*{Acknowledgments}
All ideas and proofs were developed by the authors without the use of
an LLM.  
We acknowledge use of ChatGPT 5.6 only for additional
verification of our results and for suggesting a cleaner notation
in~\cref{appendix:unitary}
and~\cref{sec:vollaw}.  
We thank David P\'erez-Garc\'ia for useful
conversations about MPOs and Jonas Helsen for discussions on gapped generator sets of the Clifford group. 
AA acknowledges support through the NSF
Award Nos. 2238836, 2430375 and QCIS-FF: Quantum Computing \&
Information Science Faculty Fellow at Harvard University (NSF
2013303).  AWH acknowledges support from the Simons Foundation
(MP-SIP-00001553, AWH) and the Department of Energy through the grant
DE-SC0020360 ``Complex quantum systems and the quantum universe''.  SB
was supported by the Walter Burke Institute for Theoretical Physics at
Caltech and by the Institute for Quantum Information and Matter, an
NSF Physics Frontiers Center (NSF Grant PHY-2317110).  XT is supported
by the U.S. Department of Energy, Office of Science, National Quantum
Information Science Research Centers, Co-design Center for Quantum
Advantage (C2QA) under contract number DE-SC0012704.

\newcommand{\etalchar}[1]{$^{#1}$}

\appendix


\begin{thebibliography}{ICPGSV17}

\bibitem[AAG22]{AAG22}
Anurag Anshu, Itai Arad, and David Gosset.
\newblock An area law for 2d frustration-free spin systems.
\newblock In {\em Proceedings of the 54th Annual ACM SIGACT Symposium on Theory
  of Computing}, STOC 2022, page 129–137, New York, NY, USA, 2022.
  Association for Computing Machinery,
  \href{http://arxiv.org/abs/2103.02492}{{\ttfamily arXiv:2103.02492}}.

\bibitem[AALV09]{AALV09}
Dorit Aharonov, Itai Arad, Zeph Landau, and Umesh Vazirani.
\newblock The detectability lemma and quantum gap amplification.
\newblock In {\em Proceedings of the Forty-First Annual ACM Symposium on Theory
  of Computing}, STOC '09, page 417–426, New York, NY, USA, 2009. Association
  for Computing Machinery,  \href{http://arxiv.org/abs/0811.3412}{{\ttfamily
  arXiv:0811.3412}}.

\bibitem[AAV16]{PhysRevB.93.205142}
Anurag Anshu, Itai Arad, and Thomas Vidick.
\newblock Simple proof of the detectability lemma and spectral gap
  amplification.
\newblock {\em Phys. Rev. B}, 93:205142, May 2016,
  \href{http://arxiv.org/abs/1602.01210}{{\ttfamily arXiv:1602.01210}}.

\bibitem[AG04]{aaronson2004improved}
Scott Aaronson and Daniel Gottesman.
\newblock Improved simulation of stabilizer circuits.
\newblock {\em Physical Review A—Atomic, Molecular, and Optical Physics},
  70(5):052328, 2004,  \href{http://arxiv.org/abs/quant-ph/0406196}{{\ttfamily
  arXiv:quant-ph/0406196}}.

\bibitem[AHL{\etalchar{+}}14]{aharonov2014local}
Dorit Aharonov, Aram~W Harrow, Zeph Landau, Daniel Nagaj, Mario Szegedy, and
  Umesh Vazirani.
\newblock Local tests of global entanglement and a counterexample to the
  generalized area law.
\newblock In {\em 2014 IEEE 55th Annual Symposium on Foundations of Computer
  Science}, pages 246--255. IEEE, 2014,
  \href{http://arxiv.org/abs/1410.0951}{{\ttfamily arXiv:1410.0951}}.

\bibitem[AK62]{Arnold62}
V.~I. Arnold and A.~L. Krylov.
\newblock Uniform distribution of points on a sphere and some ergodic
  properties of linear ordinary differential equations in the complex plane.
\newblock {\em Soviet Math. Dokl.}, 4:1--5, 1962.

\bibitem[AKL16]{AKL16}
Itai Arad, Tomotaka Kuwahara, and Zeph Landau.
\newblock Connecting global and local energy distributions in quantum spin
  models on a lattice.
\newblock {\em Journal of Statistical Mechanics: Theory and Experiment},
  2016(3):033301, 2016,  \href{http://arxiv.org/abs/1406.3898}{{\ttfamily
  arXiv:1406.3898}}.

\bibitem[AKLV13]{arad2013area}
Itai Arad, Alexei Kitaev, Zeph Landau, and Umesh Vazirani.
\newblock An area law and sub-exponential algorithm for 1d systems, 2013,
  \href{http://arxiv.org/abs/1301.1162}{{\ttfamily arXiv:1301.1162}}.

\bibitem[ALV12]{PhysRevB.85.195145}
Itai Arad, Zeph Landau, and Umesh Vazirani.
\newblock Improved one-dimensional area law for frustration-free systems.
\newblock {\em Phys. Rev. B}, 85:195145, May 2012,
  \href{http://arxiv.org/abs/1111.2970}{{\ttfamily arXiv:1111.2970}}.

\bibitem[AM23]{AM23}
Anurag Anshu and Tony Metger.
\newblock Concentration bounds for quantum states and limitations on the {QAOA}
  from polynomial approximations.
\newblock {\em {Quantum}}, 7:999, May 2023,
  \href{http://arxiv.org/abs/2209.02715}{{\ttfamily arXiv:2209.02715}}.

\bibitem[Ans20]{anshu2020improved}
Anurag Anshu.
\newblock Improved local spectral gap thresholds for lattices of finite size.
\newblock {\em Physical Review B}, 101(16):165104, 2020,
  \href{http://arxiv.org/abs/1909.01516}{{\ttfamily arXiv:1909.01516}}.

\bibitem[AS04]{ambainis2004small}
Andris Ambainis and Adam Smith.
\newblock Small pseudo-random families of matrices: Derandomizing approximate
  quantum encryption.
\newblock In {\em International Workshop on Randomization and Approximation
  Techniques in Computer Science}, pages 249--260. Springer, 2004,
  \href{http://arxiv.org/abs/quant-ph/0404075}{{\ttfamily
  arXiv:quant-ph/0404075}}.

\bibitem[BASTS10]{ben2010quantum}
Avraham Ben-Aroya, Oded Schwartz, and Amnon Ta-Shma.
\newblock Quantum expanders: Motivation and construction.
\newblock {\em Theory of Computing}, 6(1):47--79, 2010,
  \href{http://arxiv.org/abs/0709.0911}{{\ttfamily arXiv:0709.0911}}.

\bibitem[BATS07]{ben2007quantum}
Avraham Ben-Aroya and Amnon Ta-Shma.
\newblock Quantum expanders and the quantum entropy difference problem, 2007,
  \href{http://arxiv.org/abs/quant-ph/0702129}{{\ttfamily
  arXiv:quant-ph/0702129}}.

\bibitem[BCG{\etalchar{+}}02]{barnum2002authentication}
Howard Barnum, Claude Cr{\'e}peau, Daniel Gottesman, Adam Smith, and Alain
  Tapp.
\newblock Authentication of quantum messages.
\newblock In {\em The 43rd Annual IEEE Symposium on Foundations of Computer
  Science, 2002. Proceedings.}, pages 449--458. IEEE, 2002,
  \href{http://arxiv.org/abs/quant-ph/0205128}{{\ttfamily
  arXiv:quant-ph/0205128}}.

\bibitem[BDSW96]{bennett1996mixed}
Charles~H Bennett, David~P DiVincenzo, John~A Smolin, and William~K Wootters.
\newblock Mixed-state entanglement and quantum error correction.
\newblock {\em Physical Review A}, 54(5):3824, 1996,
  \href{http://arxiv.org/abs/quant-ph/9604024}{{\ttfamily
  arXiv:quant-ph/9604024}}.

\bibitem[BG08]{bourgain2008spectral}
Jean Bourgain and Alex Gamburd.
\newblock On the spectral gap for finitely-generated subgroups of {$SU(2)$}.
\newblock {\em Inventiones mathematicae}, 171(1):83--121, 2008.

\bibitem[BG12]{bourgain2012spectral}
Jean Bourgain and Alex Gamburd.
\newblock A spectral gap theorem in {SU}(d).
\newblock {\em Journal of the European Mathematical Society (EMS Publishing)},
  14(5):1455, 2012,  \href{http://arxiv.org/abs/1108.6264}{{\ttfamily
  arXiv:1108.6264}}.

\bibitem[BH26]{baer2026random}
Tim Baer and Jeongwan Haah.
\newblock Random unitary circuits with constant spectral gap, 2026,
  \href{http://arxiv.org/abs/2607.20919}{{\ttfamily arXiv:2607.20919}}.

\bibitem[BHH16]{brandao2016local}
Fernando~GSL Brandao, Aram~W Harrow, and Micha{\l} Horodecki.
\newblock Local random quantum circuits are approximate polynomial-designs.
\newblock {\em Communications in Mathematical Physics}, 346(2):397--434, 2016,
  \href{http://arxiv.org/abs/1208.0692}{{\ttfamily arXiv:1208.0692}}.

\bibitem[Bou17]{bourgain2017random}
Jean Bourgain.
\newblock On random walks in large compact {L}ie groups.
\newblock In {\em Geometric Aspects of Functional Analysis: Israel Seminar
  (GAFA) 2014--2016}, pages 55--63. Springer, 2017,
  \href{http://arxiv.org/abs/1501.01597}{{\ttfamily arXiv:1501.01597}}.

\bibitem[CHH{\etalchar{+}}24]{CHHLMT24}
Chi-Fang Chen, Jeongwan Haah, Jonas Haferkamp, Yunchao Liu, Tony Metger, and
  Xinyu Tan.
\newblock Incompressibility and spectral gaps of random circuits, 2024,
  \href{http://arxiv.org/abs/2406.07478}{{\ttfamily arXiv:2406.07478}}.

\bibitem[CN18]{caha2018pair}
Libor Caha and Daniel Nagaj.
\newblock The pair-flip model: a very entangled translationally invariant spin
  chain, 2018,  \href{http://arxiv.org/abs/1805.07168}{{\ttfamily
  arXiv:1805.07168}}.

\bibitem[FVZS{\etalchar{+}}18]{Fishman_2018}
M.~T. Fishman, L.~Vanderstraeten, V.~Zauner-Stauber, J.~Haegeman, and
  F.~Verstraete.
\newblock Faster methods for contracting infinite two-dimensional tensor
  networks.
\newblock {\em Physical Review B}, 98(23), December 2018,
  \href{http://arxiv.org/abs/1711.05881}{{\ttfamily arXiv:1711.05881}}.

\bibitem[GE08]{gross2007quantum}
David Gross and Jens Eisert.
\newblock Quantum {M}argulis expanders.
\newblock {\em Quant. Inf. Comp}, 8:722, 2008,
  \href{http://arxiv.org/abs/0710.0651}{{\ttfamily arXiv:0710.0651}}.

\bibitem[GH10]{gottesman2010entanglement}
Daniel Gottesman and Matthew~B Hastings.
\newblock Entanglement versus gap for one-dimensional spin systems.
\newblock {\em New journal of physics}, 12(2):025002, 2010,
  \href{http://arxiv.org/abs/0901.1108}{{\ttfamily arXiv:0901.1108}}.

\bibitem[GH16]{gosset2016correlation}
David Gosset and Yichen Huang.
\newblock Correlation length versus gap in frustration-free systems.
\newblock {\em Physical review letters}, 116(9):097202, 2016,
  \href{http://arxiv.org/abs/1509.06360}{{\ttfamily arXiv:1509.06360}}.

\bibitem[GM16]{gosset2016local}
David Gosset and Evgeny Mozgunov.
\newblock Local gap threshold for frustration-free spin systems.
\newblock {\em Journal of Mathematical Physics}, 57(9), 2016,
  \href{http://arxiv.org/abs/1512.00088}{{\ttfamily arXiv:1512.00088}}.

\bibitem[Har07]{harrow2007quantum}
Aram~W Harrow.
\newblock Quantum expanders from any classical {C}ayley graph expander.
\newblock {\em Q. Inf. Comp., vol. 8, no. 8/9, pp. 715-721, 2008.}, 2007,
  \href{http://arxiv.org/abs/0709.1142}{{\ttfamily arXiv:0709.1142}}.

\bibitem[Has04]{Hastings04}
M.~B. Hastings.
\newblock {Lieb-Schultz-Mattis} in higher dimensions.
\newblock {\em Phys. Rev. B}, 69:104431, Mar 2004,
  \href{http://arxiv.org/abs/cond-mat/0305505}{{\ttfamily
  arXiv:cond-mat/0305505}}.

\bibitem[Has07a]{PhysRevA.76.032315}
M.~B. Hastings.
\newblock Random unitaries give quantum expanders.
\newblock {\em Phys. Rev. A}, 76:032315, Sep 2007,
  \href{http://arxiv.org/abs/0706.0556}{{\ttfamily arXiv:0706.0556}}.

\bibitem[Has07b]{hastings2007entropy}
Matthew~B Hastings.
\newblock Entropy and entanglement in quantum ground states.
\newblock {\em Physical Review B}, 76(3):035114, 2007,
  \href{http://arxiv.org/abs/cond-mat/0701055}{{\ttfamily
  arXiv:cond-mat/0701055}}.

\bibitem[{Has}14]{Hastings14}
M.~B. {Hastings}.
\newblock {Notes on Some Questions in Mathematical Physics and Quantum
  Information}, 2014,  \href{http://arxiv.org/abs/1404.4327}{{\ttfamily
  arXiv:1404.4327}}.

\bibitem[HF98]{HF98}
T.~Hikihara and A.~Furusaki.
\newblock Correlation amplitude for the $s=\frac{1}{2}$ $\mathrm{XXZ}$ spin
  chain in the critical region: Numerical renormalization-group study of an
  open chain.
\newblock {\em Phys. Rev. B}, 58:R583(R)--R586(R), Jul 1998.

\bibitem[HH09]{HH08}
Matthew~B Hastings and Aram~W Harrow.
\newblock Classical and quantum tensor product expanders.
\newblock {\em Q. Inf. Comp.}, 9(3\&4):336--360, 2009,
  \href{http://arxiv.org/abs/0804.0011}{{\ttfamily arXiv:0804.0011}}.

\bibitem[HK06]{hastingskoma}
Matthew~B. Hastings and Tohru Koma.
\newblock Spectral gap and exponential decay of correlations.
\newblock {\em Communications in Mathematical Physics}, 265(3):781--804, 2006,
  \href{http://arxiv.org/abs/math-ph/0507008}{{\ttfamily
  arXiv:math-ph/0507008}}.

\bibitem[HL08]{harrow2008communication}
Aram~W Harrow and Debbie~W Leung.
\newblock A communication-efficient nonlocal measurement with application to
  communication complexity and bipartite gate capacities.
\newblock {\em IEEE Trans. Info. Theory, Volume 57, Issue 8, pages 5504 - 5508
  (Aug 2011)}, 2008,  \href{http://arxiv.org/abs/0803.3066}{{\ttfamily
  arXiv:0803.3066}}.

\bibitem[HLT25]{HLT25}
Jeongwan Haah, Yunchao Liu, and Xinyu Tan.
\newblock Efficient approximate unitary designs from random {Pauli} rotations.
\newblock {\em Communications in Mathematical Physics}, 406(12), Oct 2025,
  \href{http://arxiv.org/abs/2402.05239}{{\ttfamily arXiv:2402.05239}}.

\bibitem[HRT25]{HRT25}
Zhiyang He, Luke Robitaille, and Xinyu Tan.
\newblock Characterization of permutation gates in the third level of the
  clifford hierarchy, 2025,  \href{http://arxiv.org/abs/2510.04993}{{\ttfamily
  arXiv:2510.04993}}.

\bibitem[ICPGSV17]{CPSV17}
J~Ignacio~Cirac, David Perez-Garcia, Norbert Schuch, and Frank Verstraete.
\newblock Matrix product unitaries: structure, symmetries, and topological
  invariants.
\newblock {\em Journal of Statistical Mechanics: Theory and Experiment},
  2017(8):083105, August 2017,
  \href{http://arxiv.org/abs/1703.09188}{{\ttfamily arXiv:1703.09188}}.

\bibitem[IL26]{ippoliti2025infinite}
Matteo Ippoliti and David~M Long.
\newblock Infinite temperature at zero energy.
\newblock {\em Phys. Rev. X}, 16:031030, 2026,
  \href{http://arxiv.org/abs/2509.04410}{{\ttfamily arXiv:2509.04410}}.

\bibitem[Ira10]{irani2010ground}
Sandy Irani.
\newblock Ground state entanglement in one-dimensional translationally
  invariant quantum systems.
\newblock {\em Journal of Mathematical Physics}, 51(2):022101, 2010,
  \href{http://arxiv.org/abs/0901.1107}{{\ttfamily arXiv:0901.1107}}.

\bibitem[JNV{\etalchar{+}}21]{ji2021mip}
Zhengfeng Ji, Anand Natarajan, Thomas Vidick, John Wright, and Henry Yuen.
\newblock {MIP*=RE}.
\newblock {\em Communications of the ACM}, 64(11):131--138, 2021,
  \href{http://arxiv.org/abs/2001.04383}{{\ttfamily arXiv:2001.04383}}.

\bibitem[KAA21]{kuwahara2021improved}
Tomotaka Kuwahara, {\'A}lvaro~M Alhambra, and Anurag Anshu.
\newblock Improved thermal area law and quasilinear time algorithm for quantum
  gibbs states.
\newblock {\em Physical Review X}, 11(1):011047, 2021,
  \href{http://arxiv.org/abs/2007.11174}{{\ttfamily arXiv:2007.11174}}.

\bibitem[KAAV17]{KAAV15}
Tomotaka Kuwahara, Itai Arad, Luigi Amico, and Vlatko Vedral.
\newblock Local reversibility and entanglement structure of many-body ground
  states.
\newblock {\em Quantum Science and Technology}, 2(1):015005, 2017,
  \href{http://arxiv.org/abs/1502.05330}{{\ttfamily arXiv:1502.05330}}.

\bibitem[Kas07a]{kassabov2007symmetric}
Martin Kassabov.
\newblock Symmetric groups and expander graphs.
\newblock {\em Inventiones mathematicae}, 170(2):327--354, 2007,
  \href{http://arxiv.org/abs/math/0505624}{{\ttfamily arXiv:math/0505624}}.

\bibitem[Kas07b]{Kas07a}
Martin Kassabov.
\newblock Universal lattices and unbounded rank expanders.
\newblock {\em Inventiones mathematicae}, 170(2):297--326, 2007.

\bibitem[KLN06]{kassabov2006finite}
Martin Kassabov, Alexander Lubotzky, and Nikolay Nikolov.
\newblock Finite simple groups as expanders.
\newblock {\em Proceedings of the National Academy of Sciences},
  103(16):6116--6119, 2006,
  \href{http://arxiv.org/abs/math/0510562}{{\ttfamily arXiv:math/0510562}}.

\bibitem[KT13]{KastoryanoT13}
Michael~J. Kastoryano and Kristan Temme.
\newblock Quantum logarithmic {S}obolev inequalities and rapid mixing.
\newblock {\em Journal of Mathematical Physics}, 54(5), 2013,
  \href{http://arxiv.org/abs/1207.3261}{{\ttfamily arXiv:1207.3261}}.

\bibitem[LCB14]{Lubasch_2014}
Michael Lubasch, J~Ignacio Cirac, and Mari-Carmen Bañuls.
\newblock Unifying projected entangled pair state contractions.
\newblock {\em New Journal of Physics}, 16(3):033014, March 2014,
  \href{http://arxiv.org/abs/1311.6696}{{\ttfamily arXiv:1311.6696}}.

\bibitem[LH26]{liu2026almost}
Guoding Liu and Jonas Helsen.
\newblock (almost) quadruply optimal unitary designs in {1D}, 2026,
  \href{http://arxiv.org/abs/2608.18650}{{\ttfamily arXiv:2608.18650}}.

\bibitem[LL25]{lemm2024critical}
Marius Lemm and Angelo Lucia.
\newblock On the critical finite-size gap scaling for frustration-free
  {H}amiltonians.
\newblock {\em Reviews in Mathematical Physics, Vol. 37, No}, 7:2550015, 2025,
  \href{http://arxiv.org/abs/2409.09685}{{\ttfamily arXiv:2409.09685}}.

\bibitem[Lub11]{lubotzky2011finite}
Alexander Lubotzky.
\newblock Finite simple groups of {L}ie type as expanders.
\newblock {\em Journal of the European Mathematical Society}, 13(5):1331--1341,
  2011,  \href{http://arxiv.org/abs/0904.3411}{{\ttfamily arXiv:0904.3411}}.

\bibitem[MPSY24]{metger2024simple}
Tony Metger, Alexander Poremba, Makrand Sinha, and Henry Yuen.
\newblock Simple constructions of linear-depth t-designs and pseudorandom
  unitaries.
\newblock In {\em 2024 IEEE 65th Annual Symposium on Foundations of Computer
  Science (FOCS)}, pages 485--492. IEEE, 2024,
  \href{http://arxiv.org/abs/2404.12647}{{\ttfamily arXiv:2404.12647}}.

\bibitem[MS16]{movassagh2016supercritical}
Ramis Movassagh and Peter~W Shor.
\newblock Supercritical entanglement in local systems: Counterexample to the
  area law for quantum matter.
\newblock {\em Proceedings of the National Academy of Sciences},
  113(47):13278--13282, 2016,  \href{http://arxiv.org/abs/1408.1657}{{\ttfamily
  arXiv:1408.1657}}.

\bibitem[Nik05]{nikolov2005product}
Nikolay Nikolov.
\newblock A product decomposition for the classical quasisimple groups, 2005,
  \href{http://arxiv.org/abs/math/0510173}{{\ttfamily arXiv:math/0510173}}.

\bibitem[OSP23]{o2023explicit}
Ryan O’Donnell, Rocco~A Servedio, and Pedro Paredes.
\newblock Explicit orthogonal and unitary designs.
\newblock In {\em 2023 IEEE 64th Annual Symposium on Foundations of Computer
  Science (FOCS)}, pages 1240--1260. IEEE, 2023,
  \href{http://arxiv.org/abs/2310.13597}{{\ttfamily arXiv:2310.13597}}.

\bibitem[Sar15]{sarnak2015letter}
Peter Sarnak.
\newblock {Letter to Scott Aaronson and Andy Pollington on the Solovay-Kitaev
  Theorem and Golden Gates}, 2015.

\bibitem[She12]{She12}
Alexander~A. Sherstov.
\newblock Making polynomials robust to noise.
\newblock In {\em Proceedings of the Forty-Fourth Annual ACM Symposium on
  Theory of Computing}, STOC '12, page 747–758, New York, NY, USA, 2012.
  Association for Computing Machinery.

\bibitem[SML{\etalchar{+}}25]{schuster2025strong}
Thomas Schuster, Fermi Ma, Alex Lombardi, Fernando Brandao, and Hsin-Yuan
  Huang.
\newblock Strong random unitaries and fast scrambling, 2025,
  \href{http://arxiv.org/abs/2509.26310}{{\ttfamily arXiv:2509.26310}}.

\bibitem[TKR{\etalchar{+}}10]{chi2-divergence}
K.~Temme, M.~J. Kastoryano, M.~B. Ruskai, M.~M. Wolf, and F.~Verstraete.
\newblock The {$\chi^2$}-divergence and mixing times of quantum {M}arkov
  processes.
\newblock {\em Journal of Mathematical Physics}, 51(12):122201, 2010,
  \href{http://arxiv.org/abs/1005.2358}{{\ttfamily arXiv:1005.2358}}.

\bibitem[Wie24]{wierichs2024kak}
David Wierichs.
\newblock The {KAK} decomposition.
\newblock PennyLane Demos, November 2024.

\bibitem[WVHC08]{wolf2008area}
Michael~M Wolf, Frank Verstraete, Matthew~B Hastings, and J~Ignacio Cirac.
\newblock Area laws in quantum systems: mutual information and correlations.
\newblock {\em Physical review letters}, 100(7):070502, 2008,
  \href{http://arxiv.org/abs/0704.3906}{{\ttfamily arXiv:0704.3906}}.

\bibitem[ZAK17]{klich}
Zhao Zhang, Amr Ahmadain, and Israel Klich.
\newblock Novel quantum phase transition from bounded to extensive
  entanglement.
\newblock {\em Proceedings of the National Academy of Sciences}, 114(20), 2017,
   \href{http://arxiv.org/abs/1606.07795}{{\ttfamily arXiv:1606.07795}}.

\end{thebibliography}
\end{document}